\newcommand{\Uhat}{\widehat{U}}
\newcommand{\Xhat}{\widehat{X}}
\newcommand{\Vhat}{\widehat{V}}
\newcommand{\Mhat}{\widehat{M}}
\newcommand{\Qhat}{\widehat{Q}}
\newcommand{\rhohat}{\widehat{\rho}}
\newcommand{\alphahat}{\widehat{\alpha}}
\newcommand{\betahat}{\widehat{\beta}}
\newcommand{\kappahat}{\widehat{\kappa}}
\newcommand{\E}{\mathbb{E}}
\newcommand{\reals}{\mathbb{R}}
\newcommand{\nuhat}{\widehat{\nu}}
\newcommand{\R}{\mathbb{R}}
\newcommand{\Rhat}{\widehat{R}}
\newcommand{\Shat}{\widehat{S}}
\newcommand{\pihat}{\widehat{\pi}}
\newcommand{\Phat}{\widehat{P}}
\newcommand{\Fhat}{\widehat{F}}
\newcommand{\qhat}{\hat{q}}
\newcommand{\var}{\text{var}}
\newcommand{\fhat}{\hat{f}}
\numberwithin{equation}{section}
\crefname{hypothesis}{Hypothesis}{Hypotheses}
\title{Convergence Speed and Approximation Accuracy of Numerical MCMC\thanks{Submitted to the editors DATE.
\funding{TC is supported by the Australian Research Council  grant DP210103092. AJ is supported by KAUST baseline funding. XT is supported by the Singapore Ministry of Education (MOE) grant R-146-000-292-114. }}}
\author{Tiangang Cui\thanks{School of Mathematics, Monash University, Australia
  (\email{tiangang.cui@monash.edu}).}
\and Jing Dong\thanks{Graduate School of Business, Columbia University,  U.S.A.
  (\email{jing.dong@gsb.columbia.edu}).}
  \and Ajay Jasra\thanks{Applied Mathematics and Computational Science Program, Computer, Electrical, Mathematical Sciences and Engineering Division, King Abdullah University of Science and Technology,  KSA
  (\email{ajay.jasra@kaust.edu.sa}).}
  \and Xin T. Tong\thanks{Department of Mathematics, National University of Singapore, Singapore 
  (\email{mattxin@nus.edu.sg}).}
}
\begin{document}

\maketitle

\begin{abstract}
When implementing Markov Chain Monte Carlo (MCMC) algorithms, perturbation caused by numerical errors is sometimes inevitable.
This paper studies how perturbation of MCMC affects the convergence speed and Monte Carlo estimation accuracy.
Our results show that when the original Markov chain converges to stationarity fast enough and the perturbed transition kernel is a good approximation to the original transition kernel, the corresponding perturbed sampler has similar convergence speed and high approximation accuracy as well.
We discuss two different analysis frameworks: ergodicity and spectral gap, both are widely used in the literature. 
Our results can be easily extended to obtain non-asymptotic error bounds for MCMC estimators. 
 We also demonstrate how to apply our convergence and approximation results to the analysis of specific sampling algorithms, including Random walk Metropolis and Metropolis adjusted Langevin algorithm with perturbed target densities, and parallel tempering Monte Carlo with perturbed densities. Finally we present some simple numerical examples to verify our theoretical claims.
\end{abstract}

\begin{keywords}
Inverse problems, Markov Chain Monte Carlo, Convergence Speed, Perturbation Analysis
\end{keywords}

\begin{AMS}
	35R30,  65C40, 37A25  
\end{AMS}

%
%
%


%

\begin{abstract}
When implementing Markov Chain Monte Carlo (MCMC) algorithms, perturbation caused by numerical errors is sometimes inevitable.
This paper studies how perturbation of MCMC affect the convergence speed and approximation accuracy.
Our results show that when the original Markov chain converges to stationarity fast enough and the perturbed transition kernel is a good approximation to the original transition kernel, the corresponding perturbed sampler has fast convergence speed and high approximation accuracy as well.
We discuss two different analysis frameworks: ergodicity and spectral gap, both are widely used in the literature. 
Our results can be easily extended to obtain non-asymptotic error bounds for MCMC estimators. 
 We also demonstrate how to apply our convergence and approximation results to the analysis of specific sampling algorithms, including Random walk Metropolis and Metropolis adjusted Langevin algorithm with perturbed target densities, and parallel tempering Monte Carlo with perturbed densities. Finally we present some simple numerical examples to verify our theoretical claims.
\end{abstract}

\section{Introduction}
Markov Chain Monte Carlo (MCMC) is one of the main sampling methods in Bayesian statistics.
Given a target density $\pi$ on $\reals^d$, it simulates a Markov chain $X_n$ with  transition kernel $P$, such that  $\pi$ is the corresponding invariant measure.
Under some generic conditions, the distribution of $X_n$ converges to $\pi$ geometrically fast.
This indicates the existence of some mixing time $n_0$, so that the distribution of $X_n$ is close to $\pi$ when $n>n_0$.
In other words, if we can use the following approximation
\begin{equation}
\label{eq:onesample}
\E f(X_{n})\approx \E^\pi f(X):=\int f(x)\pi(x)dx.
\end{equation}
In practice, this allows us to approximate the average of a test function  $\E^\pi f(X)$ using the temporal average of the Markov chain:
\begin{equation}
\label{eq:tempavg}
F_n:=\frac{1}{n}\sum_{i=1}^n f(X_{n_0+i}).
\end{equation}
The efficiency of the approximation scheme  \eqref{eq:tempavg} is largely determined by the  convergence speed of the Markov Chain $X_n$ to $\pi$.  
In particular, the ``burning" sample size $n_0$ should be set such that the distribution at step $n_0$ is close to $\pi$  and the effective sample size of $F_n$ is approximately $O(n/n_0)$.
In this context, convergence analysis has been a key component in the MCMC literature 
(see, for example, Section 4.1 of \cite{andrieu2003introduction}).
%

When implementing MCMC on complicated target densities, it is often the case that we can only simulate a perturbed Markov chain $\Xhat_n$ with transition kernel $\Phat$. This is mainly due to two reasons:
\begin{enumerate}
\item The transition kernel $P$ cannot be simulated directly. For example, if $X_n$ is described by a stochastic differential equation (SDE), using numerical simulation method like the Euler-Maruyama method will induce discretization errors. 
\item We do not have direct access to $\pi(x)$ or even an un-normalized version of it. This is quite common in Bayesian inverse problems \cite{stuart2010inverse}, where the target density can be written as 
\begin{equation}
\label{eq:pi1}
\pi(x)\propto p_0(x)\exp\left(-\tfrac12 \|G(x)-y\|^2\right).
\end{equation}
In \eqref{eq:pi1}, $p_0$ is the prior density of the unknown parameter $x$, $G$ describes the data generating process, and $y$ is the collected data. In many cases, $G$ is formulated through an involved partial differential equation, and we can only compute an approximation of it, $\widehat{G}$, numerically \cite{bui2013computational, hairer2014spectral, beskos2018multilevel}. The corresponding ``numerical" density becomes 
\begin{equation}
\label{eq:pi}
\pihat(x)\propto p_0(x)\exp\left(-\tfrac12 \|\widehat{G}(x)-y\|^2\right).
\end{equation}
\end{enumerate}
In the above-mentioned cases, we run an MCMC $\Xhat_n$ with transition kernel $\Phat$ and target density $\pihat$, which is the invariant measure of $\Xhat_n$,  instead of $\pi$. 
One important difference between the two scenarios is that we often know $\pihat$ explicitly in the second scenario, but not in the first one. 
Following \eqref{eq:onesample} and \eqref{eq:tempavg},  in both scenarios listed above, we would like to approximate $\E^{\pi} f(X)$ using 
\begin{equation}
\label{eq:numMC}
\E f(\Xhat_{n_0})\quad \text{or}\quad \Fhat_n=\frac{1}{n}\sum_{i=1}^n f(\Xhat_{n_0+i}).
\end{equation}

%
%

There are two key questions to address when using estimators of form \eqref{eq:numMC}.  The first question is about the convergence speed of $\Xhat_n$ towards its invariant measure $\pihat$, which determines the efficiency of the estimators in \eqref{eq:tempavg}. In particular, if we use
 $D$ to denote some metric between two distributions and $\nu$ to denote the distribution of $\Xhat_0$, we are interested in how fast  $D(\nu\Phat^n, \pihat)$ converges to zero. The second question is about approximation accuracy, which can be measured by either the distance between the two invariant measures, $D(\pihat, \pi)$, or the distance between the distribution of $\Xhat_n$ and $\pi$, $D(\nu\Phat^n, \pi)$. 

%
%
For $\Xhat_n$ to achieve fast convergence and high approximation accuracy, we need to impose the following two high-level conditions (these conditions will be made more precise in our subsequent development):  
\begin{enumerate}[1.]
\item $\Phat$ is a good approximation of $P$. 
\item $X_n$ converges to its invariant measure $\pi$ fast enough.
\end{enumerate}
Condition 1 is necessary, because if $\Phat$ is not a good approximation of $P$, $\pihat$ is unlikely to be close to $\pi$, and the  convergence property of $X_n$ will not be useful in inferring the convergence property of $\Phat$. 
Condition 2 is also necessary. Otherwise, the approximation error may grow exponentially with the number of iterations. Since Condition 1 involves only the one-step transition kernels,  it is easier to fulfill. Hence, it is reasonable to study Condition 2 first and then formulate a version Condition 1 that is compatible to the corresponding Condition 2.

In the literature of Markov processes, Condition 2 is often studied using one of the two frameworks: the ergodicity framework and the spectral gap framework. 
The main differences between these two frameworks are the metrics involved and analysis tools involved.  
The ergodicity framework measures the convergence rate of $\nu P^n$ to $\pi$ in the total variation distance or more general Wasserstein metrics \cite{meyn2012markov}.
Establishing ergodicity often involves finding an appropriate Lyapunov function and constructing an appropriate coupling \cite{lindvall02}. 
The spectral gap framework measures the convergence rate of $\nu P^n$ to $\pi$ in $\chi^2$-distance \cite{bakry2014analysis} or KL-divergence\cite{vempala2019rapid}. 
Bounding the spectral gap often requires functional analysis or other partial differential equation (PDE) tools such as Poincar\'e inequality and log Sobolev inequality. We also note that these two frameworks are related. In particular, on one hand, under suitable regularity conditions, ergodicity leads to the existence of a spectral gap (see, e.g., Proposition 2.8 in \cite{hairer2014spectral}). On the other hand, under proper regularity conditions, convergence under the $\chi^2$ distance leads to convergence under the total variation distance.

We discuss both frameworks in this paper, because for some Markov processes, we may only have knowledge of one form of convergence. For example, to the best of our knowledge, the parallel tempering methods are only studied under the spectral gap framework \cite{woodard2009conditions,dong2020gap}. Preconditioned Crank--Nicolson 
algorithm is only studied under the ergodicity framework \cite{hairer2014spectral} (Although the paper's title starts with ``spectral gap", it is more in line with the ergodicity framework described above).
The unadjusted Langevin algorithm was first studied under the ergodicity framework \cite{durmus2017nonasymptotic,JMLR:v20:19-306} and later under the spectral gap framework \cite{vempala2019rapid}. While there might be theoretical value to establish convergence in both frameworks, this is practically unnecessary.  In this paper, we assume the convergence of $X_n$ under either the ergodicity framework or the spectral gap framework and study the convergence of $\Xhat_n$ under one of the two frameworks accordingly. We not only address the question qualitatively, but also quantitatively by establishing bounds for the convergence speed of $\Xhat_n$ with respect to the convergence speed of $X_n$ and the approximation accuracy.

\subsection{Related literature}
The approximation and convergence questions we study here are fundamental for MCMC and have been studied in various settings before. 
Most existing works focus on specific approximation schemes. 
For example, 
\cite{herve2014approximating} studies the ergodicity property of finite-rank non-negative sub-Markov kernels 
in relation to the ergodicity property of the original Markov kernel.
\cite{bardenet2014adaptive} studies the convergence and approximation problems of an adaptive subsampling approach under the assumption of uniform ergodicity. \cite{medina2020perturbation} studies the approximation problem for Monte Carlo within Markov Chain algorithms.
\cite{JMLR:v20:19-306} studies the approximation problem for several sampling algorithms when the target distribution is log-concave.
Overall, there is a lack of a unified framework.

To the best of our knowledge, there are only two papers that provide a general discussion similar to ours \cite{shardlow2000perturbation, rudolf2018perturbation}, but these two papers focus on the approximation accuracy under the ergodicity framework. 
How to quantify the convergence speed and approximation accuracy of $\Xhat_n$ under the spectral gap framework is largely missing in the literature.
Moreover, while the connection between ergodicity and MCMC sampling error is well known, most results are asymptotic, i.e., in the form of central limit theorems \cite{jones2004markov}. Non-asymptotic error bounds are more useful in practice \cite{joulin2010curvature}. Our work intends to fill these gaps and provides a complete list of performance quantifications for  numerical MCMC samplers (perturbed Markov processes). We also demonstrate that our results can be easily applied to the analysis of various algorithms in Sections \ref{sec:Metro} and \ref{sec:RE}.

\subsection{Notations}
Let $\Omega$ denote a Polish space and $\mathcal{B}(\Omega)$ denote the corresponding Borel $\sigma$-algebra.
For a probability measure $\mu$ on $\Omega$, we define 
\[\mu f=\int_{\Omega} f(x)\mu(dx),\quad \var_\mu f=\int_{\Omega} (f(x)-\mu f)^2\mu(dx).
\]
We also use $\mu(x)$ to denote the corresponding density function.
For measurable functions $f,g: \Omega \rightarrow \mathbb{R}$, we define the inner product with respect to $\mu$ as
\[\langle f,g\rangle_{\mu}=\int_{\Omega} f(x)g(x)\mu(dx).\]
Then, $\|f\|_{\mu}^2 =\langle f,f\rangle_{\mu}=\int_{\Omega} f(x)^2\mu(dx)$.
In what follows, we omit $\Omega$ from the integral notation when it is clear from the context.
For a transition kernel $P$, define
\[\mu P(A)=\int P(x,A)\mu(dx).\]
For a measurable function $f$, we also define
$\delta_xPf=Pf(x)=\int f(y)P(x,dy)$.
Suppose $P$ is irreducible and symmetric with respect to $\pi$, 
then for any measurable functions $f,g$,
\[
\langle Pf,g\rangle_{\pi} =\langle f, Pg\rangle_{\pi}.
\]
%
Lastly, we denote $C$ as a generic constant whose value can change from line to line.

\subsection{Organization}
We start by developing general analysis results for the ergodicity framework in Section  \ref{sec:ergodicity}, and the spectral gap framework in Section \ref{sec:gap}.
We demonstrate how to apply these frameworks on two popular MH-MCMCs in Section \ref{sec:Metro}, and on the involved parallel tempering algorithm in Section \ref{sec:RE}. 
Finally in Section \ref{sec:num}, we verify our claims numerically on an Bayesian inverse problem, which tries to infer initial condition and model parameter in the predator-prey system.

\section{The Ergodicity Framework} \label{sec:ergodicity}
We start our discussion with the ergodicity framework. 
Following \cite{rudolf2018perturbation}, we first introduce the metric we use and the notion of ergodicity.
For a measurable function $V:\reals^d \rightarrow [1,\infty]$, define
\[
d_V(x,y)=(V(x)+V(y))1_{x\neq y}.
\]
For two probability measures $\mu$ and $\nu$ on $\reals^d$, define
\[
\|\mu-\nu\|_V=\sup_{|f|\leq V}\left |\int f(x)(\mu(dx)-\nu(dx))\right |.
\]
It can be shown that $\|\mu-\nu\|_V=W_{d_V}(\mu,\nu)$ where $W$ denote the Wasserstein distance (Lemma 3.1 in \cite{rudolf2018perturbation}).
If we use the constant function $V(x)=1$, this gives the  well known total variation distance, i.e.,
\[\|\mu-\nu\|_{TV} = \sup_{|f|\leq 1}\left |\int f(x)(\mu(dx)-\nu(dx))\right |.\]
Note that using $V(x)=1$ neglects the location information of $x$. This location information can be crucial for problems with unbounded domain. 
In general, for problems with unbounded domain, one often chooses $V$ to be a Lyapunov function. Given a  Markov chain $(X_n,P)$, we say $V:\reals^d\to [1,\infty)$ is a Lyapunov function if there exist $\lambda\in(0,1)$ and $L>0$, such that 
\begin{equation}
\label{eq:lyap}
PV (x)=\int P(x,dy) V(y) \leq \lambda V(x) + L.
\end{equation}

If $\pi$ is the invariant measure of $X_n$, we say $X_n$ is geometrically ergodic under $d_V$  (see Theorem 16.1 in \cite{meyn2012markov}) if there are constants $\rho\in(0,1)$ and $C_0\in(0,\infty)$, such that
for any $n\in \mathbb{Z}^+$,
\begin{equation} \label{eq:ergo1}
\|\delta_x P^n - \pi\|_V \leq C_0 \rho^n V(x).
\end{equation}
We refer to $\rho$ as the ergodicity coefficient.
 Note that the smaller the value of $\rho$, the faster the convergence to stationarity.
From \eqref{eq:ergo1}, using the triangle inequality we obtain an equivalent definition of geometric ergodicity, which requires that for any $x$ and $y$
\begin{equation} \label{eq:ergo}
\|\delta_x P^n - \delta_yP^n\|_V \leq C_0^{\prime}\rho^n d_V(x,y). 
\end{equation}
The equivalence can be seen from
\begin{equation}
\label{tmp:ergo}
\|\delta_x P^n - \pi\|_V\leq \int \pi(dy)\|\delta_x P^n - \delta_yP^n\|_V \leq C_0^{\prime}\rho^n(V(x)+ \pi  V)\leq C_0\rho^nV(x).
\end{equation}
where $C_0=C_0^{\prime}(1+ \pi V )$.

%

The approximation problem under the ergodicity framework has been studied in \cite{rudolf2018perturbation}.
We present one of their main results here which is related to our subsequent development. 

\begin{theorem}[Theorem 3.1 in \cite{rudolf2018perturbation}]
\label{thm:rs}
Suppose $(X_n, P)$ is geometrically ergodic, i.e., as in \eqref{eq:ergo}. Suppose $\Vhat$ is a Lyapunov function for $(\Xhat_n,\Phat)$ in the sense of \eqref{eq:lyap}, and
\begin{equation}\label{eq:epsilon_close1}
\|\delta_x P - \delta_x\Phat \|_V \leq \epsilon \Vhat(x).
\end{equation}
Then,  for some constant $C$, we have 
\begin{equation}\label{eq:epsilon_close1n}
\|\delta_x P^n - \delta_x\Phat^n\|_V \leq C\epsilon \frac{1-\rho^n}{1-\rho}\left(\Vhat(x) + \frac{L}{1-\lambda}\right).
\end{equation}
\end{theorem}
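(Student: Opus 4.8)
The plan is to decompose the $n$-step discrepancy between the two kernels into $n$ one-step discrepancies via a telescoping identity, and then control each term by combining the one-step closeness \eqref{eq:epsilon_close1} with the geometric contraction of the \emph{exact} semigroup in the $V$-norm. Concretely, I would start from the operator identity
\begin{equation*}
\Phat^n - P^n = \sum_{k=1}^{n} \Phat^{k-1}(\Phat - P)P^{n-k},
\end{equation*}
which telescopes because the generic summand equals $\Phat^k P^{n-k} - \Phat^{k-1}P^{n-k+1}$. Applying this to the Dirac measure $\delta_x$ and using the triangle inequality for $\|\cdot\|_V$ reduces the task to bounding, for each $1 \leq k \leq n$, the term $\|\delta_x\Phat^{k-1}(\Phat - P)P^{n-k}\|_V$.

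Next, I would isolate two facts. First, the two-point ergodicity \eqref{eq:ergo} upgrades to a contraction of the exact semigroup on arbitrary probability measures: using that $\|\mu-\nu\|_V = W_{d_V}(\mu,\nu)$ (Lemma 3.1 in \cite{rudolf2018perturbation}) and an optimal $d_V$-coupling $\gamma$ of $\mu$ and $\nu$, one has
\begin{equation*}
\|\mu P^m - \nu P^m\|_V \leq \int \|\delta_z P^m - \delta_w P^m\|_V\, \gamma(dz,dw) \leq C_0^{\prime}\rho^m \|\mu - \nu\|_V.
\end{equation*}
Applying this with $m = n-k$ to the two probability measures $\mu = \delta_x\Phat^{k-1}\Phat$ and $\nu = \delta_x\Phat^{k-1}P$ bounds the $k$-th term by $C_0^{\prime}\rho^{n-k}\|\delta_x\Phat^{k-1}(\Phat - P)\|_V$ (the case $k=n$ being the trivial contraction $m=0$). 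Second, writing $\delta_x\Phat^{k-1}(\Phat - P) = \int\Phat^{k-1}(x,dz)(\delta_z\Phat - \delta_z P)$ and pulling the $V$-norm inside the integral, the one-step bound \eqref{eq:epsilon_close1} gives
\begin{equation*}
\|\delta_x\Phat^{k-1}(\Phat - P)\|_V \leq \epsilon \int \Phat^{k-1}(x,dz)\Vhat(z) = \epsilon\, \Phat^{k-1}\Vhat(x).
\end{equation*}

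Finally, I would control the factor $\Phat^{k-1}\Vhat(x)$ by iterating the Lyapunov drift \eqref{eq:lyap}: since $\lambda \in (0,1)$,
\begin{equation*}
\Phat^{k-1}\Vhat(x) \leq \lambda^{k-1}\Vhat(x) + L\frac{1-\lambda^{k-1}}{1-\lambda} \leq \Vhat(x) + \frac{L}{1-\lambda},
\end{equation*}
a bound uniform in $k$. Assembling the three estimates and summing the geometric series $\sum_{k=1}^n \rho^{n-k} = \frac{1-\rho^n}{1-\rho}$ yields the claimed inequality with $C = C_0^{\prime}$. I expect the main subtlety to be the second paragraph: passing from the pointwise ergodicity \eqref{eq:ergo} to the measure-level contraction requires the Wasserstein identification of $\|\cdot\|_V$ and a coupling argument, and one must take care that the intermediate objects $\delta_x\Phat^{k-1}\Phat$ and $\delta_x\Phat^{k-1}P$ are genuine probability measures so that the contraction (rather than a naive bound on a signed measure) applies. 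The remaining steps — telescoping, the integral triangle inequality, and the Lyapunov iteration — are routine.
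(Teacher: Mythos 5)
Your proof is correct. Note that the paper itself gives no proof of this statement --- it is quoted directly as Theorem 3.1 of \cite{rudolf2018perturbation} --- and your argument (telescoping $\Phat^n - P^n = \sum_{k=1}^{n}\Phat^{k-1}(\Phat-P)P^{n-k}$, upgrading the two-point ergodicity \eqref{eq:ergo} to a measure-level contraction via the Wasserstein identification $\|\mu-\nu\|_V = W_{d_V}(\mu,\nu)$ and an optimal coupling, and taming $\Phat^{k-1}\Vhat(x)$ by iterating the drift \eqref{eq:lyap}) is essentially the same route taken in that reference, with the constant $C=C_0'$ made explicit. The one subtlety you correctly flag --- that the contraction must be applied to the probability measures $\delta_x\Phat^{k}$ and $\delta_x\Phat^{k-1}P$ rather than to the signed measure $\delta_x\Phat^{k-1}(\Phat-P)$ directly --- is exactly where a naive version of the argument would break, and your handling of it (including the trivial $m=0$ case, which needs $C_0'\geq 1$) is sound.
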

The bound in \eqref{eq:epsilon_close1n} and the triangular inequality give us an approximation error bound 
\[
\|\delta_x \Phat^n - \pi\|_V\leq \|\delta_x P^n - \delta_x\Phat^n\|_V+\|\delta_x P^n-\pi\|_V\leq C'(\epsilon+\rho^n) (\Vhat(x)+V(x))
\]
for some constant $C'$. 

Note that the right hand side of \eqref{eq:epsilon_close1n} is not converging to zero as  $n\to 0$. Thus, it cannot help us learn the ergodicity of $\Xhat_n$ or whether $\Xhat_n$ has a unique invariant measure. 
The next result shows  that ergodicity can be obtained with essentially the same conditions as Theorem \ref{thm:rs} (note that condition \eqref{eq:ergo1} leads to  \eqref{eq:close1} through \eqref{tmp:ergo}).
%
%

\begin{theorem} \label{th:ergo}
Suppose $V$ is a Lypaunov function for $P$ in the sense of \eqref{eq:lyap}.
In addition, assume there exist $N\in \mathbb{Z}^+$ and $\rho\in(0,1)$, such that for any $n\geq N$,
\begin{equation}
\label{eq:close1}
\|\delta_xP^n-\delta_y P^n\|_V\leq \rho^nd_V(x,y).
\end{equation}
Lastly, suppose the following holds for a sufficiently small $\epsilon>0$,
\begin{equation}\label{eq:close2}
\|\delta_x P - \delta_x \Phat\|_V\leq \epsilon V (x).
\end{equation}
Then, $V$ is a Lyapunov function for $\Phat$ as well with 
\[\Phat V(x)\leq (\lambda+\epsilon) V(x) + L.\]
Moreover, $\Xhat_n$ has a unique invariant measure $\pihat$ and there exist $C_1,D_1\in(0,\infty)$, such that
\[
\|\delta_x\Phat^n-\delta_y \Phat^n\|_V\leq C_1(\rho+D_1\epsilon)^{n}d_V(x,y).
\]
\end{theorem}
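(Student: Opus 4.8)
The plan is to prove the three conclusions in sequence, using only the one-step bound \eqref{eq:close2} and the window-$N$ contraction \eqref{eq:close1} as inputs. The Lyapunov bound for $\Phat$ is immediate: since $V\geq 1$, the function $V$ itself is admissible in the supremum defining $\|\cdot\|_V$, so \eqref{eq:close2} applied to the test function $V$ gives $|\Phat V(x)-PV(x)|\leq \epsilon V(x)$, and combining with \eqref{eq:lyap} yields $\Phat V(x)\leq (\lambda+\epsilon)V(x)+L$. For $\epsilon$ small enough that $\lambda+\epsilon<1$, this is the first claim; it also supplies the uniform bound $\Phat^kV(x)\leq V(x)+L/(1-\lambda-\epsilon)\leq c_1 V(x)$ that I will reuse.

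The core step is to upgrade \eqref{eq:close1} into a one-window contraction for $\Phat$. I would first control the accumulated perturbation over $N$ steps via the telescoping identity
\[
\Phat^N-P^N=\sum_{k=0}^{N-1}\Phat^k(\Phat-P)P^{N-1-k}.
\]
For the $k$-th summand, set $g=P^{N-1-k}f$ with $|f|\leq V$; the Lyapunov bound for $P$ gives $|g|\leq P^{N-1-k}V\leq V+L/(1-\lambda)\leq c_0 V$, so after rescaling by $c_0$ the one-step bound \eqref{eq:close2} applies to $g$ pointwise and yields $|\delta_z(\Phat-P)g|\leq c_0\epsilon V(z)$. Integrating against $\delta_x\Phat^k$ and using $\delta_x\Phat^kV=\Phat^kV(x)\leq c_1V(x)$ bounds each summand by $c_0c_1\epsilon V(x)$, hence $\|\delta_x\Phat^N-\delta_xP^N\|_V\leq NC_2\,\epsilon\, V(x)$. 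The key point is that, because the window length $N$ is fixed, the error stays linear in $\epsilon$ rather than compounding. Adding and subtracting $\delta_xP^N$ and $\delta_yP^N$, applying \eqref{eq:close1} at the single horizon $N$, and using $V(x)+V(y)=d_V(x,y)$ for $x\neq y$, I obtain the one-window contraction
\[
\|\delta_x\Phat^N-\delta_y\Phat^N\|_V\leq (\rho^N+C''\epsilon)\,d_V(x,y)=:\tilde\rho\,d_V(x,y),
\]
with $C''=NC_2$ and $\tilde\rho<1$ for $\epsilon$ small.

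Next I would bootstrap this to every $n$. Since $\|\cdot\|_V=W_{d_V}$, the pointwise contraction lifts to arbitrary measures by the gluing lemma: taking an optimal $d_V$-coupling $\gamma$ of $\mu,\nu$ gives $\|\mu\Phat^N-\nu\Phat^N\|_V\leq\int W_{d_V}(\delta_x\Phat^N,\delta_y\Phat^N)\,\gamma(dx,dy)\leq\tilde\rho\,\|\mu-\nu\|_V$. Iterating over $m$ windows yields $\|\delta_x\Phat^{mN}-\delta_y\Phat^{mN}\|_V\leq\tilde\rho^m d_V(x,y)$, and writing $n=mN+r$ with $0\leq r<N$ I absorb the residual $\Phat^r$ through $\|\delta_x\Phat^r-\delta_y\Phat^r\|_V\leq \Phat^rV(x)+\Phat^rV(y)\leq C' d_V(x,y)$ (again from the $\Phat$-Lyapunov bound and $d_V\geq 2$). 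It remains to reorganize $\tilde\rho^m$ into the advertised form: by convexity of $t\mapsto t^N$, $(\rho+D_1\epsilon)^N\geq\rho^N+N\rho^{N-1}D_1\epsilon\geq\tilde\rho$ once $D_1=C''/(N\rho^{N-1})$, so $\tilde\rho^m\leq(\rho+D_1\epsilon)^{n-r}\leq C_3(\rho+D_1\epsilon)^n$ for $\epsilon$ small, giving $\|\delta_x\Phat^n-\delta_y\Phat^n\|_V\leq C_1(\rho+D_1\epsilon)^n d_V(x,y)$.

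Finally, for the invariant measure I would run a Banach fixed-point argument on $\mathcal{P}_V=\{\mu:\mu V<\infty\}$ metrized by $\|\cdot\|_V$, which is complete and mapped into itself by $\Phat$ thanks to the $\Phat$-Lyapunov bound. The one-window contraction makes $\Phat^N$ a strict contraction, producing a unique fixed point $\pihat$ of $\Phat^N$; since $\pihat\Phat$ is then also fixed by $\Phat^N$, uniqueness forces $\pihat\Phat=\pihat$, so $\pihat$ is the unique $\Phat$-invariant measure, with $\pihat V\leq L/(1-\lambda-\epsilon)<\infty$ from invariance. I expect the main obstacle to be the second paragraph: keeping the $N$-step perturbation linear in $\epsilon$ and correctly weighted by $V(x)$ (so that it recombines with $d_V$) is the delicate accounting, whereas the gluing lift and the rate reorganization are routine once the window contraction is in hand.
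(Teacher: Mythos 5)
Your proposal is correct, and its skeleton matches the paper's proof: establish the Lyapunov bound for $\Phat$ from \eqref{eq:close2}, convert the horizon-$N$ contraction \eqref{eq:close1} of $P$ into a one-window contraction for $\Phat$ by the triangle inequality, iterate windows via optimal couplings (the Wasserstein/gluing lift you describe is exactly the paper's $\Qhat^{k}_{x,y}$ argument), and obtain $\pihat$ from completeness of the $V$-weighted total variation metric. The genuine difference is the source of the perturbation estimate $\|\delta_x\Phat^N-\delta_xP^N\|_V\lesssim\epsilon V(x)$: the paper invokes Theorem~\ref{thm:rs} (the Rudolf--Schweizer bound), which presupposes geometric ergodicity of $P$ at \emph{all} horizons and therefore technically needs a small patching step, since \eqref{eq:close1} is only assumed for $n\geq N$ (the gap is fixable by the crude bound $\|\delta_xP^n-\delta_yP^n\|_V\leq P^nV(x)+P^nV(y)\leq C\,d_V(x,y)$ for $n<N$); you instead prove the bound at the single fixed horizon $N$ from scratch via the telescoping identity $\Phat^N-P^N=\sum_{k=0}^{N-1}\Phat^k(\Phat-P)P^{N-1-k}$, using only the two Lyapunov bounds and the one-step closeness. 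This is more self-contained, and the factor $N$ you pay is harmless since $N$ is fixed. A second, smaller difference: for the invariant measure the paper shows $\{\delta_x\Phat^n\}$ is Cauchy in $\|\cdot\|_V$ and that the limit is independent of $x$, whereas you apply Banach's fixed-point theorem to $\Phat^N$ on $\{\mu:\mu V<\infty\}$ and upgrade the $\Phat^N$-fixed point to a $\Phat$-fixed point by uniqueness; both rest on the same completeness fact, and both establish uniqueness only among invariant measures of finite $V$-mass. Finally, your rate bookkeeping ($D_1=C''/(N\rho^{N-1})$ via convexity of $t\mapsto t^N$) is actually slightly more careful than the paper's choice $D_1=C/(N\rho)$ in \eqref{eq:ergo_bd1}, which as written does not quite dominate the additive $C\epsilon$ term for all window lengths between $N$ and $2N$.
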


Theorem \ref{th:ergo} indicates that if $P$ is geometrically ergodic with ergodicity coefficient $\rho$ and $\Phat$ is $\epsilon$-close to $P$ as characterized by \eqref{eq:close2}, $\Phat$ is also geometrically ergodic. Moreover, the ergodicity coefficient of $\Phat$ is bounded above by $\rho+D_1\epsilon$.

In statistical applications, we are more interested in turning convergence results into error bounds for the Monte Carlo estimators. 
Central limit theorem of ergodic Markov processes were studied in 
 \cite{tierney1994markov, jones2004markov}, which provides asymptotic error quantifications.
 In practice, non-asymptotic bounds for finite values of $n$ may be more desirable. 
The following proposition is similar to Theorem 3 in \cite{joulin2010curvature}. We provide an explicit statement for the variance bound along with a simple proof for self-completeness. For simplicity, we assume the Markov process is initialized with the invariant measure, i.e., $\Xhat_0\sim\pihat$, so a burn-in period is not necessary. 

%

\begin{prop}
\label{pro:ergo}
Suppose $d_V(\delta_x\Phat^n,\delta_y \Phat^n)\leq \rhohat^n d_V(x,y)$ for some $\rhohat\in (0,1)$. Then, for any $f$ that is 1-Lipschitz under $d_V$,
\[
\left|\delta_x \Phat^n f -\pihat f\right|\leq  \rhohat^n(V(x)+\pihat V). 
\]
In addition, if we use $\fhat_M=\frac{1}{M}\sum_{k=1}^M f(\Xhat_k)$ as an estimator of $\pihat f$ starting from $\Xhat_0\sim \pihat$, 
\[
\E_{\pihat}\left[(\fhat_M-\pihat f)^2\right]\leq  \frac{2}{ (1-\rhohat)M}\E_{\pihat}\Big[|f(\Xhat_0)|(V(\Xhat_0)+\pihat V)\Big]. 
\]
\end{prop}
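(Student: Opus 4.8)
The plan is to establish the two displayed inequalities in order, treating the first (a pointwise bias bound) as the key lemma for the second (the variance bound). Throughout I read the hypothesis as $\|\delta_x\Phat^n-\delta_y\Phat^n\|_V\le\rhohat^n d_V(x,y)$, i.e. in terms of the metric $\|\cdot\|_V=W_{d_V}$ introduced earlier.

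For the first inequality I would exploit the stationarity $\pihat=\pihat\Phat^n$ to write the bias as an average of pairwise differences,
\[
\delta_x\Phat^n f - \pihat f = \int \left(\delta_x\Phat^n f - \delta_y\Phat^n f\right)\pihat(dy).
\]
Since $f$ is $1$-Lipschitz under $d_V$, the Kantorovich--Rubinstein duality behind the identity $\|\mu-\nu\|_V=W_{d_V}(\mu,\nu)$ gives $|\delta_x\Phat^n f - \delta_y\Phat^n f|\le\|\delta_x\Phat^n-\delta_y\Phat^n\|_V\le\rhohat^n d_V(x,y)$. Moving the absolute value inside the integral and using $d_V(x,y)=(V(x)+V(y))1_{x\neq y}\le V(x)+V(y)$ then yields
\[
|\delta_x\Phat^n f - \pihat f| \le \rhohat^n\int (V(x)+V(y))\,\pihat(dy) = \rhohat^n\big(V(x)+\pihat V\big),
\]
which is the first claim.

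For the variance bound I would use stationarity ($\Xhat_k\sim\pihat$ for all $k$) to expand the mean-square error into lagged covariances,
\[
\E_{\pihat}\big[(\fhat_M-\pihat f)^2\big] = \frac{1}{M^2}\sum_{j,k=1}^M c(|j-k|), \qquad c(m):=\E_{\pihat}\big[(f(\Xhat_0)-\pihat f)(f(\Xhat_m)-\pihat f)\big].
\]
Conditioning on $\Xhat_0$ and using the Markov property, $c(m)=\E_{\pihat}[\,f(\Xhat_0)(\delta_{\Xhat_0}\Phat^m f-\pihat f)\,]$, where I deliberately leave one factor uncentered. Applying the first inequality pointwise with $x=\Xhat_0$ bounds the centered factor, giving $|c(m)|\le\rhohat^m\,\E_{\pihat}[\,|f(\Xhat_0)|(V(\Xhat_0)+\pihat V)\,]=:\rhohat^m A$; the $m=0$ term obeys the same bound because $1$-Lipschitzness yields $|f(x)-\pihat f|\le V(x)+\pihat V$. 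Summing the geometric weights via $\sum_{j,k=1}^M\rhohat^{|j-k|}\le M\sum_{l=-\infty}^{\infty}\rhohat^{|l|}=M\,\frac{1+\rhohat}{1-\rhohat}\le\frac{2M}{1-\rhohat}$ and dividing by $M^2$ delivers the stated constant $\frac{2}{(1-\rhohat)M}$.

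The covariance expansion and the geometric sum are routine. The one step needing care — and the main (if minor) obstacle — is the covariance estimate: I must keep exactly one factor of $f$ uncentered so that the first inequality applies to the centered factor $\delta_{\Xhat_0}\Phat^m f-\pihat f$, producing a bound in terms of $\E_{\pihat}[|f(\Xhat_0)|(V(\Xhat_0)+\pihat V)]$ rather than a symmetric variance-type quantity. I would also verify that the $1$-Lipschitz hypothesis transfers to $|f(x)-\pihat f|\le V(x)+\pihat V$, so that the $m=0$ term fits the same geometric envelope.
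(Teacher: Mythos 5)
Your proof is correct. For the first inequality your argument coincides with the paper's: use stationarity to write the bias as $\int(\delta_x\Phat^n f-\delta_y\Phat^n f)\,\pihat(dy)$, then use the coupling/duality fact that a $1$-Lipschitz $f$ satisfies $|\mu f-\nu f|\leq \|\mu-\nu\|_V=W_{d_V}(\mu,\nu)$. For the second inequality, however, your bookkeeping genuinely differs from the paper's, and in a way that matters. The paper reduces to $\pihat f=0$, bounds $\sum_{j,k}f(\Xhat_j)f(\Xhat_k)$ by $2\sum_{j}|f(\Xhat_j)|\sum_{k\geq 0}|f(\Xhat_{j+k})|$, and then invokes the first claim to assert $\E\big[|f(\Xhat_k)|\,\big|\,\Xhat_0\big]\leq\rhohat^k(V(\Xhat_0)+\pihat V)$. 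That step is not justified: the first claim applied to the $1$-Lipschitz function $|f|$ controls $\big|\delta_x\Phat^k|f|-\pihat|f|\big|$, and $\pihat|f|>0$ unless $f\equiv 0$; in fact the paper's intermediate majorant $\sum_{k=0}^\infty|f(\Xhat_{j+k})|$ has infinite $\pihat$-expectation whenever $f\not\equiv 0$, so the chain of inequalities breaks there. Your version keeps the covariance structure intact, writing $c(m)=\E_{\pihat}\big[f(\Xhat_0)\big(\delta_{\Xhat_0}\Phat^m f-\pihat f\big)\big]$ with exactly one factor left uncentered, so the first claim applies to the centered factor and absolute values are taken only outside, giving $|c(m)|\leq\rhohat^m\,\E_{\pihat}\big[|f(\Xhat_0)|(V(\Xhat_0)+\pihat V)\big]$ for every $m\geq 0$ (including $m=0$ via $\Phat^0=I$); summing the geometric envelope then yields the stated constant $\tfrac{2}{(1-\rhohat)M}$. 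So your route is the rigorous repair of the argument the paper intended; as a bonus it proves the statement for general $1$-Lipschitz $f$ as stated, whereas the paper's centering reduction $f\mapsto f-\pihat f$ would leave $|f(\Xhat_0)-\pihat f|$, not $|f(\Xhat_0)|$, on the right-hand side.
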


\section{The Spectral Gap Framework}
\label{sec:gap}
In this section, we discuss the spectral gap framework. 
We first introduce a few notations. For a transition kernel $P$ and density $\mu$, define
\[\|P\|_{\mu} = \max_{f:0<\|f\|_{\mu}<\infty} \frac{\|Pf\|_{\mu}}{\|f\|_{\mu}},\]
where $\|f\|_{\mu}^2=\langle f, f\rangle_{\mu}$. 
For two probability measures $\mu$ and $\nu$ on $\reals^d$, where $\nu$ is absolutely continuous with respect to $\mu$, define the $\chi^2$ divergence of $\nu$ from $\mu$ as:
\[D_{\chi^2}(\nu\|\mu)=\int \left(\frac{\nu(x)}{\mu(x)}-1\right)^2\mu(x)dx=\int 
\frac{\nu(x)^2}{\mu(x)}dx-1.
\]
For a transition kernel $P$ that is irreducible and reversible with respect to $\pi$, the spectral gap of $P$ is defined as \cite{hairer2014spectral}
\begin{equation}\label{eq:gap}
\kappa(P)=1-\sup\left\{\frac{\|Pf-\pi f\|_{\pi}^2}{\|f-\pi f\|_{\pi}^2}: f\in L^2(\pi), \var_\pi f \neq 0\right\}. 
\end{equation}
%
Note that by repeatedly applying \eqref{eq:gap}, we have for any $f\in L^2(\pi)$, 
\[\|P^n f -\pi f\|_{\pi}^2 \leq (1-\kappa(P))^n \|f-\pi f\|_{\pi}^2.\]
Thus, the larger the spectral gap, the faster $X_n$ converges to its invariant measure.

\begin{remark}
An alternative definition of the spectral gap takes the form
\[\kappa_a(P)=\inf\left\{\frac{\langle f, (I-P)f\rangle_{\pi}}{\var_{\pi} f}: f\in L^2(\pi), \var_\pi f \neq 0\right\}.\]
Note that the spectral gap defined in \eqref{eq:gap} can be viewed as the spectral gap of $P^2$ accordingly to this alternative definition, i.e.,
$\kappa(P)=\kappa_a(P^2)$.
\end{remark}

\subsection{General $\chi^2$ approximation and convergence} 



Our first result assumes that  $(X_n,P)$ has a spectral gap and  $\Phat$ is a close approximation of $P$:
\begin{thm}
\label{th:app_gap}
Suppose  $P$ is a reversible transition kernel with invariant measure $\pi$ and a spectral gap $\kappa(P)>0$ in the sense of \eqref{eq:gap}. 
$\Phat$ is another transition kernel satisfying $\|P-\Phat\|_\pi\leq \epsilon$ for a sufficiently small $\epsilon>0$.
Then, for any $a\in(0,1)$, there exists a constant $C$ such that 
the following holds with $\kappahat=(1-a)\kappa(P)-\frac{C\epsilon^2}{a}$:
\begin{enumerate}
\item For any $f\in L^2(\pi)$,  
\[\|\Phat^nf-\pi\Phat^nf\|_{\pi}^2 \leq (1-\kappahat)^{n}\var_\pi f \]
\item $\Phat$ has an invariant measure $\pihat$, which satisfies 
\[
|(\pihat-\pi \Phat^n)f|^2 \leq C\epsilon^2(1-\kappahat)^{n}\var_\pi f.
\]
Moreover, $D_{\chi^2}(\pihat\|\pi)\leq C\epsilon^2$. 
\end{enumerate}
\end{thm}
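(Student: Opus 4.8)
The plan is to work entirely in the Hilbert space $L^2(\pi)$, where reversibility makes $P$ self-adjoint ($P=P^*$) and the spectral gap \eqref{eq:gap} says exactly that $\|Pf\|_\pi^2\le(1-\kappa(P))\|f\|_\pi^2$ for every mean-zero $f$. Write $\Pi_0 f=f-\pi f$ for the orthogonal projection onto the mean-zero subspace $L^2_0(\pi)$, and introduce the centered perturbed operator $\Phat_0:=\Pi_0\Phat\Pi_0$. Since $\Phat 1=1$ and $\Pi_0 1=0$, one checks the intertwining relation $\Pi_0\Phat=\Phat_0\Pi_0$, hence $\Pi_0\Phat^n=\Phat_0^n\Pi_0$ and $\Phat^n f-\pi\Phat^n f=\Phat_0^n\Pi_0 f$. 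Thus Part 1 reduces to bounding the operator norm of $\Phat_0$ on $L^2_0(\pi)$. For mean-zero $f$ I would split $\Phat_0 f=Pf+\Pi_0(\Phat-P)f=:u+v$, with $\|u\|_\pi^2\le(1-\kappa)\|f\|_\pi^2$ from the gap and $\|v\|_\pi\le\|\Phat-P\|_\pi\|f\|_\pi\le\epsilon\|f\|_\pi$ from the hypothesis (using that $\Pi_0$ is a contraction). Expanding $\|u+v\|_\pi^2$ and applying Young's inequality $2\langle u,v\rangle_\pi\le \tfrac{a\kappa}{1-\kappa}\|u\|_\pi^2+\tfrac{1-\kappa}{a\kappa}\|v\|_\pi^2$ to the cross term produces exactly $\|\Phat_0 f\|_\pi^2\le\big(1-(1-a)\kappa+C\epsilon^2/a\big)\|f\|_\pi^2=(1-\kappahat)\|f\|_\pi^2$, with $C=1/\kappa(P)$ absorbed into the generic constant. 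Submultiplicativity $\|\Phat_0^n\|\le\|\Phat_0\|^n$ then yields Part 1.

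The harder half is Part 2, because $\Phat$ is neither self-adjoint on $L^2(\pi)$ nor does it fix $\pi$, so the invariant measure must be built by hand. The plan is to track the $\pi$-densities of the iterates $\pi\Phat^n$. Using the $L^2(\pi)$-adjoint $\Phat^*$ (well defined since $\|\Phat\|_\pi\le 1+\epsilon<\infty$), the identity $(\pi\Phat^n)f=\langle\Phat^n f,1\rangle_\pi=\langle f,(\Phat^*)^n 1\rangle_\pi$ shows $\pi\Phat^n$ has $\pi$-density $h_n:=(\Phat^*)^n1$, which is therefore nonnegative and satisfies $\pi h_n=1$. Writing $h_n=1+g_n$ with $g_n\in L^2_0(\pi)$ and $r:=\Phat^*1-1=(\Phat-P)^*1$ (so $\|r\|_\pi\le\epsilon$ and $\pi r=0$), the dynamics collapse to the affine recursion $g_{n+1}=r+\Phat_0^* g_n$, $g_0=0$, where $\Phat_0^*=(\Phat_0)^*$ has the same norm $\le\sqrt{1-\kappahat}=:\beta<1$ as $\Phat_0$. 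Hence $g_n=\sum_{k=0}^{n-1}(\Phat_0^*)^k r$ converges in $L^2_0(\pi)$ to the Neumann sum $g_\infty=(I-\Phat_0^*)^{-1}r$, and I would set $\pihat:=(1+g_\infty)\pi$. It remains to verify that $\pihat$ is a genuine invariant probability measure: total mass $1+\pi g_\infty=1$; nonnegativity from $h_\infty=1+g_\infty$ being the a.e.\ limit of the nonnegative $h_n$; and invariance from the fixed-point relation $g_\infty=r+\Phat_0^* g_\infty$, which rearranges to $\Phat^* h_\infty=h_\infty$, i.e.\ $\pihat\Phat=\pihat$.

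With $\pihat$ in hand the two quantitative bounds follow quickly. Since $\|g_\infty\|_\pi\le\|r\|_\pi/(1-\beta)\le\epsilon/(1-\beta)$, the $\chi^2$ estimate is immediate: $D_{\chi^2}(\pihat\|\pi)=\|h_\infty-1\|_\pi^2=\|g_\infty\|_\pi^2\le C\epsilon^2$. For the transient bound I would use $g_\infty-g_n=\sum_{k\ge n}(\Phat_0^*)^k r=(\Phat_0^*)^n g_\infty$, so that $(\pihat-\pi\Phat^n)f=\langle f,g_\infty-g_n\rangle_\pi=\langle f-\pi f,(\Phat_0^*)^n g_\infty\rangle_\pi$ (the centering is free since $(\Phat_0^*)^n g_\infty\in L^2_0$); Cauchy--Schwarz together with the contraction $\|(\Phat_0^*)^n g_\infty\|_\pi\le\beta^n\|g_\infty\|_\pi$ then gives $|(\pihat-\pi\Phat^n)f|^2\le\beta^{2n}\|g_\infty\|_\pi^2\,\var_\pi f\le C\epsilon^2(1-\kappahat)^n\var_\pi f$. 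The main obstacle I anticipate is the existence step of Part 2: one must resist treating $\Phat$ as if it were reversible, pass carefully through the adjoint dynamics, and in particular justify that the $L^2(\pi)$-limit $h_\infty$ is a bona fide nonnegative density defining an invariant measure---this is precisely where the spectral gap is essential, as it supplies the contraction $\beta<1$ that makes the density series converge.
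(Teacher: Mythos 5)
Your proposal is correct, and it splits naturally into two halves relative to the paper. For Part 1 you are doing essentially what the paper does: the same decomposition $\Phat f = Pf + (\Phat-P)f$, the same Young's inequality with the $a\kappa$ weighting, and the same two ingredients (spectral gap for $P$, $\|\Phat - P\|_\pi \le \epsilon$); the only difference is packaging — you phrase it as an operator-norm bound $\|\Pi_0\Phat\Pi_0\|_\pi^2 \le 1-\kappahat$ and invoke submultiplicativity, while the paper writes the identical estimate in variance/double-integral form and iterates, so the two are interchangeable. For Part 2, however, your route is genuinely different and in one respect tighter than the paper's. The paper proves $|\pi(\Phat^{n+1}-\Phat^n)f| \le \epsilon(1-\kappahat)^{n/2}\sqrt{\var_\pi f}$, takes $f = \mathrm{sgn}(\mu_{n+1}-\mu_n)$ to show $\mu_n = \pi\Phat^n$ is Cauchy in total variation, extracts the limit $\pihat$, telescopes to get the transient bound, and finally plugs in $f = \pihat/\pi$ to obtain the $\chi^2$ estimate. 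That last substitution implicitly assumes $\pihat \ll \pi$ with density in $L^2(\pi)$, and the invariance of the TV-limit is left unstated (both are easily repaired, but they are gaps in the written argument). Your construction — tracking the $\pi$-densities $h_n = (\Phat^*)^n 1$, collapsing the dynamics to the affine recursion $g_{n+1} = r + \Phat_0^* g_n$, and defining $\pihat = (1+g_\infty)\pi$ with $g_\infty = (I-\Phat_0^*)^{-1}r$ via the Neumann series — delivers absolute continuity, the $L^2(\pi)$ density bound $\|g_\infty\|_\pi \le \epsilon/(1-\beta)$, and the invariance $\Phat^* h_\infty = h_\infty$ all by construction, so the $\chi^2$ bound and the transient bound $|(\pihat-\pi\Phat^n)f|^2 \le \beta^{2n}\|g_\infty\|_\pi^2\var_\pi f$ follow from Cauchy--Schwarz with no unstated assumptions. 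What the paper's route buys in exchange is elementarity: it never needs the adjoint $\Phat^*$ or resolvent machinery, only test functions and the triangle inequality, and the TV-Cauchy argument would survive in settings where one is reluctant to treat $\Phat$ as a bounded operator on $L^2(\pi)$. The constants produced by the two arguments have the same structure (both scale like $\epsilon/(1-\sqrt{1-\kappahat})$), so quantitatively the results coincide.
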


 Theorem \ref{th:app_gap} indicates that if $\Phat$ and $P$ are $\epsilon$-close to each other as quantified by 
$\|\Phat-P\|_{\pi}\leq \epsilon$, $\Phat$ has a stationary distribution $\pihat$. 
Moreover, $\pihat$ and $\pi$ are $\epsilon$-close to each other as quantified by 
$D_{\chi^2}(\pihat\|\pi)\leq C\epsilon^2$. 
We also note that showing that $\|\Phat^nf-\pi\Phat^nf\|_{\pi}^2 \leq (1-\kappahat)^{n} \var_\pi f$ is different from finding the spectral gap of $\Phat$,
since the latter would need a similar inequality but with $\pi$ replaced by $\pihat$. 
 In other words, Theorem \ref{th:app_gap} does not provide a spectral gap for $\Phat$. On the other hand, we can obtain error bound for Monte Carlo estimators using the bounds established in Theorem \ref{th:app_gap}:
\begin{proposition}
\label{prop:app_gap}
Under the same conditions as those in Theorem \ref{th:app_gap}, for any $f\in L^2(\pi)$ and any initial distribution  $\Xhat_0\sim \nu\ll\pi$, there exists a constant $C$ such that
\[
\left|\nu \Phat^n f  -\pihat f\right|^2\leq  (1-\kappahat)^{n}\var_\pi(f)\left(\sqrt{D_{\chi^2}(\nu\|\pi)+1}+C\epsilon\right)^2. 
\]
In addition, if $f$ is bounded, there exists a constant $C$ such that
\[
\E_{\pihat}[(\fhat_M-\pihat f)^2]\leq \frac{C}{M(1-(1-\kappahat)^{1/4})}\sqrt{\var_{\pihat}(f)\var_\pi (f)},
\]
where $\fhat_M=\frac{1}{M}\sum_{k=1}^M f(\Xhat_k)$.
\end{proposition}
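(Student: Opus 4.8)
The plan is to treat the two assertions separately, reducing each to the estimates already supplied by Theorem \ref{th:app_gap}: part (1) controls $\|\Phat^n f-\pi\Phat^n f\|_\pi$, while part (2) controls both $|(\pihat-\pi\Phat^n)f|$ and $D_{\chi^2}(\pihat\|\pi)\le C\epsilon^2$. Throughout I will freely pass between $\pi$-norms and $\pihat$-norms using the closeness of the two measures.

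For the bias bound I would split $\nu\Phat^n f-\pihat f=(\nu\Phat^n f-\pi\Phat^n f)+(\pi\Phat^n f-\pihat f)$ and apply the triangle inequality. Because $\int (d\nu/d\pi)\,d\pi=1$, the first piece equals $\langle \Phat^n f-\pi\Phat^n f,\,d\nu/d\pi\rangle_\pi$, so Cauchy--Schwarz in $L^2(\pi)$ gives $|\nu\Phat^n f-\pi\Phat^n f|\le\|\Phat^n f-\pi\Phat^n f\|_\pi\,\|d\nu/d\pi\|_\pi$, where $\|d\nu/d\pi\|_\pi=\sqrt{D_{\chi^2}(\nu\|\pi)+1}$ by the definition of the $\chi^2$ divergence and $\|\Phat^n f-\pi\Phat^n f\|_\pi\le(1-\kappahat)^{n/2}\sqrt{\var_\pi f}$ by Theorem \ref{th:app_gap}, part (1). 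The second piece is at most $\sqrt{C}\,\epsilon\,(1-\kappahat)^{n/2}\sqrt{\var_\pi f}$ by Theorem \ref{th:app_gap}, part (2). Adding the two contributions and squaring produces exactly the stated inequality.

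For the variance bound I would take $\Xhat_0\sim\pihat$, set $\tilde f=f-\pihat f$ so that $\pihat\tilde f=0$, and use stationarity to expand $\E_{\pihat}[(\fhat_M-\pihat f)^2]=\tfrac1{M^2}\sum_{j,k=1}^M\langle\tilde f,\Phat^{|j-k|}\tilde f\rangle_{\pihat}$. Counting lags reduces the right-hand side to $\tfrac1M\big(\var_{\pihat}f+2\sum_{m\ge1}|\langle\tilde f,\Phat^m\tilde f\rangle_{\pihat}|\big)$, so everything reduces to an exponentially decaying bound on the lag-$m$ autocovariance. Since $\pihat\tilde f=0$, I may replace $\Phat^m\tilde f$ by its $\pi$-centered version $h_m:=\Phat^m\tilde f-\pi\Phat^m\tilde f$, which yields $\langle\tilde f,\Phat^m\tilde f\rangle_{\pihat}=\langle\tilde f,h_m\rangle_{\pihat}\le\sqrt{\var_{\pihat}f}\,\|h_m\|_{\pihat}$ by Cauchy--Schwarz under $\pihat$. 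To turn $\|h_m\|_{\pihat}$ into a decaying quantity controlled under $\pi$, I would write $\|h_m\|_{\pihat}^2=\int h_m^2\,(d\pihat/d\pi)\,d\pi$ and apply Cauchy--Schwarz once more to get $\|h_m\|_{\pihat}^2\le\big(\int h_m^4\,d\pi\big)^{1/2}\sqrt{D_{\chi^2}(\pihat\|\pi)+1}$; since $f$ is bounded, $\big(\int h_m^4\,d\pi\big)^{1/2}\le\|h_m\|_\infty\|h_m\|_\pi$ with $\|h_m\|_\pi\le(1-\kappahat)^{m/2}\sqrt{\var_\pi f}$ from Theorem \ref{th:app_gap}, part (1). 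This gives an effective decay of $(1-\kappahat)^{m/4}$ --- the origin of the fourth-root exponent --- and hence $|\langle\tilde f,\Phat^m\tilde f\rangle_{\pihat}|\le C(1-\kappahat)^{m/4}\sqrt{\var_{\pihat}f\,\var_\pi f}$. Summing $\sum_{m\ge1}(1-\kappahat)^{m/4}\le(1-(1-\kappahat)^{1/4})^{-1}$, absorbing the $m=0$ term, and using that $\var_{\pihat}f$ and $\var_\pi f$ are comparable (as $D_{\chi^2}(\pihat\|\pi)$ is small) gives the claimed bound.

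I expect the main obstacle to be the second step of Part 2. Theorem \ref{th:app_gap} supplies only a contraction estimate centered at $\pi$ and, as the paper emphasizes, \emph{not} a genuine spectral gap for $\Phat$ with respect to its own invariant measure $\pihat$ --- indeed $\Phat$ need not be reversible under $\pihat$. Consequently the autocovariances under $\pihat$ cannot be bounded directly, and one must bootstrap from the $\pi$-centered estimate, paying a Cauchy--Schwarz factor that effectively halves the rate (yielding $(1-\kappahat)^{1/4}$ rather than $(1-\kappahat)^{1/2}$) and forcing the use of the boundedness of $f$ to control $\int h_m^4\,d\pi$ via $\|h_m\|_\infty$. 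This same mechanism is what makes the final constant involve the mixed factor $\sqrt{\var_{\pihat}f\,\var_\pi f}$ instead of a single variance.
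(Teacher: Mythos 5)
Your proof is correct and follows essentially the same route as the paper: part 1 is the identical triangle-inequality decomposition with Cauchy--Schwarz against $\sqrt{D_{\chi^2}(\nu\|\pi)+1}$ plus the two claims of Theorem \ref{th:app_gap}, and part 2 uses the same mechanism (autocovariance expansion, Cauchy--Schwarz under $\pihat$, a transfer from $\pihat$ to $\pi$ that costs a square root, and boundedness of $f$ to control fourth moments, which is exactly where the $(1-\kappahat)^{1/4}$ rate originates). The only cosmetic difference is the transfer step: you apply Cauchy--Schwarz with $\sqrt{D_{\chi^2}(\pihat\|\pi)+1}$, whereas the paper applies its intermediate inequality $|\pihat g-\pi g|\le \epsilon\bigl(1-(1-\kappahat)^{1/2}\bigr)^{-1}\sqrt{\var_\pi g}$ to $g=(\Phat^k f-\pi\Phat^k f)^2$ --- the two are equivalent in substance, and both arguments even share the same harmless final-step looseness of producing a factor $(\var_\pi f)^{1/4}$ where $\sqrt{\var_\pi f}$ appears in the stated bound.
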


\subsection{Spectral gap with density ratio bounds}
In this section, we show that stronger results can be established if we can bound the ratio between the invariant densities $\pi$ and $\pihat$.
Such a bound is assessable if we have an explicit characterization of $\pihat$.
For example, in Bayesian inverse problems, $\pi(x)\propto p_0(x)\exp(-\frac12\|G(x)-y\|^2)$ while 
$\pihat(x)\propto p_0(x)\exp(-\frac12\|\widehat{G}(x)-y\|^2)$. In this case, a density ratio bound can be obtained if $\|G(x)-\widehat{G}(x)\|$ is bounded, which is practically feasible by using an accurate numerical approximation of $G$. 

\begin{thm} \label{th:gap_ratio}
Suppose $P$ and $\Phat$ are two reversible transition kernels with invariant densities $\pi$ and $\pihat$ respectively. 
We further assume $\pi(x)/\pihat(x)\in [(1+\epsilon)^{-1}, 1+\epsilon]$ and $\|P-\Phat\|_{\pi}\leq\epsilon$.
Then, there exists a universal constant $C$ such that
\[
\kappa(\Phat)\geq \kappa(P)-C\epsilon. 
\]
\end{thm}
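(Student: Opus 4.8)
The plan is to work directly with the variational characterization of the spectral gap of $\Phat$ taken with respect to its \emph{own} invariant measure $\pihat$, namely
\[
1-\kappa(\Phat)=\sup\left\{\frac{\|\Phat f-\pihat f\|_{\pihat}^2}{\var_{\pihat} f}:\ \var_{\pihat} f\neq 0\right\},
\]
and to show that every ratio on the right is at most $1-\kappa(P)+C\epsilon$. Fix a test function $f$. Since adding a constant to $f$ changes neither $\Phat f-\pihat f$ nor $\var_{\pihat}f$, I may assume without loss of generality that $\pi f=0$, so that $\var_\pi f=\|f\|_\pi^2$ and, because $\pi$ is invariant for $P$, also $\pi(Pf)=0$.

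Next I would split the numerator into three pieces via the triangle inequality in $\|\cdot\|_{\pihat}$,
\[
\|\Phat f-\pihat f\|_{\pihat}\leq \|Pf\|_{\pihat}+\|(\Phat-P)f\|_{\pihat}+|\pihat f|,
\]
and control each separately. The density-ratio bound $\pi/\pihat\in[(1+\epsilon)^{-1},1+\epsilon]$ lets me convert every $\pihat$-norm into a $\pi$-norm at the cost of a factor $(1+\epsilon)^{1/2}$, and conversely gives the variance comparison $\var_\pi f\leq(1+\epsilon)\var_{\pihat}f$ (take $c=\pihat f$ in $\var_\pi f=\min_c\|f-c\|_\pi^2$). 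For the first piece, the spectral gap of $P$ gives $\|Pf\|_\pi^2=\|Pf-\pi f\|_\pi^2\leq(1-\kappa(P))\var_\pi f$. For the second piece, the hypothesis $\|P-\Phat\|_\pi\leq\epsilon$ gives $\|(\Phat-P)f\|_\pi\leq\epsilon\|f\|_\pi=\epsilon\sqrt{\var_\pi f}$.

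The delicate piece is the centering term $|\pihat f|$, and I expect this to be where the argument must be handled carefully: a crude bound would only give $|\pihat f|=O(1)$, which is useless. The key observation is that the normalization $\pi f=0$ upgrades this to first order in $\epsilon$: writing $\pihat f=\int f(\pihat/\pi-1)\,\pi$ and using $|\pihat/\pi-1|\leq\epsilon$ together with Cauchy--Schwarz yields $|\pihat f|\leq\epsilon\int|f|\,\pi\leq\epsilon\sqrt{\var_\pi f}$. Combining the three bounds, converting $\var_\pi f$ back to $\var_{\pihat}f$, and writing $A=\sqrt{1-\kappa(P)}\leq 1$, I obtain
\[
\frac{\|\Phat f-\pihat f\|_{\pihat}}{\sqrt{\var_{\pihat}f}}\leq (1+\epsilon)A+C'\epsilon
\]
for a universal constant $C'$. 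Squaring and expanding, the cross term and the $\epsilon^2$ terms are all $O(\epsilon)$ precisely because $A\leq 1$, so $(1+\epsilon)^2A^2+O(\epsilon)\leq A^2+C\epsilon=1-\kappa(P)+C\epsilon$. Taking the supremum over $f$ gives $1-\kappa(\Phat)\leq 1-\kappa(P)+C\epsilon$, which is the claimed bound $\kappa(\Phat)\geq\kappa(P)-C\epsilon$. Besides the $O(\epsilon)$ bookkeeping, the only genuine subtleties are checking that the reduction $\pi f=0$ is legitimate and that the operator-norm hypothesis is applied on all of $L^2(\pi)$ rather than merely on mean-zero functions; both are immediate from the definitions in the excerpt.
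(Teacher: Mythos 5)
Your proof is correct, but it takes a genuinely different route from the paper's. The paper works with the Dirichlet form: it invokes the alternative characterization $\kappa(\Phat)=\inf_f \langle f,(I-\Phat^2)f\rangle_{\pihat}/\var_{\pihat}f$ and perturbs this quadratic form in two separate steps, a measure change (bounding $|\langle f,(I-P^2)f\rangle_{\pihat}-\langle f,(I-P^2)f\rangle_{\pi}|\leq\epsilon\|f\|_\pi^2$ via the density ratio) and a kernel change (bounding $|\langle f,(\Phat^2-P^2)f\rangle_{\pihat}|\leq C\epsilon\|f\|_\pi^2$, which requires expanding $\Phat^2-P^2=(\Phat-P)P+\Phat(\Phat-P)$-type terms and using $\|P\|_\pi\leq 1$). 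You instead stay with the defining ratio $\|\Phat f-\pihat f\|_{\pihat}^2/\var_{\pihat}f$ and split $\Phat f-\pihat f$ by the triangle inequality into $Pf$, $(\Phat-P)f$, and the scalar $\pihat f$. The trade-off is instructive: the Dirichlet-form route annihilates constants automatically (both $I-P^2$ and $I-\Phat^2$ kill them, and invariance makes the forms centering-invariant), so the paper never has to face a centering term; your route must control $|\pihat f|$, and you correctly identify this as the delicate step and resolve it with the normalization $\pi f=0$ plus $|\pihat/\pi-1|\leq\epsilon$, giving $|\pihat f|\leq\epsilon\sqrt{\var_\pi f}$. In exchange, your argument never touches the squared kernel, so it is arguably more elementary (no bound on $\|(\Phat-P)Pf\|_\pi$ or $\|(\Phat-P)^2f\|_\pi$ is needed). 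One point your route uses that the paper's does not: when you square $(1+\epsilon)A+C'\epsilon$ and absorb the cross terms into $A^2+C\epsilon$, you need $A=\sqrt{1-\kappa(P)}\leq 1$, i.e., $\kappa(P)\geq 0$; this is guaranteed for reversible irreducible $P$ (Lemma C.1 of the paper gives $\|P\|_\pi\leq 1$), but it is worth stating. Both arguments, like the paper's, implicitly require $\epsilon$ bounded (say $\epsilon\leq 1$) for the constant $C$ to be universal; for larger $\epsilon$ the claim is vacuous once $C\geq 1$.
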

Based on the spectral gap, we have the following non-asymptotic Monte Carlo error bound.
\begin{proposition}
\label{pro:gapproof}
Suppose $(\Xhat_n,\Phat)$ has a spectral gap $\kappahat$. Suppose the initial distribution is $\nu$, i.e., $\Xhat_0\sim \nu$. Then, 
\[
\left|\E f(\Xhat_n) -\pihat f\right|^2\leq  (1-\kappahat)^{n}\var_{\pihat}f(D_{\chi^2}(\nu\|\pihat)+1).
\]
In addition, if $\nu=\pihat$, 
\[
\E_{\pihat}[(\fhat_M-\pihat f)^2]\leq \frac{C}{M(1-(1-\kappahat)^{1/4})}\sqrt{\var_{\pihat}f ~ \var_\pi f},
\]
where  $\fhat_M=\frac{1}{M}\sum_{k=1}^M f(\Xhat_k)$.
\end{proposition}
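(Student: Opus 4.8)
The plan is to treat the two assertions separately, deriving both from the spectral-gap contraction of $\Phat$ in $L^2(\pihat)$ together with Cauchy--Schwarz. Throughout I use that a spectral gap $\kappahat$ means, by iterating \eqref{eq:gap} applied to $\Phat$ and $\pihat$, that $\|\Phat^n h-\pihat h\|_{\pihat}^2\leq (1-\kappahat)^n\|h-\pihat h\|_{\pihat}^2$ for every $h\in L^2(\pihat)$, and that $\Phat$ is reversible with respect to $\pihat$.

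For the first (bias) bound, write $\E f(\Xhat_n)=\nu\Phat^n f$ and set $g=\Phat^n f-\pihat f$. Since $\pihat$ is invariant, $\pihat g=\pihat\Phat^n f-\pihat f=0$, so
\[
\nu\Phat^n f-\pihat f=\int g\,d\nu=\Big\langle g,\tfrac{d\nu}{d\pihat}\Big\rangle_{\pihat}.
\]
Cauchy--Schwarz in $L^2(\pihat)$ then gives $|\nu\Phat^n f-\pihat f|^2\leq \|g\|_{\pihat}^2\,\big\|\tfrac{d\nu}{d\pihat}\big\|_{\pihat}^2$. The first factor is controlled by the spectral gap, $\|g\|_{\pihat}^2=\|\Phat^n f-\pihat f\|_{\pihat}^2\leq (1-\kappahat)^n\var_{\pihat}f$, while the second is exactly $\big\|\tfrac{d\nu}{d\pihat}\big\|_{\pihat}^2=\int \nu(x)^2/\pihat(x)\,dx=D_{\chi^2}(\nu\|\pihat)+1$. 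Multiplying the two yields the claimed inequality, and no further estimate is needed.

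For the variance bound I would take $\nu=\pihat$, center $g=f-\pihat f$, and expand
\[
\E_{\pihat}\big[(\fhat_M-\pihat f)^2\big]=\frac{1}{M^2}\sum_{k,l=1}^M \cov_{\pihat}\big(f(\Xhat_k),f(\Xhat_l)\big)=\frac{1}{M^2}\sum_{k,l=1}^M\big\langle g,\Phat^{|k-l|}g\big\rangle_{\pihat},
\]
where the last identity uses stationarity (the chain starts from $\pihat$) and the Markov property to write the lag-$m$ autocovariance as $\langle g,\Phat^m g\rangle_{\pihat}$. By Cauchy--Schwarz and the contraction $\|\Phat^m g\|_{\pihat}\leq (1-\kappahat)^{m/2}\|g\|_{\pihat}$, each autocovariance is bounded by $(1-\kappahat)^{m/2}\var_{\pihat}f$. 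Summing the resulting geometric series over the band $|k-l|=m$ gives $\sum_{k,l}(1-\kappahat)^{|k-l|/2}\leq 2M/(1-(1-\kappahat)^{1/2})$, hence a clean bound of order $\var_{\pihat}f\big/\big(M(1-(1-\kappahat)^{1/2})\big)$.

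The remaining work, which I expect to be the main obstacle, is purely bookkeeping to match the stated form: replacing the decay factor via the elementary inequality $1-(1-\kappahat)^{1/2}\geq 1-(1-\kappahat)^{1/4}$, and bounding $\var_{\pihat}f$ by $C\sqrt{\var_{\pihat}f\,\var_\pi f}$. The latter step is where care is required: it needs a comparison between the two invariant measures, which is exactly the density-ratio regime of Theorem \ref{th:gap_ratio} in which this proposition is applied, so that $\var_{\pihat}f$ and $\var_\pi f$ are equivalent up to a universal constant absorbed into $C$; alternatively one can follow verbatim the interpolation argument already used for Proposition \ref{prop:app_gap}, whose second bound is identical. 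Apart from this constant-chasing, both inequalities follow directly from the spectral-gap contraction and Cauchy--Schwarz.
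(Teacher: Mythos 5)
Your proof is correct, and for the bias bound it is essentially the paper's argument: write $\E f(\Xhat_n)-\pihat f=\nu\Phat^n f-\pihat\Phat^n f$ using invariance, apply Cauchy--Schwarz against the density ratio $\nu/\pihat$ in $L^2(\pihat)$, and invoke the spectral-gap contraction; your inner-product phrasing and the paper's explicit integral are the same computation. For the variance bound your route is genuinely (if mildly) different, and in fact cleaner: you expand $\E_{\pihat}[(\fhat_M-\pihat f)^2]$ into signed autocovariances $\langle g,\Phat^{|k-l|}g\rangle_{\pihat}$ with $g=f-\pihat f$ and bound each one by $\|g\|_{\pihat}\,\|\Phat^m g\|_{\pihat}\leq (1-\kappahat)^{m/2}\var_{\pihat}f$. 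The paper instead first bounds the double sum by $2\sum_j|f(\Xhat_j)|\sum_{k\geq 0}|f(\Xhat_{j+k})|$ and then applies Cauchy--Schwarz to $\E_{\pihat}[|f(\Xhat_0)||f(\Xhat_k)|]$, writing the decaying factor as $\sqrt{\E_{\pihat}[(\Phat^k f(\Xhat_0))^2]}$; that step is delicate, since conditioning $|f(\Xhat_k)|$ on $\Xhat_0$ produces $\Phat^k|f|$, not $\Phat^k f$, and $\Phat^k|f|$ does not decay because $|f|$ is not centered. Your covariance formulation sidesteps this entirely, and both arguments land on the same bound $2\var_{\pihat}f/\bigl(M(1-(1-\kappahat)^{1/2})\bigr)$. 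You are also right to flag the final bookkeeping as the genuine obstacle, but it is an obstacle in the statement rather than in your proof: replacing the exponent $1/2$ by $1/4$ only weakens the bound and is harmless, whereas turning $\var_{\pihat}f$ into $\sqrt{\var_{\pihat}f\,\var_\pi f}$ cannot follow from the proposition's hypotheses, which never relate $\pihat$ to $\pi$. The paper's own proof does exactly what you did and stops at the $\var_{\pihat}f$ bound, never reconciling it with the stated form (which appears to be carried over from Proposition \ref{prop:app_gap}); so the bound you prove is the one the paper actually establishes.
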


Before we conclude our discussion of the spectral gap framework, we remark that even though the condition $\|P-\Phat\|_\pi\leq \epsilon$ is reasonable  for the spectral gap analysis, it can be hard to verify directly in some applications. 
To remedy this issue, 
 the next proposition shows that we can bound $\|P-\Phat\|_\pi$ through a bound for 
$\|\delta_x P - \delta_x\Phat\|_{TV}$, which can be easier to obtain using coupling tools. 

\begin{prop} \label{prop:pert}
Suppose there exists a $\pi$-measurable function $V:\Omega\rightarrow[1,\infty)$ such that
$\|\delta_x P - \delta_x\Phat\|_{TV}\leq \epsilon V(x)$.
In addition, suppose $\frac1a\leq \pi(x)/\hat\pi(x)\leq a$ for some constant $a>0$.
Then, 
\[
\|P-\Phat\|_{\pi}\leq \sqrt{2(1+a^2)}\sqrt{\epsilon}\|V\|_{\pi}^{1/2}.
\]
\end{prop}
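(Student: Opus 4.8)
The plan is to estimate the operator norm by fixing an arbitrary $f\in L^2(\pi)$ and bounding $\|(P-\Phat)f\|_\pi$. Write $\eta_x=\delta_x P-\delta_x\Phat$ for the signed measure capturing the one-step discrepancy at $x$, so that $(P-\Phat)f(x)=\int f\,d\eta_x$ and, by the variational characterization of total variation together with the hypothesis, its total mass satisfies $|\eta_x|(\Omega)=\|\delta_x P-\delta_x\Phat\|_{TV}\le\epsilon V(x)$. Since $\eta_x$ is a zero-mass signed measure, a pointwise Cauchy--Schwarz inequality against the positive measure $|\eta_x|$ gives $\big((P-\Phat)f(x)\big)^2\le|\eta_x|(\Omega)\int f^2\,d|\eta_x|\le\epsilon\,V(x)\int f^2\,d|\eta_x|$. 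This is the step that converts the $TV$-closeness into a usable pointwise bound.

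Next I would eliminate the signed measure in favour of the two Markov kernels via the elementary domination $|\eta_x|\le\delta_x P+\delta_x\Phat$, which yields $\int f^2\,d|\eta_x|\le Pf^2(x)+\Phat f^2(x)$. Integrating the pointwise bound against $\pi$ then gives
\[
\|(P-\Phat)f\|_\pi^2\le\epsilon\int V(x)\big(Pf^2(x)+\Phat f^2(x)\big)\,\pi(dx).
\]
A Cauchy--Schwarz inequality in $L^2(\pi)$ applied to this $x$-integral peels off the weight $V$, producing $\|V\|_\pi$ times $\|Pf^2+\Phat f^2\|_\pi$, and a further $\sqrt2$-split separates the two kernels.

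It remains to bound $\|Pf^2\|_\pi$ and $\|\Phat f^2\|_\pi$. Here $P$ is a contraction on $L^2(\pi)$ because $\pi$ is its invariant measure, so $\|Pf^2\|_\pi\le\|f^2\|_\pi$ directly. The kernel $\Phat$, however, is reversible and contractive only on $L^2(\pihat)$; this is where the density-ratio hypothesis is used, converting the $\pi$-norm to a $\pihat$-norm, applying the $\pihat$-contraction, and converting back, each conversion being controlled through $\pi/\pihat\in[1/a,a]$ so that $\|\Phat f^2\|_\pi\le a\|f^2\|_\pi$. Combining the two contributions gives the constant $\sqrt{2(1+a^2)}$, and assembling all the estimates produces the asserted bound $\|P-\Phat\|_\pi\le\sqrt{2(1+a^2)}\,\sqrt\epsilon\,\|V\|_\pi^{1/2}$.

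The main obstacle is the mismatch of reference measures in the $\Phat$-term: because $\Phat$ is only self-adjoint and contractive with respect to its own invariant measure $\pihat$, the density-ratio bound must be invoked carefully to transport the estimate back to $L^2(\pi)$, and it is precisely this transport that generates the factor $a$ and the resulting $(1+a^2)$. A related delicate point is that the pointwise Cauchy--Schwarz step inevitably replaces $f$ by $f^2$, so the control of the right-hand side must be routed entirely through the contractivity of the Markov operators and the density ratio; keeping the final estimate of the correct homogeneity in $f$ and matching the stated powers of $\epsilon$ and $\|V\|_\pi$ is where the bookkeeping has to be done with care.
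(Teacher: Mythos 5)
Your argument is sound up to and including the estimate $\|(P-\Phat)f\|_\pi^2\le\epsilon\int V(x)\bigl(Pf^2(x)+\Phat f^2(x)\bigr)\pi(dx)$, and the $L^2(\pi)$ Cauchy--Schwarz that peels off $V$ is also legitimate; the genuine gap is in the final assembly. After peeling, the $f$-dependence sits in $\|Pf^2+\Phat f^2\|_\pi\le(1+a)\|f^2\|_\pi$, and $\|f^2\|_\pi=\bigl(\int f^4\,d\pi\bigr)^{1/2}$ is the squared $L^4(\pi)$ norm of $f$, \emph{not} $\|f\|_\pi^2=\int f^2\,d\pi$. Since $\|f^2\|_\pi\ge\|f\|_\pi^2$ for a probability measure (with equality only for constant $f^2$, and $\|f^2\|_\pi=+\infty$ for $f\in L^2(\pi)\setminus L^4(\pi)$), what you end with is an $L^4(\pi)\to L^2(\pi)$ estimate; it does not control $\sup_f\|(P-\Phat)f\|_\pi/\|f\|_\pi$ over $f\in L^2(\pi)$, which is how $\|P-\Phat\|_\pi$ is defined. (Your constant also comes out as $1+a$ rather than $2(1+a^2)$.) The paper's proof diverges exactly here: it never forms $\|Pf^2\|_\pi$ or $\|\Phat f^2\|_\pi$, but instead arranges for $Pf^2$ and $\Phat f^2$ to be paired with $\pi$ in $L^1$, where invariance applies: $\pi Pf^2=\pi f^2=\|f\|_\pi^2$ and $\pi\Phat f^2\le a\,\pihat\Phat f^2=a\,\pihat f^2\le a^2\|f\|_\pi^2$. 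This $L^1$-pairing through the two invariant measures is the idea your argument is missing, and it is also where the constant $1+a^2$ comes from.

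That said, the obstruction you ran into is real and cannot be fixed by better bookkeeping. To reach the $L^1$-pairing, the paper must decouple the weight $B(x)=\int Q_x 1_{x'\neq y'}\le\epsilon V(x)$ from $A(x)=\int Q_x\,(f(x')-f(y'))^2$, and its displayed step, $\int\pi(dx)A(x)B(x)\le\bigl(\int\pi A\bigr)\bigl(\int\pi B\bigr)$, is not a valid application of Cauchy--Schwarz (it amounts to asserting that $A$ and $B$ are nonpositively correlated under $\pi$). In fact the proposition fails as stated when $V$ is only controlled in $L^2(\pi)$: let $\pi$ have two atoms $s_1,s_2$ of mass $\delta/2$ each, let $P$ swap $s_1\leftrightarrow s_2$ and act as the identity elsewhere, and let $\Phat$ be the identity, so both are reversible with invariant measure $\pi$ and $a=1$; the hypothesis holds with $V=2/\epsilon$ on $\{s_1,s_2\}$ and $V=1$ elsewhere, and choosing $\delta=\epsilon^4$ gives $\|V\|_\pi=O(1)$, so the claimed bound is $\approx 2\sqrt{\epsilon}$, while testing on $f=1_{s_1}-1_{s_2}$ gives $\|P-\Phat\|_\pi\ge 2$ (mixing both kernels with a small multiple of the independence kernel $\Pi(x,\cdot)=\pi(\cdot)$ makes them irreducible without changing anything). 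What is true, and what your own set-up proves if run as a bilinear estimate, is the statement with $\|V\|_\infty$ in place of $\|V\|_\pi$: from $|\langle g,(P-\Phat)f\rangle_\pi|\le\bigl(\epsilon\int\pi g^2V\bigr)^{1/2}\bigl(\pi Pf^2+\pi\Phat f^2\bigr)^{1/2}\le\sqrt{(1+a^2)\,\epsilon\,\|V\|_\infty}\,\|g\|_\pi\|f\|_\pi$ one gets $\|P-\Phat\|_\pi\le\sqrt{(1+a^2)\,\epsilon\,\|V\|_\infty}$. So: your proof has a genuine gap at its last step, but the corresponding step in the paper's proof is also flawed, and the $\|V\|_\pi$ version of the conclusion cannot be rescued; the honest repair is to strengthen the assumption on $V$ (boundedness, or $\epsilon V\le$ const with the $L^\infty$ norm appearing in the bound), at which point your approach goes through.
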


\section{Application: Metropolis Hasting MCMC on perturbed densities} \label{sec:Metro}
Random walk Metropolis (RWM) and Metropolis adjusted Langevin algorithm (MALA) are two popular MCMC samplers when it comes to sampling a generic density $\pi$. 
Many existing works have already studied their spectral gap under suitable conditions on $\pi$ \cite{roberts1996exponential,hairer2014spectral,dwivedi2018log}. 
When implementing these samplers, it is often the case that we only have access to an approximation of $\pi$, which we denote as $\pihat$.
In this section, we will demonstrate how to apply our analysis framework to establish proper bounds for the spectral gap of the ``numerical" RWM and MALA. 

In fact, we can develop some general results for Metropolis Hasting (MH) type of Monte Carlo algorithm. Assume the proposals are given by some smooth transition density $R(x,x')$. Due to the possibility of rejection, MH Monte Carlo transition densities can be written as  $P(x,x')=\alpha(x)\delta_x(x')+\beta(x,x')$ with
\begin{equation}
\label{eq:MH}
 \beta(x,x')=\min\left\{\frac{\pi(x')R(x',x)}{\pi(x)}, R(x,x')\right\},\quad \alpha(x)=1-\int \beta(x,x')dx'.
\end{equation}
The perturbed transition density can be written as $\Phat(x,x')=\alphahat(x)\delta_x(x')+\betahat(x,x')$.
We provide some sufficient conditions under which the difference between $P$ and $\Phat$ is of order $\epsilon$.
\begin{lem}
\label{lem:MHtransition}
If the transition density is of the form  $P(x,x')=\alpha(x)\delta_x(x')+\beta(x,x') $ with $\nu(x) P(x,x')=\nu(x') P(x',x)$, suppose  
$\Phat(x,x')=\hat{\alpha}(x)\delta_x(x')+\hat{\beta}(x,x')$ with 
\[
|\hat{\alpha}(x)-\alpha(x)|\leq C\epsilon ~\mbox{ and }~ (1-C\epsilon )\beta(x,x')\leq \hat{\beta}(x,x')\leq (1+C\epsilon) \beta(x,x').
\]
for some constant $C\in(0,\infty)$. Then, there exists a constant $C_1\in(0,\infty)$ such that  $\|P-\Phat\|_\nu\leq C_1\epsilon.$
\end{lem}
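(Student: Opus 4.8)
The plan is to estimate the operator norm $\|P-\Phat\|_\nu$ directly from its definition by splitting $P-\Phat$ into a diagonal (rejection) piece and an off-diagonal (proposal) piece. For a test function $f$ with $0<\|f\|_\nu<\infty$, applying the two kernels gives
\[
(P-\Phat)f(x)=\big(\alpha(x)-\hat\alpha(x)\big)f(x)+\int\big(\beta(x,x')-\hat\beta(x,x')\big)f(x')\,dx',
\]
where the delta parts contribute only the first term. Writing $\Delta\alpha=\alpha-\hat\alpha$ and $\Delta\beta=\beta-\hat\beta$, the hypotheses supply the pointwise bounds $|\Delta\alpha(x)|\leq C\epsilon$ and $|\Delta\beta(x,x')|\leq C\epsilon\,\beta(x,x')$, the latter being a restatement of the two-sided inequality $(1-C\epsilon)\beta\leq\hat\beta\leq(1+C\epsilon)\beta$. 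I would then bound $\|(P-\Phat)f\|_\nu$ by the triangle inequality as the sum of the two corresponding contributions.

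The diagonal term is immediate: since $|\Delta\alpha(x)|\le C\epsilon$ uniformly, $\|\Delta\alpha\cdot f\|_\nu^2=\int(\Delta\alpha)^2 f^2\,d\nu\le (C\epsilon)^2\|f\|_\nu^2$, so $\|\Delta\alpha\cdot f\|_\nu\le C\epsilon\|f\|_\nu$. For the off-diagonal term, let $B$ be the integral operator with kernel $\Delta\beta$ and let $T$ be the integral operator with kernel $\beta$, that is $Tf(x)=\int\beta(x,x')f(x')\,dx'$. The domination $|\Delta\beta|\le C\epsilon\,\beta$ yields the pointwise estimate $|Bf(x)|\le C\epsilon\,(T|f|)(x)$ for every $x$, whence $\|Bf\|_\nu\le C\epsilon\,\|\,T|f|\,\|_\nu\le C\epsilon\,\|T\|_\nu\,\|f\|_\nu$, using $\||f|\|_\nu=\|f\|_\nu$. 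Everything thus reduces to showing that $\|T\|_\nu$ is a finite constant independent of $\epsilon$.

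The key observation — and the step I expect to carry the real content — is that $T$ is exactly the off-diagonal part of $P$, namely $T=P-A$, where $A$ is multiplication by $\alpha$. Since $0\le\alpha(x)=1-\int\beta(x,x')\,dx'\le1$, the multiplication operator satisfies $\|A\|_\nu\le1$. For $P$ itself, the reversibility hypothesis $\nu(x)P(x,x')=\nu(x')P(x',x)$ makes $\nu$ an invariant measure for $P$, so Jensen's inequality applied to the probability kernel $P(x,\cdot)$ gives $\|Pf\|_\nu^2\le\int P(f^2)\,d\nu=\int f^2\,d\nu$, i.e. $\|P\|_\nu\le1$. Therefore $\|T\|_\nu\le\|P\|_\nu+\|A\|_\nu\le2$. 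Combining the two contributions yields
\[
\|(P-\Phat)f\|_\nu\le C\epsilon\|f\|_\nu+2C\epsilon\|f\|_\nu=3C\epsilon\|f\|_\nu,
\]
so the claim holds with $C_1=3C$. The only genuine subtlety is the uniform bound on $\|T\|_\nu$; once $T$ is identified with $P-A$ and the contractivity of the Markov operator $P$ on $L^2(\nu)$ is invoked, the remaining estimates are routine.
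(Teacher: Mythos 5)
Your proof is correct, but it executes the estimate differently from the paper, so a comparison is worth recording. The paper's proof works by duality: it bounds the pairing $|\mu(P-\Phat)f|$ for an arbitrary measure $\mu(dx)=s(x)\nu(dx)$, dominating the off-diagonal difference kernel by $C\epsilon\,\beta(x,x')\leq C\epsilon\,P(x,x')$ inside the double integral, invoking its Lemma on $\|P\|_\nu\leq 1$ (proved there by a symmetrization identity using reversibility), and then closing the argument with a bootstrap: choosing $s\propto|(P-\Phat)f|$ yields $\|(P-\Phat)f\|_\nu^2\leq 2C\epsilon\|(P-\Phat)f\|_\nu\|f\|_\nu$, hence the claim with constant $2C$. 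You avoid both the duality pairing and the bootstrap: you bound $|(P-\Phat)f(x)|$ pointwise by $C\epsilon|f(x)|+C\epsilon\,T|f|(x)$ and control the operator $T$ with kernel $\beta$ via the decomposition $T=P-A$ ($A$ = multiplication by $\alpha$), together with $\|P\|_\nu\leq 1$, which you obtain from Jensen's inequality and invariance of $\nu$ (a slightly weaker hypothesis than reversibility, and arguably a cleaner proof than the paper's symmetrization). Your route is more direct and elementary; what the paper's pairing-based template buys is reusability --- essentially the same argument is recycled verbatim for the swap kernel in its Lemma on parallel tempering, where the kernel involves a deterministic map rather than a density. Two minor remarks: your constant $3C$ can be improved to the paper's $2C$ by noting the pointwise domination $T|f|(x)\leq P|f|(x)$ (since $P|f|=\alpha|f|+T|f|$ and $\alpha\geq 0$), which makes the detour through $\|T\|_\nu\leq\|P\|_\nu+\|A\|_\nu$ unnecessary; and the nonnegativity of $\beta$, which your domination $|\beta-\hat\beta|\leq C\epsilon\,\beta$ silently requires, is in fact forced by the two-sided hypothesis $(1-C\epsilon)\beta\leq\hat\beta\leq(1+C\epsilon)\beta$, so no gap arises.
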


%
%
%
%
\subsection{Random walk Metropolis}
RWM considers implementing the MH procedure on random walk proposals. That is, we use 
\[
R(x,x')=\frac{1}{\sqrt{2\pi h }^{d}}\exp\left(-\frac1{4h}\|x'-x\|^2\right)
\]
in \eqref{eq:MH}. It is worth noting that using a perturbed density $\pihat$ does not affect this proposal. 
\begin{prop}
\label{prop:l2MWN}
For RWM, if $\sup_x|\log\pi(x)-\log\hat\pi (x)|\leq C\epsilon$, there is a constant $C_1$ so that
\[
\|P_{RWM} - \Phat_{RWM}\|_{\pi}\leq C_1\epsilon.
\]
\end{prop}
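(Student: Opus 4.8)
The plan is to reduce the claim to an application of Lemma~\ref{lem:MHtransition} with $\nu=\pi$. This is legitimate because $P_{RWM}$ is reversible with respect to $\pi$ by the Metropolis--Hastings construction, so the reversibility hypothesis $\pi(x)P(x,x')=\pi(x')P(x',x)$ holds. The first simplification I would exploit is that the random walk proposal is symmetric, $R(x,x')=R(x',x)$, so that the off-diagonal transition densities collapse to
\[
\beta(x,x')=R(x,x')\min\left\{\frac{\pi(x')}{\pi(x)},1\right\},\qquad
\betahat(x,x')=R(x,x')\min\left\{\frac{\pihat(x')}{\pihat(x)},1\right\}.
\]
It then suffices to verify the two hypotheses of Lemma~\ref{lem:MHtransition}: the multiplicative closeness of $\betahat$ to $\beta$, and the pointwise closeness of the rejection masses $\alphahat$ and $\alpha$.

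The main work is the multiplicative bound on $\betahat/\beta$. Setting $r=\pi(x')/\pi(x)$ and $\hat r=\pihat(x')/\pihat(x)$, I would first turn the uniform log-density assumption into a bound on the ratio of ratios: since
\[
\left|\log\hat r-\log r\right|\le |\log\pihat(x')-\log\pi(x')|+|\log\pihat(x)-\log\pi(x)|\le 2C\epsilon,
\]
we get $\hat r\in[e^{-2C\epsilon}r,\,e^{2C\epsilon}r]$. The next step is a short case analysis, splitting on whether each of $r,\hat r$ lies below or above $1$, to propagate this through the minimum and conclude that $e^{-2C\epsilon}\le \min\{\hat r,1\}/\min\{r,1\}\le e^{2C\epsilon}$ in every case. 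Linearizing $e^{\pm 2C\epsilon}=1\pm O(\epsilon)$ for small $\epsilon$ then yields $(1-C'\epsilon)\beta\le\betahat\le(1+C'\epsilon)\beta$, which is exactly the form required by the lemma.

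Finally I would handle the rejection masses. Because the previous step also gives the pointwise bound $|\betahat-\beta|\le C'\epsilon\,\beta$, integrating in $x'$ yields
\[
|\alphahat(x)-\alpha(x)|=\left|\int\big(\betahat(x,x')-\beta(x,x')\big)\,dx'\right|\le C'\epsilon\int\beta(x,x')\,dx'\le C'\epsilon,
\]
where I use $\int\beta(x,x')\,dx'=1-\alpha(x)\in[0,1]$. With both hypotheses confirmed, Lemma~\ref{lem:MHtransition} immediately delivers $\|P_{RWM}-\Phat_{RWM}\|_\pi\le C_1\epsilon$. I expect the only slightly delicate point to be the case analysis for the ratio of the two minima; everything else is a direct substitution into the lemma, so I do not anticipate any genuine obstacle here.
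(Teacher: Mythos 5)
Your proposal is correct and follows essentially the same route as the paper: both verify the two hypotheses of Lemma~\ref{lem:MHtransition} with $\nu=\pi$ by converting the uniform log-density bound into a multiplicative bound $e^{-2C\epsilon}\le \hat b/b\le e^{2C\epsilon}$ on the acceptance probabilities (your four-way case analysis is just an unpacking of the paper's one-line inequality $\min\{a/b,c/d\}\le \frac{a\wedge c}{b\wedge d}\le\max\{a/b,c/d\}$), and then bound $|\alphahat-\alpha|$ by integrating the pointwise closeness of the off-diagonal densities. No gaps; the argument is sound.
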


If the original RWM has a spectral gap and $\sup_x|\log\pi(x)-\log\hat\pi (x)|\leq C\epsilon$, 
then Proposition \ref{prop:l2MWN} together with Theorem \ref{th:gap_ratio} implies that the perturbed
RWM has a proper spectral gap as well.

\subsection{Metropolis adjusted Langevin algorithm}
MALA considers implementing the MH procedure on  proposals following the Langevin diffusion. That is, we use 
\[
R(x,x')=\frac{1}{\sqrt{4\pi h }^{d}}\exp\left(-\frac1{4h}\|x'-x-h\nabla \log \pi(x)\|^2\right)
\]
in \eqref{eq:MH}. Using a perturbed density $\pihat$ does change this proposal. We discuss the perturbation in two separate cases. 
In particular, we shall verify that the condition $\|P_{MALA}-\Phat_{MALA}\|_{\pi}\leq \epsilon$ holds under appropriate assumptions on $\pihat$ in the two cases.
Then, if $P_{MALA}$ has a spectral gap, the numerical sampler $\Phat_{MALA}$ has a proper spectral gap as well.

\subsubsection{Bounded domain}
When the support of $\pi$ and $\pihat$ are bounded, the analysis is quite straight forward with Lemma \ref{lem:MHtransition}.
\begin{prop}
\label{prop:l2MALA}
For MALA, if $\sup_x|\log\pi(x)-\log\hat\pi(x)|\leq C\epsilon$, $\sup_x\|\nabla\log\pi(x)-\nabla\log\hat\pi(x)\|\leq C\epsilon$, and the support of $\pi$ and $\pihat$ are bounded, then 
\[
\|P_{MALA}-\Phat_{MALA}\|_{\pi}=O(\epsilon).
\]
\end{prop}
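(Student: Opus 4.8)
The plan is to verify the two hypotheses of Lemma \ref{lem:MHtransition} with $\nu=\pi$, so that the conclusion $\|P_{MALA}-\Phat_{MALA}\|_\pi = O(\epsilon)$ follows at once. Reversibility $\pi(x)P_{MALA}(x,x')=\pi(x')P_{MALA}(x',x)$ is automatic from the Metropolis--Hastings construction, so it remains to establish (i) the multiplicative bound $(1-C\epsilon)\beta(x,x')\le\betahat(x,x')\le(1+C\epsilon)\beta(x,x')$ and (ii) the additive bound $|\alphahat(x)-\alpha(x)|\le C\epsilon$. Here $\beta$ is built from $\pi$ and the proposal $R$, while $\betahat$ is built from $\pihat$ and the perturbed proposal $\Rhat$, which uses $\nabla\log\pihat$ in place of $\nabla\log\pi$ in the drift.

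First I would control the ratio of the two proposal densities. Writing $a=x'-x-h\nabla\log\pi(x)$ and $b=x'-x-h\nabla\log\pihat(x)$, we have
\[
\frac{\Rhat(x,x')}{R(x,x')}=\exp\!\left(-\frac{1}{4h}\big(\|b\|^2-\|a\|^2\big)\right),\qquad \|b\|^2-\|a\|^2=\langle b-a,\,b+a\rangle.
\]
Since $b-a=h(\nabla\log\pi(x)-\nabla\log\pihat(x))$ has norm $\le Ch\epsilon$ by assumption, while $\|b+a\|\le 2\|x'-x\|+h(\|\nabla\log\pi(x)\|+\|\nabla\log\pihat(x)\|)$ is bounded, the inner product is $O(h\epsilon)$ and the exponent is $O(\epsilon)$. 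The boundedness of $\|x'-x\|$ is precisely where the bounded-support hypothesis enters: where $x'$ lies outside the common support both $\beta$ and $\betahat$ vanish (the assumption $\sup_x|\log\pi-\log\pihat|\le C\epsilon$ forces $\pi$ and $\pihat$ to share support), so the estimate is only ever needed on the set where $x,x'$ both lie in the bounded support and $\|x'-x\|$ is at most its diameter. Hence $\Rhat(x,x')/R(x,x')\in[1-C\epsilon,1+C\epsilon]$, and by symmetry the same holds for $\Rhat(x',x)/R(x',x)$.

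Next I would combine this with the density-ratio estimate. From $\sup_x|\log\pi-\log\pihat|\le C\epsilon$ one gets $\log(\pihat(x')/\pihat(x))-\log(\pi(x')/\pi(x))=O(\epsilon)$, so $\pihat(x')/\pihat(x)=(\pi(x')/\pi(x))\,e^{O(\epsilon)}$. Therefore each of the two arguments of the minimum defining $\betahat$, namely $\pihat(x')\Rhat(x',x)/\pihat(x)$ and $\Rhat(x,x')$, equals $(1+O(\epsilon))$ times the corresponding argument of the minimum defining $\beta$. Since $\min$ of two quantities each scaled by a factor in $[1-C\epsilon,1+C\epsilon]$ stays within $[1-C\epsilon,1+C\epsilon]$ of the original minimum, this yields (i). Hypothesis (ii) then follows from
\[
|\alphahat(x)-\alpha(x)|=\Big|\int\big(\beta(x,x')-\betahat(x,x')\big)\,dx'\Big|\le C\epsilon\int\beta(x,x')\,dx'\le C\epsilon,
\]
and Lemma \ref{lem:MHtransition} with $\nu=\pi$ gives $\|P_{MALA}-\Phat_{MALA}\|_\pi=O(\epsilon)$.

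I expect the main obstacle to be the uniform control of the proposal-density ratio: the drift perturbation enters the Gaussian exponent divided by $h$, so a crude bound would lose a factor $1/h$. It is the algebraic identity $\|b\|^2-\|a\|^2=\langle b-a,b+a\rangle$, combined with the bounded domain (which bounds the cross term $\|b+a\|$), that recovers the clean $O(\epsilon)$ rate and absorbs the $1/h$. Care is also needed to confine every estimate to the common support, so that the ratios $\betahat/\beta$ and $\Rhat/R$ are well defined.
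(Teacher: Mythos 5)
Your proof is correct and takes essentially the same route as the paper: you verify the two hypotheses of Lemma \ref{lem:MHtransition} with $\nu=\pi$, bounding the proposal ratio $\Rhat/R$ via the identity $\|b\|^2-\|a\|^2=\langle b-a,b+a\rangle$ and the acceptance-probability perturbation using the two $\sup$-bounds together with the bounded support (both your argument and the paper's implicitly also use boundedness of $\nabla\log\pi$ on that support). Two minor points in your favor: your estimate $|\alphahat(x)-\alpha(x)|\le\int|\beta(x,x')-\betahat(x,x')|\,dx'\le C\epsilon\int\beta(x,x')\,dx'$ is cleaner than the paper's splitting $\int|R-\Rhat|(1-b)\,dx'+\int\Rhat|b-\hat b|\,dx'$ (whose first term tacitly invokes the $R$-ratio bound for $x'$ outside the support, where it fails), and your explicit confinement of all ratio estimates to the common support of $\pi$ and $\pihat$ makes precise a step the paper glosses over.
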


\subsubsection{Unbounded support}
When the support of the density is unbounded, directly bounding $\|P_{MALA}-\Phat_{MALA}\|_{\pi}$ becomes difficult. Instead, we 
consider establishing $\| \delta_xP- \delta_x\Phat\|_{TV}=O(\epsilon)$. 
\begin{prop}\label{prop:MALA_TV}
For MALA, if $\log \pi$ is Lipschitz, $\sup_x|\log\pi(x)-\log\hat\pi(x)|\leq L_\pi\epsilon$, and moreover $\sup_x\|\nabla\log\pi(x)-\nabla\log\pihat(x)\|\leq L_\pi\epsilon$, 
for any $\delta>0$, there exists $C_{\delta}\in(0,\infty)$, such that for $h<(\frac{5L_{\pi}}{\delta} + 20L_{\pi})^{-1}$,
\[\| \delta_xP- \delta_x\Phat\|_{TV}\leq C_{\delta}\epsilon\exp(\delta \|x\|^2).\]
\end{prop}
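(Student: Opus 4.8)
The plan is to bound the total variation distance by coupling the two MALA transitions started from the common point $x$, following the paper's hint that $\|\delta_xP-\delta_x\Phat\|_{TV}$ is ``easier to obtain using coupling tools.'' Writing the kernels as $P(x,\cdot)=\alpha(x)\delta_x+\beta(x,\cdot)$ and $\Phat(x,\cdot)=\alphahat(x)\delta_x+\betahat(x,\cdot)$, I would couple the two proposals $y\sim R(x,\cdot)$ and its perturbed analogue $\sim\Rhat(x,\cdot)$ by a maximal coupling, and then on the event that the proposals agree couple the two accept/reject steps with a common uniform random variable. Writing the acceptance probabilities as $a(x,y)=\min\{1,\rho(x,y)\}$ and $\hat a(x,y)=\min\{1,\rhohat(x,y)\}$, with $\rho=\pi(y)R(y,x)/(\pi(x)R(x,y))$ and $\rhohat$ its perturbed version, this coupling gives
\[
\|\delta_xP-\delta_x\Phat\|_{TV}\leq \|R(x,\cdot)-\Rhat(x,\cdot)\|_{TV}+\E_{y\sim R(x,\cdot)}\big[|a(x,y)-\hat a(x,y)|\big],
\]
so it suffices to bound the two terms on the right separately.

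The first term is easy. Here $R(x,\cdot)$ and $\Rhat(x,\cdot)$ are Gaussians with the common covariance $2hI$ and means differing only by $h(\nabla\log\pi(x)-\nabla\log\pihat(x))$, so their total variation is controlled by the mean separation in units of the standard deviation, giving a bound of order $\sqrt{h}\,\|\nabla\log\pi(x)-\nabla\log\pihat(x)\|\leq \sqrt{h}\,L_\pi\epsilon$, uniform in $x$. The acceptance term is the real work. Since $t\mapsto\min\{1,t\}$ is $1$-Lipschitz, one checks $|a-\hat a|\leq \min\{1,\rho(x,y)\}\,|e^{s(x,y)}-1|\leq \min\{1,\rho(x,y)\}\,|s(x,y)|\,e^{|s(x,y)|}$, where $s=\log(\rhohat/\rho)$.

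Expanding $s$ and completing the squares in the two Gaussian exponents, $s$ splits into three contributions: the target-density perturbations $\log\pihat-\log\pi$ evaluated at $x$ and $y$, bounded by $2L_\pi\epsilon$ uniformly; an inner-product term proportional to $(y-x)\cdot(\nabla\log\pihat-\nabla\log\pi)$ evaluated at $x$ and $y$, bounded by $\|y-x\|L_\pi\epsilon$ using $\sup_z\|\nabla\log\pi(z)-\nabla\log\pihat(z)\|\leq L_\pi\epsilon$; and the gradient-norm-squared differences multiplied by $h$, bounded by $hL_\pi\epsilon\,(\|\nabla\log\pi(x)\|+\|\nabla\log\pi(y)\|)$. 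Hence $|s(x,y)|\leq CL_\pi\epsilon\big(1+\|y-x\|+h(\|\nabla\log\pi(x)\|+\|\nabla\log\pi(y)\|)\big)$.

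Finally I would take the expectation over $y\sim R(x,\cdot)=\mathcal N(x+h\nabla\log\pi(x),2hI)$. The displacement $\|y-x\|$ is sub-Gaussian with scale $\sqrt h$, but its mean drift $h\nabla\log\pi(x)$, together with the gradient-magnitude terms in $s$, is the only channel through which $\|x\|$ can enter; for light-tailed targets these grow with $\|x\|$. Bounding $\E[(1+\|y-x\|)e^{c\|y-x\|}]$ by a Gaussian moment generating function produces factors of the form $\exp(O(\epsilon h\|x\|))$ and $\exp(O(h\|x\|^2))$, and the threshold $h<(\tfrac{5L_\pi}{\delta}+20L_\pi)^{-1}$ is precisely what keeps the resulting quadratic coefficient below $\delta$ and the exponential-moment integral convergent, while the elementary inequality $c\|x\|\leq \delta\|x\|^2+\tfrac{c^2}{4\delta}$ converts the remaining linear-in-$\|x\|$ exponent into $\delta\|x\|^2$ plus a $\delta$-dependent constant. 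Collecting both terms yields $\|\delta_xP-\delta_x\Phat\|_{TV}\leq C_\delta\epsilon\exp(\delta\|x\|^2)$. I expect the genuine obstacle to be exactly this last expectation: controlling $\E_{y\sim R(x,\cdot)}[e^{|s(x,y)|}]$ so that only a $\exp(\delta\|x\|^2)$ growth survives, and pushing the coefficient of $\|x\|^2$ below an arbitrary $\delta$, is what forces the smallness condition on $h$, and tracking how the proposal drift $h\nabla\log\pi(x)$ interacts with the $\|y-x\|$-linear part of $s$ is the delicate bookkeeping.
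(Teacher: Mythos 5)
Your proposal is correct, but it takes a genuinely different route from the paper's proof, and the comparison is instructive. The paper argues purely analytically: it writes the off-diagonal density as $\beta(x,y)=a(x,y)\wedge q(x,y)$ with $a(x,y)=\frac{\pi(y)}{\pi(x)}q(y,x)$ the \emph{untruncated} ratio, uses $|a\wedge q-\hat a\wedge\hat q|\le |a-\hat a|+|q-\hat q|$, and then must integrate $|a-\hat a|$ in $y$ --- a quantity containing the unbounded factor $\pi(y)/\pi(x)$. Controlling that factor is where all the work lies: a mean-value expansion $\log\pi(y)-\log\pi(x)=\langle\nabla\log\pi(w),y-x\rangle$, linear growth of the gradient (note the paper's proof actually invokes Lipschitzness of $\nabla\log\pi$, not of $\log\pi$ as in the statement), and Young's inequality yield an exponent of order $L_\pi\|y-x\|\,\|x\|-\frac{1}{8h}\|y-x\|^2$, and the hypothesis $h<(\frac{5L_{\pi}}{\delta}+20L_{\pi})^{-1}$ is precisely what keeps the coefficient of $\|y-x\|^2$ negative after splitting off $\delta\|x\|^2$. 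Your coupling argument sidesteps this entirely: with a maximal coupling of the proposals and a common uniform for the accept/reject step, the quantity you integrate is $|\min\{1,\rho\}-\min\{1,\rhohat\}|\le\min\{1,\rho\}\,|e^{s}-1|\le |s|e^{|s|}$, where $s=\log(\rhohat/\rho)$ is a pure perturbation term of size $O\bigl(\epsilon(1+\|y-x\|+h\|x\|)\bigr)$; the unbounded ratio $\pi(y)/\pi(x)$ never appears, and a Gaussian moment generating function finishes the estimate. This buys you a sharper bound --- of order $\epsilon(1+\|x\|)e^{C\epsilon h\|x\|}$, i.e., exponential of a \emph{linear} function of $\|x\|$ (with an $\epsilon$ prefactor in the exponent), which trivially implies $C_\delta\epsilon e^{\delta\|x\|^2}$ for every $\delta>0$. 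One correction to your closing paragraph: in your approach no $\exp(O(h\|x\|^2))$ factor ever arises, because every term in your exponent carries a factor of $\epsilon$; consequently the step-size threshold on $h$ is not needed by your argument at all (beyond, say, $h$ bounded) --- it is an artifact of the paper's method of integrating the untruncated ratio, not a feature of the coupling route. Finally, both proofs share the same implicit requirement that $\|\nabla\log\pi\|$ grow at most linearly (effectively, $\nabla\log\pi$ Lipschitz); under the literal hypothesis that $\log\pi$ itself is Lipschitz, the gradients are bounded and your argument would even give a bound uniform in $x$.
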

When $\pi(x)$ is sub-Gaussian, we can find a $\delta>0$ such that $V(x)=\exp(\delta \|x\|^2)$ is $L_2$-integrable under $\pi$. 
Then Proposition \ref{prop:pert} indicates that $\|P- \Phat\|_{\pi}=O(\sqrt{\epsilon})$.

\section{Application: Parallel Tempering with Perturbed Densities} \label{sec:RE}

In this section, we demonstrate how to apply our framework to parallel tempering (PT) algorithms \cite{earl2005,tawn2019accelerating,tawn2020weight}. These algorithms are also referred to as the replica exchange methods \cite{sugita1999replica,dupuis2012infinite,dong2021replica}. Compare with regular MCMC samplers like RWM and MALA, PT tries to sample a multiple tempered version of the target density. Such design can improve the convergence rate on densities with multiple isolated modes. 

To implement PT,  a sequence of distributions $\pi_0,\ldots, \pi_K$ are considered where  the last one is the target density $\pi_K =\pi$. The first density $\pi_0$ is usually a distribution that is easy to draw samples from. The intermediate distributions, $\pi_k$'s $1\leq k\leq K-1$, are set up so that the two neighboring densities are similar to each other.  A common choice for the intermediate distributions is to consider interpolations between $\pi_K$ and $\pi_0$:
\[
\pi_k(x)\propto\pi^{\beta_k}(x)\pi_0^{1-\beta_k}(x),
\]
where $0=\beta_0<\beta_1<\ldots <\beta_K=1$ is a sequence of parameters. PT intends to generate samples from the product density 
\[
\Pi=\pi_0\times \pi_1\times\cdots\times \pi_K \mbox{ on $\reals^{d(K+1)}$.}
\] 
To do so, its iterations consist of $K+1$ parts, i.e.,  $X_n=(X^0_{n},\ldots,X^K_{n})$, and the updating rule is given by  the following two steps.
\begin{enumerate}
\item Updating each $X^k_{n}$ to $X^{k}_{n+1}$ according to a transition kernel $M_k$, whose stationary distribution is $\pi_k$.
In practice, $M_k$ is often taken as the transition kernel obtained by repeating RWM or MALA update for $t_k$ steps. That is $M_k=P_{RWM}^{t_k}$ or  $M_k=P_{MALA}^{t_k}$. 
\item Pick an index $k\in \{0,\ldots,K-1\}$ uniformly at random and swap the values of $X^k_{n+1}$ and $X^{k+1}_{n+1}$ with probability $\alpha_k(X_{n+1}^k,X^{k+1}_{n+1})$, where
\[
\alpha_k(x,x')=\min\left\{1, \frac{\pi_k(x')\pi_{k+1}(x)}{\pi_{k}(x)\pi_{k+1}(x')}\right\}.\]
\end{enumerate}
The pseudo code of PT is given in Algorithm \ref{alg:re}.   
%
%
%
\begin{algorithm}[ht]
\caption{Parallel Tempering} \label{alg:re}
\begin{algorithmic}
 \State{Input: Replica counts $K$, target densities $\pi_k$ for $k=0,\dots, K$, transition kernels $M_k$ targeting $\pi_k$.}
 \State{Output: $(x_t^k)_{t=0,\ldots,T_k, k=0,\ldots,K} $ as samples from $\pi_k$}
\State{Initialize $x_0^k$ for all $k$}\;
 \For{$ t=0$ to $T$}
 \For{$ k=0$ to $K$} \%Run MCMC at each level\;
\State{Generate $x^k_{t+1}\sim M_k(x_t^k,\,\cdot\,)$.}
\EndFor

 \%Consider swapping at a random level\;
 \State{Let $k$ be a random index in $\{0,\ldots,K-1\}$\;}
\State{Let $U$ be a random sample from Unif$[0,1]$\;}
 \If {$U<a_k(x^k_{t+1},x^{k+1}_{t+1})$}
 \State{$(x^k_{t+1},x^{k+1}_{t+1})=(x^{k+1}_{t+1},x^{k}_{t+1})$}
\EndIf
\EndFor
\end{algorithmic}
\end{algorithm}
The exchange procedure can be described by the transition probability on $R^{(K+1)d}\times R^{(K+1)d}$:
\[
Q_k(x,x)=1-\alpha_k(x^k, x^{k+1}),\quad Q_k(x,S_k(x))=\alpha_k(x^k, x^{k+1}),
\]
where $S_k(x)=(x^0,\ldots, x^{k-1},x^{k+1},x^k, x^{k+2},\ldots, x^K)$. With a little abuse of notation, we write the transition kernel as 
$Q_k$ as well, i.e., $Q_kf(x)=Q_k(x,x)f(x)+Q_k(x,S_k(x))f(S_k(x))$. 
The transition kernel of PT can then be written as 
\begin{equation}
\label{eq:PTker}
P=\left(M_{0}\otimes \cdots  \otimes M_K\right) \left(\frac{1}{K}\sum_{\mathbf{k}\in \{0,\ldots,K-1\}} Q_{k}\right),
\end{equation}
where the direct product of two transition kernels is given by 
\[
M_0\otimes M_1f(x^0,x^1)=\int\int M_0(x^0,y^0)M_1(x^1,y^1) f(y^0,y^1) dy^0dy^1. 
\]
The spectral gap of $P$ in \eqref{eq:PTker} has been studied in \cite{woodard2009conditions}. Assume the state space can be partition into $\reals^d=\cup_{i=1}^J A_j$, it is shown that $\kappa(P)$ can be seen as the product of three elements: 1) the maximal spectral gap when sampling $\pi_k$, $k\geq 1$, constrained on one piece $A_j$;  2) the spectral gap when sampling $\pi_0$ using $M_0$; and 3) the density ratio: $\pi_k(A_j)/\pi_{k+1}(A_j)$. In particular, if $\pi_0$ is easy to sample, $\pi_k$ is not so different from $\pi_{k+1}$,  and the sampling of $\pi_k$ constrained on $A_j$ is efficient, then PT can be highly efficient. 

When implementing PT numerically, we may not have access to the exact values of $\pi_k$, but only an $\epsilon$-approximation, which we denote as $\pihat_k$. Then, the corresponding PT uses a sampler $\Mhat_k$ with invariant measure $\pihat_k$ at each replica, while the exchange probability is given by 
\[
\alphahat_k(x,x')=\min\left\{1, \frac{\pihat_k(x')\pihat_{k+1}(x)}{\pihat_{k}(x)\pihat_{k+1}(x')}\right\}.
\]
The corresponding transition kernel can be written as
\[
\Phat=\left(\Mhat_{0}\otimes \cdots  \otimes \Mhat_K\right) \left(\frac{1}{K}\sum_{\mathbf{k}\in \{0,\ldots,K-1\}} \Qhat_{k}\right).
\]
It is natural to ask whether this numerical PT will inherit the spectral gap of $P$. The next result together with Theorem \ref{th:gap_ratio} indicates that under appropriate regularity conditions on $\pihat_k$'s, the numerical PT also has a proper spectral gap. 
\begin{prop} \label{prop:re}
Suppose for each replica the target distribution satisfies $\sup_x|\log\pihat(x)-\log\pi(x)|\leq \epsilon$ and the transition kernel satisfies $\|P_k-\Phat_k\|_{\pi_k}\leq \epsilon$, then the transition kernel of PT satisfies the following for some constant $C$:
\[
\|P-\Phat\|_{\Pi}\leq C\epsilon.
\]
\end{prop}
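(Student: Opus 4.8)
The plan is to exploit the factored structure $P = M\,Q$ and $\Phat = \Mhat\,\Qhat$, where $M = M_0\otimes\cdots\otimes M_K$, $\Mhat = \Mhat_0\otimes\cdots\otimes\Mhat_K$, $Q = \frac1K\sum_{k}Q_k$, and $\Qhat = \frac1K\sum_k\Qhat_k$. Writing the difference as
\[
P - \Phat = M(Q-\Qhat) + (M-\Mhat)\Qhat,
\]
submultiplicativity of the operator norm $\|\cdot\|_\Pi$ reduces the task to bounding the mixing part $\|M-\Mhat\|_\Pi$ and the swap part $\|Q-\Qhat\|_\Pi$ separately, provided the carrier factors $\|M\|_\Pi$ and $\|\Qhat\|_\Pi$ are $O(1)$. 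The first is immediate: each $M_k$ is reversible with respect to $\pi_k$, so $\|M\|_\Pi\le 1$. For $\|\Qhat\|_\Pi$ I would invoke the density-ratio hypothesis $\sup_x|\log\pihat_k(x)-\log\pi_k(x)|\le\epsilon$, which makes the ratios $\Pi/\widehat{\Pi}$ and $\widehat{\Pi}/\Pi$ bounded by $e^{(K+1)\epsilon}$ (here $\widehat{\Pi}=\pihat_0\times\cdots\times\pihat_K$), hence $\|\Qhat\|_\Pi\le 1+C\epsilon$.

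For the mixing part I would telescope over the tensor factors: with $T_{(j)} = M_0\otimes\cdots\otimes M_{j-1}\otimes\Mhat_j\otimes\cdots\otimes\Mhat_K$, we have $M-\Mhat = \sum_{j=0}^K\bigl(T_{(j+1)}-T_{(j)}\bigr)$, and each summand is a tensor product whose $\Pi$-norm factorizes into single-site norms. Using $\|M_i\|_{\pi_i}\le 1$, the density-ratio estimate $\|\Mhat_i\|_{\pi_i}\le 1+C\epsilon$, and the hypothesis $\|M_j-\Mhat_j\|_{\pi_j}\le\epsilon$ (if $M_k=P_k^{t_k}$, a preliminary telescoping in powers turns $\|P_k-\Phat_k\|_{\pi_k}\le\epsilon$ into $\|M_k-\Mhat_k\|_{\pi_k}\le t_k(1+C\epsilon)^{t_k}\epsilon$), each summand is $O(\epsilon)$, so the sum over the $K+1$ sites is $O(\epsilon)$, treating $K$ and the $t_k$ as fixed constants.

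The heart of the argument is the swap part. Since $Q_kf - \Qhat_kf = (\alpha_k-\alphahat_k)(x^k,x^{k+1})\,(f\circ S_k - f)$, I would first show $|\alpha_k-\alphahat_k|\le C\epsilon$ uniformly: the four log-density perturbations entering the Metropolis ratio give $\hat r = r\,e^{\theta}$ with $|\theta|\le 4\epsilon$, and $|\min\{1,r\}-\min\{1,r e^{\theta}\}|\le C\epsilon$ in every regime. After squaring and integrating, the term carrying $f(x)^2$ is immediately $\le C\epsilon^2\|f\|_\Pi^2$. The difficulty is the term carrying $f(S_kx)^2$, because $\Pi$ is \emph{not} invariant under the swap $S_k$ and the Jacobian $r_k(x):=\Pi(S_kx)/\Pi(x)$ is unbounded. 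The observation that rescues the estimate is that the acceptance mismatch is supported exactly where $r_k$ is bounded: after the change of variables $y=S_kx$ one is left with the pointwise factor $\bigl(\alpha_k(y^{k+1},y^k)-\alphahat_k(y^{k+1},y^k)\bigr)^2\,r_k(y)$, and writing $s=1/r_k(y)$ this equals $\bigl(\min\{1,s\}-\min\{1,s e^{\theta'}\}\bigr)^2/s$, which is $O(\epsilon^2)$ uniformly — it vanishes when $s\ge e^{4\epsilon}$ (both acceptances equal $1$), is bounded by $(sC\epsilon)^2/s\le C^2\epsilon^2$ when $s\le1$, and stays controlled on the thin transition band $s\in[e^{-4\epsilon},e^{4\epsilon}]$ where $r_k\le e^{4\epsilon}$. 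Hence $\|Q_k-\Qhat_k\|_\Pi\le C\epsilon$, and averaging over $k$ gives $\|Q-\Qhat\|_\Pi\le C\epsilon$. Combining the two parts in the decomposition yields $\|P-\Phat\|_\Pi\le C\epsilon$. The main obstacle, deserving the most care, is precisely this swap estimate: the naive bound fails because of the unbounded swap Jacobian and is salvaged only by noticing that $\alpha_k-\alphahat_k$ lives on the region where that Jacobian is $O(1)$.
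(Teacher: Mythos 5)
Your proof is correct, and its skeleton---factoring $P-\Phat$ through $M(Q-\Qhat)$ and $(M-\Mhat)\Qhat$, then bounding the mixing and swap parts separately---is the same as the paper's, which routes everything through its Lemma~\ref{lem:l2comp}. The genuine divergence is in the swap estimate. The paper upgrades the log-density hypothesis to a \emph{relative} bound on the acceptance probabilities, $\alpha_k/\alphahat_k\le(1+C\epsilon)^4$, and then invokes Lemma~\ref{lem:REl2}, whose proof never touches the swap Jacobian: it tests $(Q-\Qhat)f$ against $\mu\propto|(Q-\Qhat)f|\,\nu$, dominates the acceptance mismatch by $C\epsilon\,a(x,S(x))$, and absorbs the swapped term using reversibility of $Q$ with respect to $\Pi$ and the contraction $\|Q\|_\nu\le1$ (Lemma~\ref{lem:l2pi2hat}). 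You instead use the additive bound $|\alpha_k-\alphahat_k|\le C\epsilon$, change variables $y=S_kx$, and verify pointwise that $\bigl(\alpha_k-\alphahat_k\bigr)^2\circ S_k\cdot r_k$ is uniformly $O(\epsilon^2)$ because the mismatch is supported exactly where the Jacobian $r_k=\Pi\circ S_k/\Pi$ is $O(1)$; your case analysis over $s\le1$, the band $[e^{-4\epsilon},e^{4\epsilon}]$, and $s\ge e^{4\epsilon}$ is exhaustive and each case gives $O(\epsilon^2)$. This is the same cancellation the paper exploits (the relative closeness of acceptances), but made explicit by hand rather than hidden in a duality argument; the paper's route buys a reusable lemma and no Jacobian computation, yours buys a self-contained and transparent account of why the unbounded swap Jacobian is harmless. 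Your remaining ingredients match the paper's: the telescoping over tensor factors with single-site norms is the mechanism of Lemma~\ref{lem:l2comp} claim~2) (your extra handling of powers $M_k=P_k^{t_k}$ is consistent with the remark preceding that lemma), and your density-ratio bound $\|\Qhat\|_\Pi\le1+C\epsilon$ is a valid, non-circular substitute for the paper's triangle-inequality step $\|\Shat f\|_\nu\le\|Sf\|_\nu+\|(S-\Shat)f\|_\nu$ inside Lemma~\ref{lem:l2comp}.
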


Before we prove Proposition \ref{prop:re}, we first prove two auxiliary lemmas.
The first lemma shows that different compositions of approximated transition kernels yield approximation kernels of similar accuracy. 
In particular, it helps us establish the condition $\|P_k-\Phat_k\|_{\pi_k}\leq \epsilon$ in Proposition \ref{prop:re}  if we use $M_k=P_{RWM}^{t_k}$ or  $M_k=P_{MALA}^{t_k}$.
\begin{lem}
\label{lem:l2comp}
\begin{enumerate}[1)]
\item For two transition kernels $R$ and $S$, both with invariant measure $\nu$,
if $\|R-\Rhat\|_\nu\leq C\epsilon$ and $\|S-\Shat\|_\nu\leq C\epsilon$, then there is a constant $C'$ so that
\[
\|RS-\Rhat\Shat\|_{\nu}\leq C'\epsilon.
\]
\item For two transition kernels $R_1$ and $R_2$ with invariant measure $\nu_1$ and $\nu_2$ respectively,
if $\|R_1-\hat R_1\|_{\nu_1}\leq C\epsilon$ and $\| R_2-\hat R_2\|_{\nu_2}\leq C\epsilon$,  then there is a constant $C'$ so that
\[
\|R_1\otimes R_2 -\Rhat_1\otimes \Rhat_2\|_\nu \leq C'\epsilon,
\]
where $\nu=\nu_1\times\nu_2$ is the joint invariance distribution.
\item For $n$ transition kernels $S_1, S_2, \dots, S_n$, all with invariant measure $\nu$,
if $\|S_i-\Shat_i\|_\nu\leq C\epsilon$ for $i=1,\dots, n$, then for $U=\frac1n\sum_{i=1}^n S_i$ and $\Uhat=\frac1n\sum_{i=1}^{s} \Shat_i$, 
 there is a constant $C'$ so that
\[
\|U - \Uhat\|_\nu\leq C'\epsilon.
\]
\end{enumerate}
\end{lem}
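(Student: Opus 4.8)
The plan is to build all three parts on two elementary facts about the operator norm $\|\cdot\|_\nu$ on $L^2(\nu)$. First, any transition kernel $R$ with invariant measure $\nu$ is an $L^2(\nu)$-contraction: by Jensen's inequality $(Rf)^2\leq R(f^2)$ pointwise, and integrating against $\nu$ together with $\nu R=\nu$ gives $\|Rf\|_\nu^2\leq \int R(f^2)\,d\nu=\int f^2\,d\nu=\|f\|_\nu^2$, so $\|R\|_\nu\leq 1$. Consequently a perturbed kernel obeys $\|\Rhat\|_\nu\leq \|R\|_\nu+\|R-\Rhat\|_\nu\leq 1+C\epsilon$. Second, the operator norm is submultiplicative, $\|AB\|_\nu\leq \|A\|_\nu\|B\|_\nu$. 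These two facts reduce every part to a telescoping identity.

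For Part 1, I write $RS-\Rhat\Shat=R(S-\Shat)+(R-\Rhat)\Shat$ and apply submultiplicativity to get $\|RS-\Rhat\Shat\|_\nu\leq \|R\|_\nu\|S-\Shat\|_\nu+\|R-\Rhat\|_\nu\|\Shat\|_\nu\leq C\epsilon+C\epsilon(1+C\epsilon)$, which is $\leq C'\epsilon$ for small $\epsilon$. Part 3 is immediate from the triangle inequality: $\|U-\Uhat\|_\nu=\|\frac1n\sum_i(S_i-\Shat_i)\|_\nu\leq \frac1n\sum_i\|S_i-\Shat_i\|_\nu\leq C\epsilon$.

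Part 2 is the only part that requires genuine care, since the two factors live on different spaces and carry different invariant measures $\nu_1,\nu_2$. The key step is the factorization $R_1\otimes R_2=(R_1\otimes I)(I\otimes R_2)$, which lets me reuse the telescoping of Part 1: $R_1\otimes R_2-\Rhat_1\otimes\Rhat_2=(R_1\otimes I)\big(I\otimes(R_2-\Rhat_2)\big)+\big((R_1-\Rhat_1)\otimes I\big)(I\otimes\Rhat_2)$. What remains is to control the one-sided tensor norms on $L^2(\nu_1\times\nu_2)$. I will establish $\|A\otimes I\|_\nu\leq \|A\|_{\nu_1}$ and $\|I\otimes B\|_\nu\leq \|B\|_{\nu_2}$ by a direct Fubini computation: writing $(A\otimes I)f(x^0,x^1)=(Af(\cdot,x^1))(x^0)$ and integrating first in $x^0$ gives $\|(A\otimes I)f\|_\nu^2=\int\|Af(\cdot,x^1)\|_{\nu_1}^2\,\nu_2(dx^1)\leq \|A\|_{\nu_1}^2\|f\|_\nu^2$, and symmetrically for $I\otimes B$. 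Combining with $\|R_1\|_{\nu_1}\leq 1$ and $\|\Rhat_2\|_{\nu_2}\leq 1+C\epsilon$ yields $\|R_1\otimes R_2-\Rhat_1\otimes\Rhat_2\|_\nu\leq C\epsilon+C\epsilon(1+C\epsilon)\leq C'\epsilon$.

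The main obstacle is the tensor-norm bound in Part 2; everything else is routine telescoping. The subtlety is that the perturbed kernel $\Rhat_2$ is only a contraction with respect to its \emph{own} invariant measure, not necessarily $\nu_2$, so I must avoid assuming $\|\Rhat_2\|_{\nu_2}\leq 1$ and instead use the perturbed bound $\|\Rhat_2\|_{\nu_2}\leq 1+C\epsilon$. The Fubini identity sidesteps any need for the abstract Hilbert-space identity $\|A\otimes B\|=\|A\|\,\|B\|$, and only the upper bounds above are required for the argument.
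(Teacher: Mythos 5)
Your proof is correct and takes essentially the same route as the paper's: the same telescoping decomposition $RS-\Rhat\Shat=R(S-\Shat)+(R-\Rhat)\Shat$, the same factorization $R_1\otimes R_2=(R_1\otimes I)(I\otimes R_2)$ with the same fixed-coordinate Fubini computation for the one-sided tensor norms, the same triangle inequality for part 3, and the same care in bounding the perturbed kernels by $\|\Shat\|_\nu\leq 1+C\epsilon$ rather than assuming they are contractions. The only minor difference is that you obtain the contraction $\|R\|_\nu\leq 1$ from Jensen's inequality plus invariance, which requires no reversibility, whereas the paper invokes its auxiliary Lemma~\ref{lem:l2pi2hat}, proved under a reversibility assumption; your version is marginally more general but the substance of the argument is identical.
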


The second lemma establishes proper bounds for the swapping transition.
\begin{lem}
\label{lem:REl2}
Let $Q$ be a transition probability of form:
\[ 
Q(x,S(x))=a(x,S(x)),\quad Q(x,x)=1-a(x,S(x)),\]
where $S(x)$ is some given map. Suppose $Q$ is reversible with a density $\nu$, i.e. 
\[
\nu(x) Q(x,S(x))=\nu(S(x)) Q(S(x),x).
\] 
Similarly, let $\Qhat$ denote the transition probability of form 
\[ 
\Qhat(x,S(x))=\widehat{a}(x,S(x)),\quad \Qhat(x,x)=1-\widehat{a}(x,S(x)),
\]
reversible with $\nuhat$. If for some constant $C$, $(1-C\epsilon)a(x,S(x))\leq \widehat{a}(x,S(x))\leq (1+C\epsilon)a(x,S(x))$, then  
\[
\|(Q-\Qhat) f\|_{\nu}\leq 2C \epsilon  \|f\|_\nu.
\]
\end{lem}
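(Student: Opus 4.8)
The plan is to collapse the difference $Q-\Qhat$ to a single pointwise expression and then recognize the one surviving integral as a Dirichlet form. First I would write out the action of both kernels on a test function. Since $Q(x,\cdot)=(1-a(x))\delta_x+a(x)\delta_{S(x)}$ with the shorthand $a(x):=a(x,S(x))$, a direct computation gives $Qf(x)=f(x)+a(x)\bigl(f(S(x))-f(x)\bigr)$, and likewise for $\Qhat$ with $\widehat a(x)$ replacing $a(x)$. Subtracting, the diagonal contributions cancel and I obtain
\[
(Q-\Qhat)f(x)=\bigl(a(x)-\widehat a(x)\bigr)\bigl(f(S(x))-f(x)\bigr).
\]
The two-sided hypothesis $(1-C\epsilon)a(x)\le\widehat a(x)\le(1+C\epsilon)a(x)$ yields $|a(x)-\widehat a(x)|\le C\epsilon\,a(x)$, so squaring and integrating, and using $0\le a(x)\le1$ to replace $a^2$ by $a$, gives
\[
\norm{(Q-\Qhat)f}_\nu^2\le C^2\epsilon^2\int a(x)\bigl(f(S(x))-f(x)\bigr)^2\nu(dx).
\]
This is the step that produces the correct power of $\epsilon$ and isolates the only quantity still needing a bound.

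The remaining integral is exactly twice the Dirichlet form of $Q$. Because $Q$ is reversible with respect to $\nu$, the standard representation $\langle f,(I-Q)f\rangle_\nu=\tfrac12\iint\bigl(f(x)-f(y)\bigr)^2\nu(dx)Q(x,dy)$ specializes, for this two-point kernel, to $\langle f,(I-Q)f\rangle_\nu=\tfrac12\int a(x)\bigl(f(S(x))-f(x)\bigr)^2\nu(dx)$. Hence it suffices to bound $\langle f,(I-Q)f\rangle_\nu$, which I would do with the general contraction property of a Markov operator on the $L^2$ space of its stationary measure: since $\norm{Qf}_\nu\le\norm f_\nu$, Cauchy--Schwarz gives $\langle f,Qf\rangle_\nu\ge-\norm f_\nu^2$, and therefore $\langle f,(I-Q)f\rangle_\nu=\norm f_\nu^2-\langle f,Qf\rangle_\nu\le 2\norm f_\nu^2$. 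Combining these facts yields $\int a(x)\bigl(f(S(x))-f(x)\bigr)^2\nu(dx)\le 4\norm f_\nu^2$, and substituting into the previous display gives $\norm{(Q-\Qhat)f}_\nu^2\le 4C^2\epsilon^2\norm f_\nu^2$, i.e. $\norm{(Q-\Qhat)f}_\nu\le 2C\epsilon\norm f_\nu$, which is the claimed bound.

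The one place that needs care is the Dirichlet-form identity, and this is precisely where the reversibility hypothesis $\nu(x)Q(x,S(x))=\nu(S(x))Q(S(x),x)$ is used: it guarantees that $\nu$ is invariant and that $I-Q$ is a nonnegative self-adjoint operator on $L^2(\nu)$ with the symmetric ``gradient'' representation above. Concretely, for the swap map $S=S_k$ arising in parallel tempering, $S$ is a volume-preserving involution, so if I preferred to avoid the abstract form identity I could instead apply the change of variables $y=S(x)$ together with detailed balance to establish $\int a(x)f(S(x))^2\nu(dx)=\int a(x)f(x)^2\nu(dx)$ directly, and then bound the cross term by Cauchy--Schwarz against the measure $a(x)\nu(dx)$. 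Either route reaches the factor $4$ inside the square; everything else is an elementary combination of $a(x)\le1$ and Cauchy--Schwarz, so I expect no further obstacle.
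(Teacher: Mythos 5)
Your proof is correct, and it reaches the stated constant $2C\epsilon$ exactly, but it takes a genuinely different route from the paper's. Both arguments start from the same observation that $(Q-\Qhat)f(x)=(a(x)-\widehat a(x))\bigl(f(S(x))-f(x)\bigr)$ with $|a-\widehat a|\le C\epsilon\,a$, and both ultimately lean on the $L^2(\nu)$-contraction of the reversible kernel $Q$; they diverge in how the transported term involving $f(S(x))$ is controlled. You square directly, use $a^2\le a$, and identify $\int a\,(f\circ S-f)^2\,d\nu$ as twice the Dirichlet form $\langle f,(I-Q)f\rangle_\nu\le 2\|f\|_\nu^2$ — a clean, self-contained computation that exploits the purely atomic two-point structure of the swap kernel. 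The paper instead runs a duality argument: it tests $(Q-\Qhat)f$ against densities $\mu=s\nu$, bounds $|\mu(Q-\Qhat)f|\le 2C\epsilon\|s\|_\nu\|f\|_\nu$ via the pointwise inequality $a(x)|f(S(x))|\le Q|f|(x)$ together with $\|Q\|_\nu\le 1$ (Lemma \ref{lem:l2pi2hat}), and then chooses $s\propto|(Q-\Qhat)f|$. The paper's template is deliberately the same as its proof of Lemma \ref{lem:MHtransition}, so it applies uniformly to kernels with an absolutely continuous (non-atomic) rejection/acceptance structure; your Dirichlet-form route is arguably more transparent for the swap kernel itself but would not transfer verbatim to such kernels. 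One shared caveat: your change-of-variables remark correctly flags that invariance of $\nu$ (hence the contraction and the Dirichlet-form identity) rests on $S$ being a measure-preserving involution, as it is for the parallel-tempering swap; the paper's proof implicitly relies on the same reversibility-based facts through Lemma \ref{lem:l2pi2hat}, so neither argument is weaker on this point.
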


\section{Numerical examples}
\label{sec:num}
In this section, we present some numerical examples based on the predator-prey system to illustrate the theoretical results developed in the preceding sections. 

\subsection{Predator-prey system}
We consider inferring the parameters of a system of ordinary differential equations (ODEs) that models the predator-prey system \cite{lotka1925elements}. 
Denoting the populations of prey and predator by $( \gamma_p, \gamma_q )$, 
the populations change over time according to the pair of coupled ODEs:
\begin{align}
\frac{d \gamma_p}{dt} & = r \gamma_p \left(1 - \frac{\gamma_p}{K}\right) - s \left(\frac{\gamma_p \,\gamma_q }{w + \gamma_p}\right), \nonumber \\
\frac{d \gamma_q}{dt} & = u \left(\frac{\gamma_p \,\gamma_q }{w + \gamma_p}\right) - v \gamma_q, \label{eq:pp_eq}
\end{align}
with initial conditions $\gamma_p(0)$ and $\gamma_q(0)$.
$r$, $K$, $a$, $s$, $u$, and $v$ are model parameters that control the dynamics of the populations of prey and predator. In the absence of the predator, the population of prey evolves according to the logistic equation, which is characterized by $r$ and $K$. In the absence of the prey, the population of predator has an exponential decay rate $v$. The additional parameters $s, w$, and $u$ characterize the interaction between the predator population and the prey population.

In the inference problem, we want to estimate both the model parameters and the initial conditions. In this case, we have $d = 8$ and denote 
\[
\theta = ( \gamma_p(0), \gamma_q(0), r, K, a, s, u, v ).
\]
A commonly used prior for this problem is a uniform distribution over a hypercube $(a_1,b_1) \times \cdots \times (a_d,b_d) $ (see, e.g., \cite{parno2018transport}). Here, we set $a_i = 10^{-3}$ and $b_i = 2\times 10^2$ for all $i$. 
Noisy observations of both $\gamma_p(t; \theta)$ and $\gamma_q(t; \theta)$ at times regularly spaced at $m = 20$ time points in $t \in [2, 40]$ are used to infer $\theta$. This defines a so-called forward model
\[
F(\theta) = [\gamma_p(t_1; \theta), \gamma_q(t_1; \theta), \ldots, \gamma_p(t_m; \theta), \gamma_q(t_m; \theta)],
\]
that maps a given parameter $\theta$ to the observables. 
The observables are perturbed with independent Gaussian observational errors with mean zero and variance $4$. 
A ``true'' parameter
\[
\theta_{\rm true} = [50, 5, 0.6, 100, 1.2, 25, 0.5, 0.3]^\top
\]
is used to generate the synthetic observed data set, which is denoted by $y$. The trajectories of $\gamma_p(t; \theta_{\rm true})$ and $\gamma_q(t; \theta_{\rm true})$ together with the synthetic data set are shown in Figure \ref{fig:pp_setup}.

\begin{figure}
\centerline{\includegraphics[width=\textwidth]{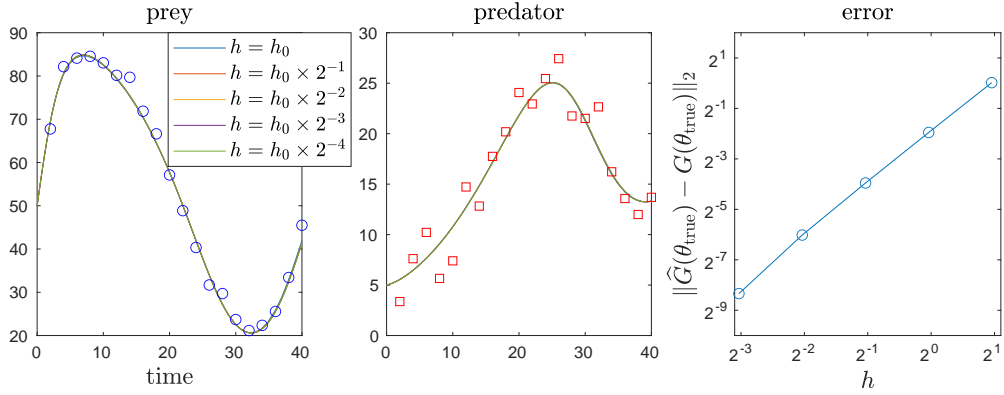}}
\caption{Left and middle: the trajectories of $\gamma_p(t; \theta_{\rm true})$ and $\gamma_q(t; \theta_{\rm true})$ computed using the second order Runge--Kutta method with different time step size $h$. Right: the $L_2$ error of the model outputs with different time step size $h$. Here $G(\theta_{\rm true})$ is computed using $h = h_0 \times 2^{-6}$. The trajectories computed by the time step size $h = h_0 \times 2^{-6}$ is used to generate synthetic data set. The observed data sets of the prey and predator are shown as circles and squares, respectively. }\label{fig:pp_setup}
\end{figure}

\begin{figure}
\centerline{\includegraphics[width=\textwidth]{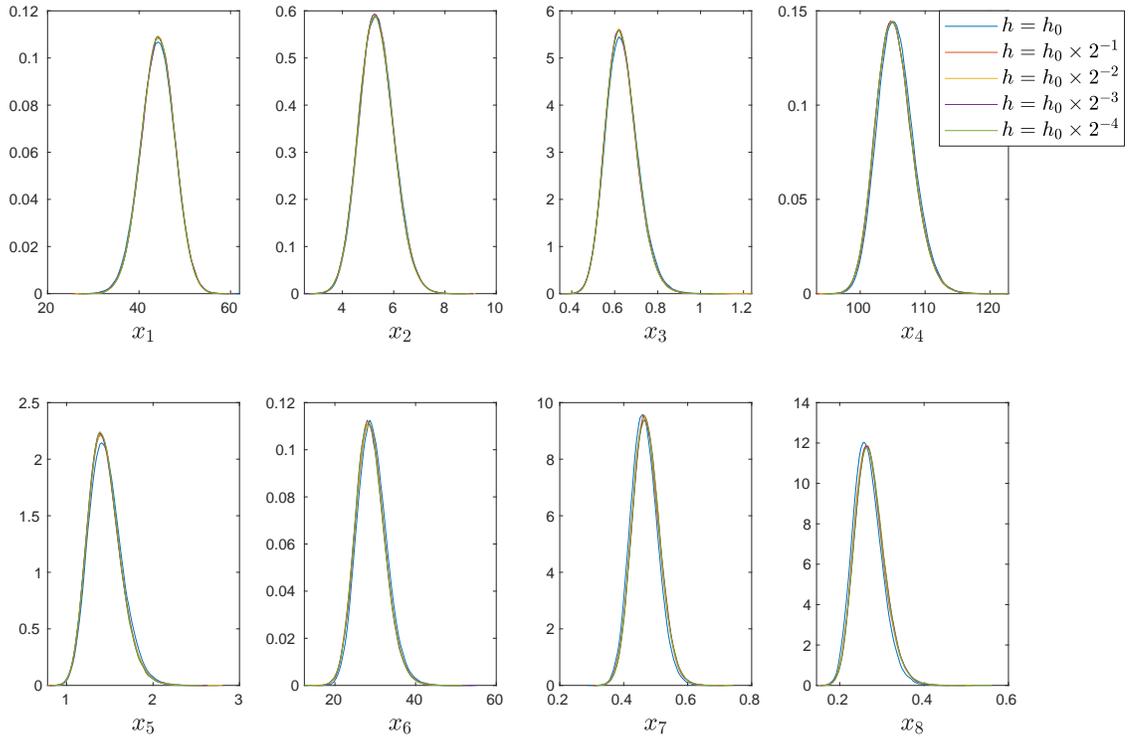}}
\caption{Marginal distributions of perturbed posteriors defined by various time step sizes.}\label{fig:marginals}
\end{figure}

To avoid rejections caused by proposal samples that fall outside of the hypercube, we further consider the prior distribution as the pushforward of the standard Gaussian measure with the probability density function 
\[
p_0(x) = (2\pi)^{-d/2} \exp\left( -\frac12 \|x\|^2  \right)
\]
under a diffeomorphic transformation $T : \R^d \rightarrow \R^d$ that 
maps each coordinate
\[
\theta_i = T_i(x_i) = a_i + \frac{b_i - a_i}{\sqrt{2\pi} }\int_{-\infty}^{x_i} \exp\left( -\frac12 z^2  \right) d z.
\]
In other words, $p_0(x)$ is the prior distribution for the transformed parameter $x = T^{-1}(\theta)$. Writing $G(x) = F(T(x))$, our goal is to characterize the posterior distribution
\[
\pi(x) \propto p_0(x) \exp\left( - \frac{1}{8} \|G(x) - y\|^2 \right).
\]
The system of ODEs in \eqref{eq:pp_eq}, and hence the function $G(x)$, has to be numerical solved by some ODE solvers. Here we use the second order explicit Runge--Kutta method with time step size $h$ to solve \eqref{eq:pp_eq}. As shown in Figure \ref{fig:pp_setup}, the trajectories of $\gamma_p(t; \theta_{\rm true})$ and $\gamma_q(t; \theta_{\rm true})$ converge as $h \rightarrow 0$. The numerical solver, which is characterized by the step size $h$, defines the approximate model $\widehat{G}(x)$ and the approximate posterior density $\widehat{\pi}(x)$. Figure \ref{fig:marginals} shows the estimated marginal distributions (using Algorithm \ref{alg:re}) of perturbed posteriors defined by various time step sizes. Here we observe that as $h$ decreases, the estimated marginal distributions almost overlap each other, which suggests that the perturbed distributions converge as the discretized model $\widehat{G}$ converges. 

\subsection{MCMC results}

To validate the theoretical results on Metropolis--Hasting MCMC on perturbed densities in Section \ref{sec:Metro}, we first simulate the RWM algorithm with invariant densities  $\widehat{\pi}(x)$ defined by various time step sizes as shown in Figure \ref{fig:pp_setup}. All the Markov chains in this set of simulation experiments are generated using the same Gaussian random walk proposal distribution. The resulting autocorrelation times are shown in Figure \ref{fig:pp_rw}.
Then, we simulate MALA with invariant densities  $\widehat{\pi}(x)$ defined by the same set of time step sizes. The resulting autocorrelation times are shown in Figure \ref{fig:pp_mala}. Again, all the Markov chains are generated using the same proposal distribution. 
For both algorithms, we simulate each Markov chain for $10^{6}$ iterations after discarding burn-in samples. Each Markov chain simulation is repeated for $20$ times with different initial states. The results in Figures \ref{fig:pp_rw} and \ref{fig:pp_mala} summarizes the mean and the $\pm 2$ standard deviations of the autocorrelation times. As established in our theoretical analysis, for both algorithms, the resulting Markov chains targeting on various approximate posterior densities produce similar autocorrelation times. This provides empirical evidence that the spectral gaps of the approximate transition kernels defined by MRW or MALA  converge as the discretization step size $h \rightarrow 0$.

To validate the theoretical results on the parallel tempering with perturbed densities in Section \ref{sec:RE}, we simulate Algorithm \ref{alg:re} with the same Gaussian random walk as in RWM. For each of the invariant densities  $\widehat{\pi}(x)$ defined by various time step sizes, we set $K=4$, and the intermediate distributions take the form
\[
\widehat{\pi}_k(x) \propto p_0(x)  \exp\left( - \frac{\beta_k}{8} \|\widehat{G}(x) - y\|^2 \right),
\]
where $\beta_k = 1 + \alpha^{-K} - \alpha^{-k}$ with $\alpha = 1.3$ and $k = \{0,1,2,3,4\}$. Here $\beta_k$ is an increasing sequence such that $\beta_K = 1$. The same Gaussian random walk is used across all replicas to simulate the Markov chain. The autocorrelation times of the resulting Markov chains are shown in Figure \ref{fig:pp_rpex}. Similar to the previous experiments, the resulting Markov chains targeting on various approximate posterior densities produce similar autocorrelation times. This provides empirical evidence that the spectral gaps of the approximated transition kernel induced by Algorithm \ref{alg:re} converge as the discretization step size $h \rightarrow 0$.

\begin{figure}
\centerline{\includegraphics[width=\textwidth]{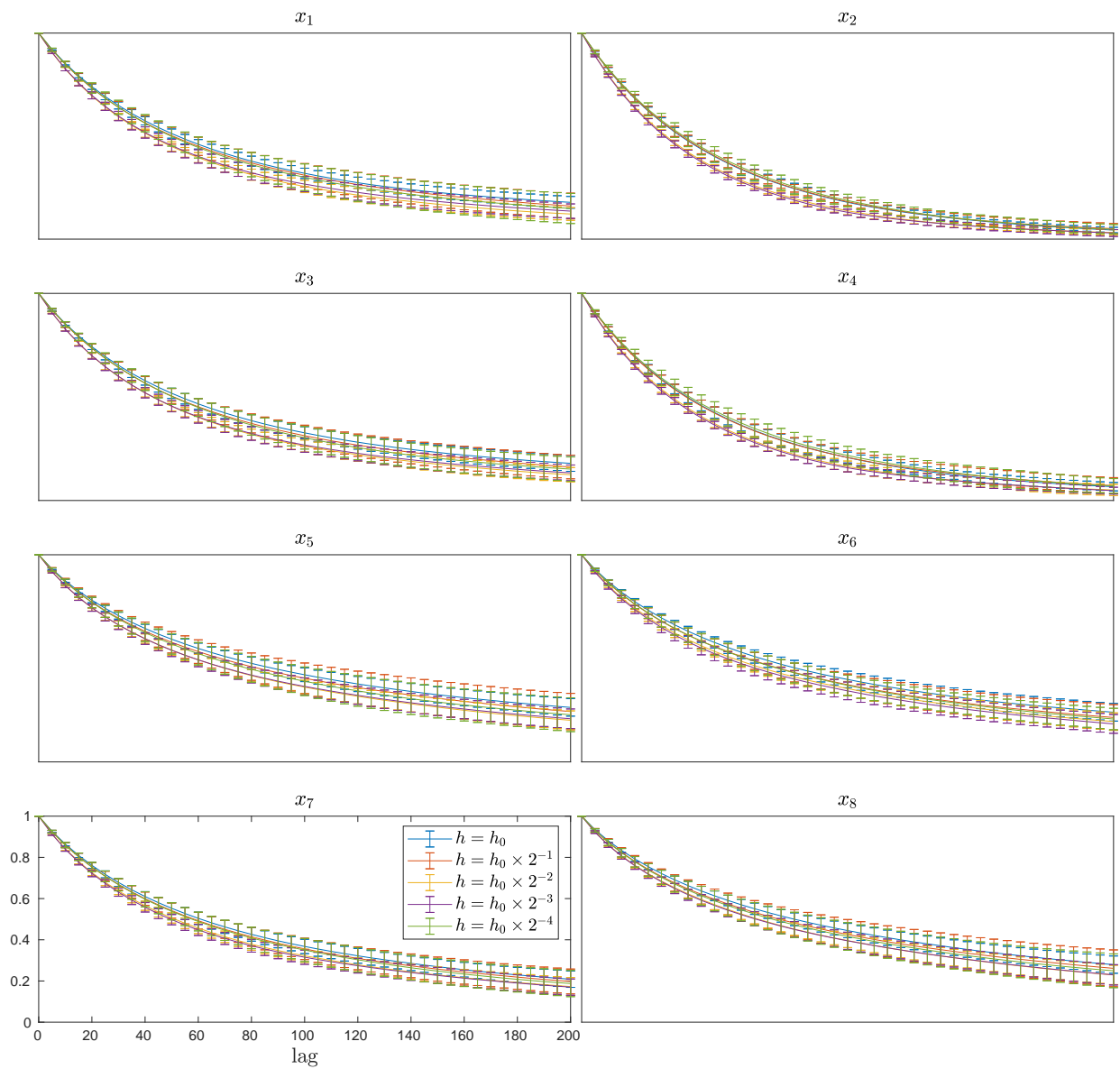}}
\caption{Autocorrelation time of each of the parameter Markov chains simulated by the RWM algorithm. Here different colored lines represent Markov chains targeting invariant measures defined by different time discretization steps.}\label{fig:pp_rw}
\end{figure}

\begin{figure}
\centerline{\includegraphics[width=\textwidth]{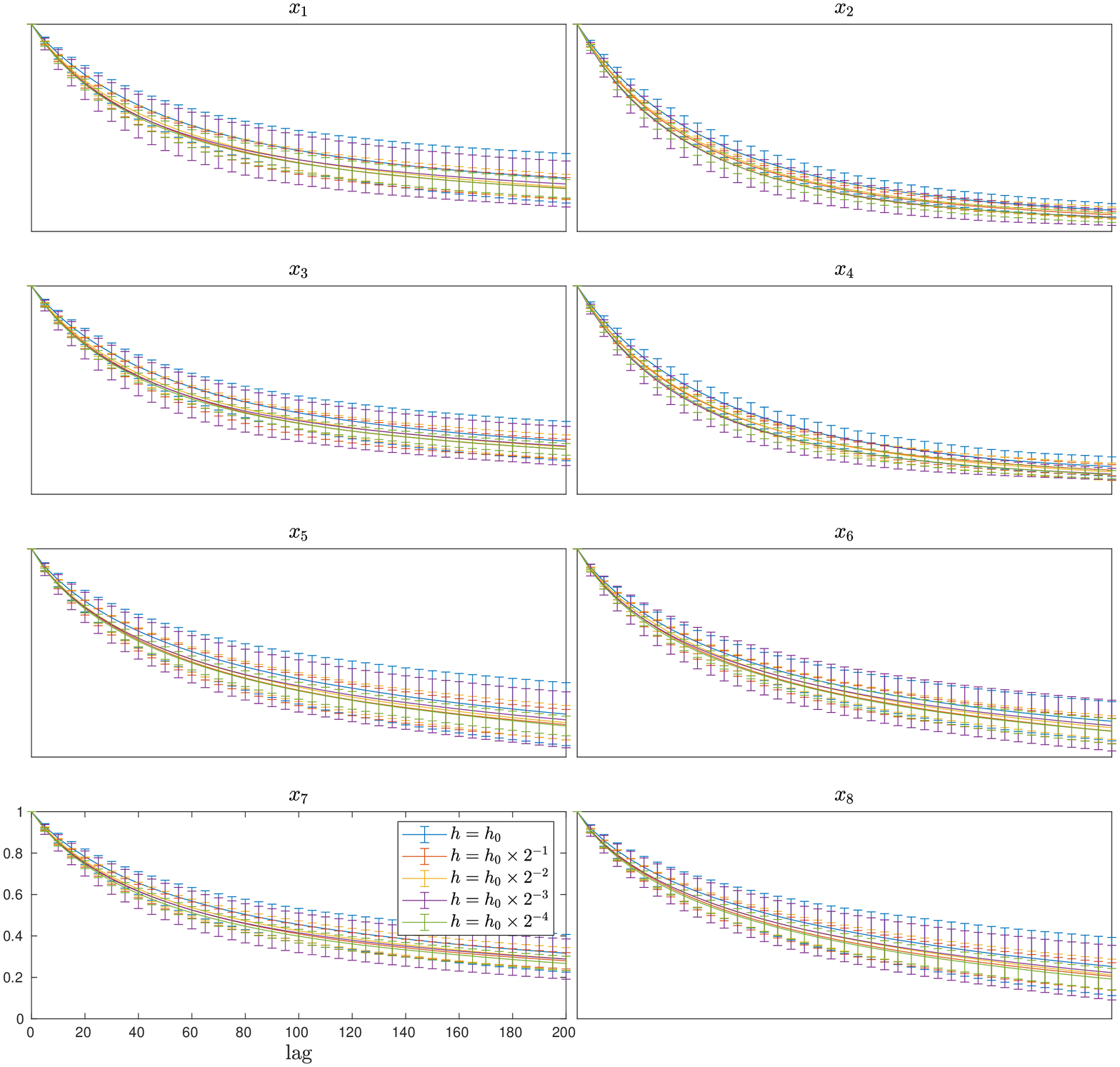}}
\caption{Autocorrelation time of each of the parameter Markov chains simulated by the MALA algorithm. Here different colored lines represent Markov chains targeting invariant measures defined by different time discretization steps.}\label{fig:pp_mala}
\end{figure}

\begin{figure}
\centerline{\includegraphics[width=\textwidth]{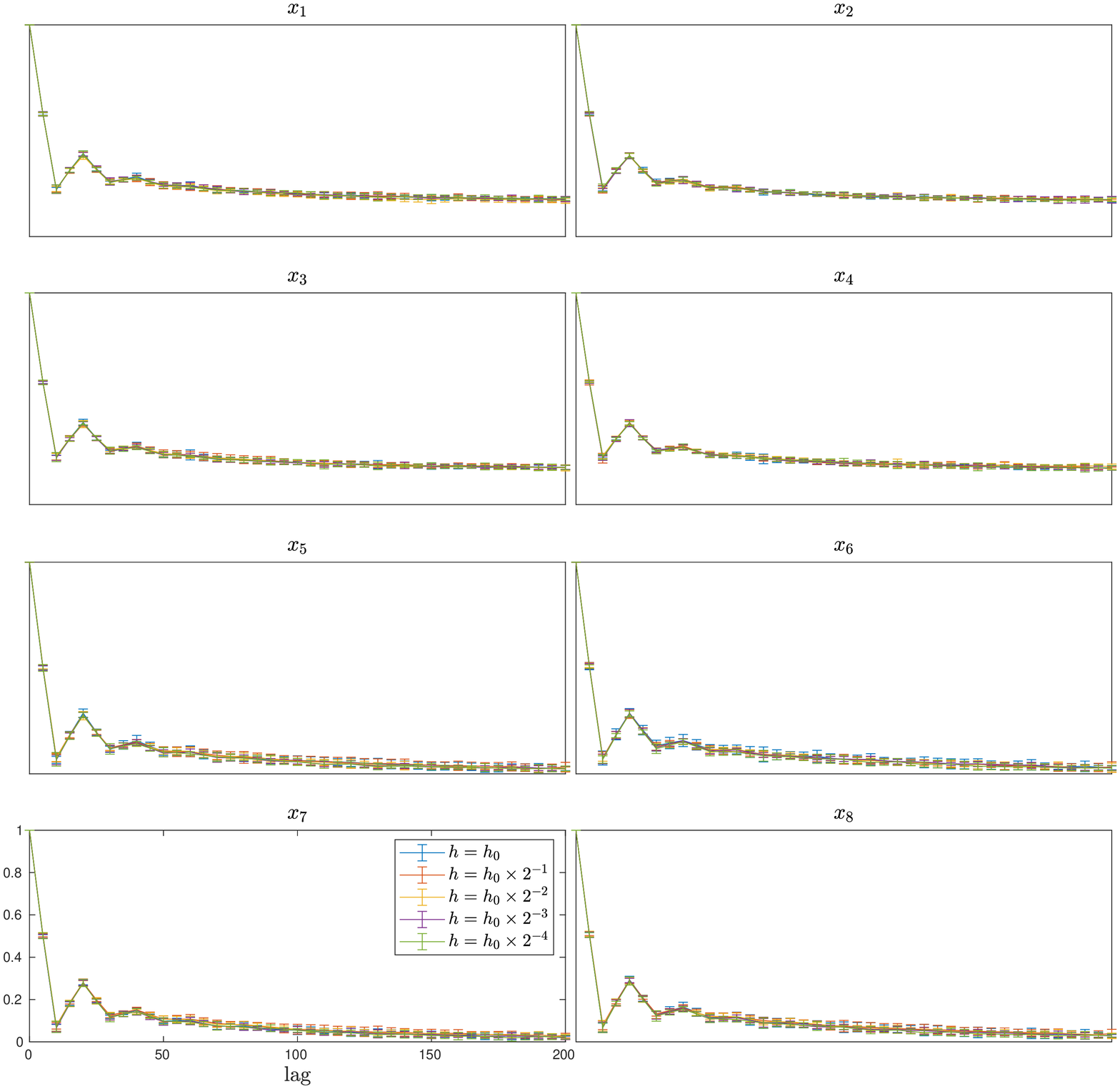}}
\caption{Autocorrelation time of each of the parameter Markov chains simulated by Algorithm \ref{alg:re}. Here different colored lines represent Markov chains targeting invariant measures defined by different time discretization steps.}\label{fig:pp_rpex}
\end{figure}

\section{Conclusion}
In this paper, we quantify the convergence speed and the approximation accuracy of numerical MCMC samplers under two general frameworks: ergodicity and spectral gap. Our results can be easily applied to study the efficiency and accuracy of various sampling algorithms. In particular, we demonstrate how to apply our framework to study Metropolis Hasting MCMC algorithms and parallel tempering Monte Carlo algorithms. These results are validated by numerical simulations on a Bayesian inverse problem based on the predator-prey model. 

\appendix

\section{Proof for the ergodicity framework}
\begin{proof}[Proof of Theorem \ref{th:ergo}]
Let $Q_x$ be the optimal coupled measure between $\delta_xP$ and $\delta_x\Phat$. Then,
\begin{equation*}\begin{split}
|PV(x)-\Phat V(x)|&=\left|\int Q_{x}(dx',dy') (V(x')-V(y'))\right|\\
&\leq \left|\int Q_{x}(dx',dy') (V(x')+V(y'))1_{x'\neq y'}\right|\leq \|\delta_x P- \delta_x \Phat\|_V.
\end{split}\end{equation*}
Next, as $\|\delta_x P-\delta_x \Phat\|_V\leq \epsilon V (x)$, we have
\[
|PV(x)-\Phat V(x)|\leq \epsilon V(x).
\]
In addition, because $V$ is a Lyapunov function under $P$, 
\[
\Phat V(x)\leq PV(x)+\epsilon V(x)\leq (\lambda+\epsilon) V(x)+L.
\]
As $(\lambda+\epsilon)\in(0,1)$ for $\epsilon$ small enough,
$V$ is a Lyapunov function under $\Phat$ with parameters $\lambda+\epsilon$ and $L$.\\


We next establish a bound for $\|\delta_x\Phat^n - \delta_y \Phat^n\|_V$, $x\neq y$ using
\begin{equation}
\label{tmp:2}
\|\delta_x\Phat^n-\delta_y \Phat^n\|_V\leq \|\delta_x\Phat^n-\delta_x P^n\|_V+\|\delta_xP^n-\delta_y P^n\|_V+\|\delta_yP^n-\delta_y \Phat^n\|_V.
\end{equation}
For $\|\delta_x\Phat^n-\delta_x P^n\|_V$, by Theorem \ref{thm:rs}, we have
\[
\|\delta_x\Phat^n-\delta_x P^n\|_V\leq \frac{C\epsilon }{1-\rho}\left(V(x)+\frac{L}{1-\lambda-\epsilon}\right) \leq C\epsilon V(x)
\]
for some $C$, because $V(x)\geq 1$.
A similar bound holds for $\|\delta_yP^n-\delta_y \Phat^n\|_V$ as well, i.e.,
\[
\|\delta_yP^n-\delta_y \Phat^n\|_V\leq \frac{C\epsilon }{1-\rho}\left(V(y)+\frac{L}{1-\lambda-\epsilon}\right) \leq C\epsilon V(y). 
\]
Then, for any $l\leq N$, if we let $D_1=\frac{C}{N\rho}$, \eqref{tmp:2} leads to 
\begin{equation}\label{eq:ergo_bd1}
\begin{split}
\|\delta_x\Phat^{l+N} - \delta_y \Phat^{l+N}\|_V &\leq\rho^{l+N} d_V(x,y)+C\epsilon (V(x)+V(y))\\
&= \left(\rho^{l+N} + C\epsilon\right)d_V(x,y)\leq (\rho+D_1\epsilon)^{l+N} d_V(x,y).
\end{split}\end{equation}
Next, let $\Qhat^{k}_{x,y}$ be the optimal coupled measure between $\delta_x \Phat^{kN}$ and $\delta_y \Phat^{kN}$. 
Then,
\begin{equation} \label{eq:ergo_bd2}
\begin{split}
\|\delta_x \Phat^{kN}-\delta_y \Phat^{kN}\|_V 
&\leq \int \Qhat_{x,y}^{k-1}(dx',dy') \|\delta_{x'} \hat P^N - \delta_{y'} \hat P^N\|_V\\
&\leq \int \Qhat_{x,y}^{k-1}(dx',dy') (\rho+D_1\epsilon)^N d_V(x',y')\\
&\leq (\rho+D_1\epsilon)^{kN} d_V(x,y).
\end{split}\end{equation}
For any $n\geq N$, we can write $n=kN+N+l$, for $k,l\in \mathbb{Z}_0^+$, and 
\begin{equation*}\begin{split}
\|\delta_x \Phat^{n}-\delta_y \Phat^{n}\|_V
&\leq \int Q^k_{x,y}(dx',dy')\|\delta_{x'} \hat P^{N+l} - \delta_{y'} \hat P^{N+l}\|_V\\
&\leq C_1(\rho+D_1\epsilon)^{N+l} \int \Qhat^k_{x,y}(dx',dy')d_V(x',y') \mbox{ by \eqref{eq:ergo_bd1}}\\
&\leq C_1(\rho+D_1\epsilon)^{l+kN+N}d_V(x,y) \mbox{ by \eqref{eq:ergo_bd2}}\\
&=C_1(\rho+D_1\epsilon)^{n}d_V(x,y).
\end{split}\end{equation*}

Lastly, we show that $\Phat$ has a unique  invariant measure $\pihat$. Fix a point $x$, consider a sequence  $\{\delta_x\Phat^n, n=1,2,\ldots\}$. Note that 
\begin{align*}
\|\delta_x\Phat^n-\delta_x\Phat^{n+1}\|_V&\leq \int \|\delta_x\Phat^n-\delta_y\Phat^{n}\|_V\Phat(x,dy)\\\
&\leq C_1(\rho+D_1\epsilon)^{n}\E[d_V(x,\Xhat_1)]\leq C_1 (\rho+D_1\epsilon)^n[ (\lambda+1+\epsilon)V(x)+L].
\end{align*}
This implies that $\delta_x\Phat^n$ is a Cauchy sequence in $d_V$ and the total variation distance. Therefore, the sequence has a limit, which we denote by $\pi_x$. Next, we show that $\pi_x=\pi_y$: 
\[
\|\pi_x-\pi_y\|_{V}\leq \|\pi_x-\delta_x\Phat^n\|_V+\|\delta_y\Phat^n-\pi_y\|_{V}+\|\delta_x\Phat^n-\delta_y\Phat^n\|_V\to 0 \mbox{ as $n\to 0$.}
\]
\end{proof}

\begin{proof}[Proof of Proposition \ref{pro:ergo}]
For the first claim, let $Q_{x,y}^n$ be the the optimal coupled measure between $\delta_{x}\Phat^n $ and $\delta_{y}\Phat^n $ for
any $x,y\in\Omega$.
Then,
\begin{equation}\label{eq:n_bd}
\begin{split}
|\delta_x \Phat^n f -\pihat f|&=|\delta_{x}\Phat^n f-\pihat \Phat^n f|\\
&\leq \int |\delta_{x}\Phat^n f-\delta_y\Phat^n f|\pihat(dy)\\
&\leq \int \int Q^n_{x,y}(dx^{\prime},dy^{\prime})|f(x^{\prime})-f(y^{\prime})|\pihat(dy)\\
&\leq \int \|\delta_{x}\Phat^n - \delta_y\Phat^n\|_V \pihat(dy)\\
&\leq \rhohat^n \int d_V(x,y)\pihat(dy)\leq \rhohat^n(V(x)+\pihat V).
\end{split}\end{equation}
For the second claim, note that for any $f$ with $\pihat f=0$, we have
\begin{align*}
\E_{\pihat}[(\fhat_M-\pihat f)^2]&=\frac{1}{M^2}\E_{\pihat} \left[\sum_{j,k=1}^M f(\Xhat_j) f(\Xhat_{k})\right]\\
&\leq \frac{2}{M^2}\E_{\pihat} \left[\sum_{j=1}^M |f(\Xhat_j)| \sum_{k=0}^\infty |f(\Xhat_{j+k})|\right]\\
&= \frac{2}{M}\E_{\pihat} \left[|f(\Xhat_0)|\E\Big[\sum_{k=0}^\infty |f(\Xhat_{k})|\Big|\Xhat_0\Big]\right]\\
&\leq \frac{2}{M}\E_{\pihat} \left[|f(\Xhat_0)| \sum_{k=0}^\infty \rhohat^k(V(\Xhat_0)+\pihat V)\right] \mbox{ from \eqref{eq:n_bd}}\\
&\leq \frac{2}{ (1-\rhohat)M}\E_{\pihat}\Big[|f(\Xhat_0)|(V(\Xhat_0)+\pihat V)\Big].
\end{align*}
\end{proof}
\section{Proof for the spectral gap framework}
\begin{proof}[Proof of Theorem \ref{th:app_gap}]
We will write $\kappa=\kappa(P)$ for short. \\
\noindent\textbf{For Claim 1,} first note that
\begin{equation} \label{eq:th3_bd1}
\begin{split}
\var_\pi (\Phat f)=&\frac12\int\left(\Phat f(x)-\Phat f(y)\right)^2\pi(dx)\pi(dy)\\
=&\frac12\int\left(Pf(x)-Pf(y)+(P-\Phat)f(y)-(P-\Phat) f(x)\right)^2\pi(dx)\pi(dy)\\
\leq& \left(\frac12+\frac{a\kappa}{2}\right)\int\left(Pf(x)-Pf(y)\right)^2\pi(dx)\pi(dy)\\
&+\left(\frac12+\frac{1}{2a\kappa}\right)\int\left((P-\Phat) f(x)-(P-\Phat) f(y)\right)^2\pi(dx)\pi(dy)\\
\leq& (1+a \kappa) \var_\pi(P f)+\left(2+\frac{2}{a\kappa}\right)\int((P-\Phat)f(x))^2\pi(dx).
\end{split}\end{equation}
Next, note that
\begin{equation}\label{eq:th3_bd2}
(1+a\kappa) \var_\pi(P f(X))\leq (1+a \kappa) (1-\kappa)\var_\pi f
\leq  (1-(1-a) \kappa) \var_\pi f.
\end{equation}
Let $\Delta f(x)=f(x)-\pi f$. Then,
\begin{equation}\label{eq:th3_bd3}
\E_\pi\left[\left((P-\Phat) f(X)\right)^2\right]=\E_\pi\left[\left((P-\Phat)\Delta f(X)\right)^2\right]\leq \epsilon^2 \|\Delta f\|_\pi^2=\epsilon^2 \var_\pi f.
\end{equation}
Plugging the bounds \eqref{eq:th3_bd2} and \eqref{eq:th3_bd3} in \eqref{eq:th3_bd1}, we have
\[
\var_\pi (\Phat f)\leq \left(1-(1-a) \kappa+\frac{(2a\kappa+2)\epsilon^2}{a\kappa}\right) \var_\pi f. 
\]
This further implies that
\[
\var_\pi (\Phat^n f)\leq \left(1-(1-a) \kappa+\frac{C\epsilon^2}{a}\right)^n \var_\pi f.
\]

\noindent\textbf{For Claim 2,} first note that
\begin{equation*}\begin{split}
\left|\int \Phat f(x)\pi(dx) - \pi f \right| &= \left|\int (\Phat-P)f(x)\pi(dx) \right|\\
&=\left|\int (\Phat-P)\Delta f(x)\pi(dx) \right| \mbox{  recall that $\Delta f(x)=f(x)-\pi f$}\\
&\leq \left(\int \left((\Phat-P)\Delta f(x)\right)^2\pi(dx) \right)^{1/2}\\
&=\|(P-\Phat) \Delta f\|_\pi\leq \epsilon\sqrt{\var_\pi f}.
\end{split}\end{equation*} 
Then, because $\int P g(x)\pi(dx)=\int g(x)\pi(dx)$ for any function $g$,
\begin{equation*}\begin{split}
\left|\int (\Phat^{n+1}-\Phat^n)f(x)\pi(dx)\right|&=\left|\int (\Phat-P)(\Phat^nf)(x)\pi(dx)\right|\\
&\leq \epsilon \sqrt{\var_\pi (\Phat^n f)}\leq \epsilon (1-\kappahat)^{n/2}\sqrt{ \var_\pi f},
\end{split}\end{equation*}
from Claim 1.
Let $\mu_n=\pi \Phat^n$ and
 $f=\text{sgn}(\mu_{n+1}-\mu_n)$. Because $\var_\pi f\leq \E_\pi |f|^2=1$, 
\[
\|\mu_{n+1}-\mu_n\|_{TV}=\left|\int (\Phat^{n+1}-\Phat^n)f(x)\pi(dx)\right|\leq \epsilon (1-\kappahat)^{n/2}\sqrt{ \var_\pi f} \leq \epsilon(1-\kappahat)^{n/2}.
\]
Thus, $\mu_n$ is a Cauchy sequence under the total variation metric, which implies that the sequence has a limit $\pihat$ and 
\[
\left|(\pihat-\mu_n)f\right|\leq \sum_{k=n}^\infty \left|(\mu_{k+1}-\mu_k)f\right|
\leq \frac{\epsilon(1-\kappahat)^{n/2}}{1-(1-\kappahat)^{1/2}}\sqrt{ \var_\pi f}
\]
When letting $n=0$, we have
\begin{equation}\label{eq:bd_pipi}
|\pihat f -\pi f|\leq \frac{\epsilon}{1-(1-\kappahat)^{1/2}} \sqrt{\var_\pi f}
\end{equation}
Consider $f=\pihat/\pi$. $D_{\chi^2}(\pihat\|\pi)=|\pihat f -\pi f|=\var_\pi f$. Combine this with \eqref{eq:bd_pipi}, we have
\[
D_{\chi^2}(\pihat\|\pi)\leq \frac{\epsilon^2}{(1-(1-\kappahat)^{1/2})^2}.
\]

\end{proof}

\begin{proof}[Proof of Proposition \ref{prop:app_gap}]
For the first claim, we note 
\begin{align*}
|\nu\Phat^n f-\pi \Phat^n f|&\leq \int \frac{\nu(x)}{\pi(x)}\pi(dx)|\Phat^n f(x)-\pi \Phat^n f|\\
&\leq \sqrt{\int \Big(\frac{\nu(x)}{\pi(x)}\Big)^2 \pi(dx)} \sqrt{\int |\Phat^n f(x)-\pi \Phat^nf |^2\pi(dx)}\\
&\leq \sqrt{D_{\chi^2}(\nu\|\pi)+1}\times (1- \hat{\kappa})^{n/2}\sqrt{\var_\pi f} \mbox{ by Theorem \ref{th:app_gap}.}
\end{align*}
Meanwhile,
\[
|(\pihat-\pi \Phat^n)f| \leq C\epsilon(1-\hat{\kappa})^{n/2}\sqrt{\var_\pi f} \mbox{ by Theorem \ref{th:app_gap}.}
\]
By triangular inequality,
\begin{equation*}\begin{split}
|\nu\Phat^n f-\pihat f|&\leq |\nu\Phat^n f-\pi \Phat^n f|+|\pi \Phat^nf-\pihat f|\\
&\leq (1-\kappahat)^{n/2}\sqrt{\var_\pi f}\left(\sqrt{D_{\chi^2}(\nu\|\pi)+1}+C\epsilon\right).
\end{split}\end{equation*}

For the second part, we first note that for any $f$ with $\pihat f=0$, we have
\begin{align*}
\E_{\pihat}[(\fhat_M-\pihat f)^2]&=\frac{1}{M^2}\E_{\pihat}\left[\sum_{j,k=1}^M f(\Xhat_j) f(\Xhat_{k})\right]\\
&\leq \frac{2}{M^2}\E_{\pihat}\left[\sum_{j=1}^M |f(\Xhat_j)| \sum_{k=0}^\infty |f(\Xhat_{j+k})|\right]\\
&= \frac{2}{M}\sum_{k=0}^\infty \E_{\pihat}\left[|f(\Xhat_0)||f(\Xhat_{k})|\right]\\
&\leq \frac{2}{M}\sqrt{\E_{\pihat}[f(\Xhat_0)^2]} \sum_{k=0}^\infty \sqrt{\E_{\pihat}[(\Phat^k f(\Xhat_0))^2]}.
\end{align*}
Next,
\begin{align*}
\var_{\pihat}(\Phat^k f)&=\pihat(\Phat^k f)^2\\
&\leq \pihat(\Phat^k f-\pi\Phat^k f)^2\\
&\leq \pi(\Phat^k f-\pi\Phat^k f)^2+\frac{\epsilon}{1-(1-\kappahat)^{1/2}} \sqrt{\var_\pi[(\Phat^k f-\pi\Phat^k f)^2]} \mbox{ by \eqref{eq:bd_pipi}}\\
&\leq  (1-\kappahat)^k \var_\pi (f)+\frac{\epsilon}{1-(1-\kappahat)^{1/2}} \sqrt{\var_\pi[(\Phat^k f-\pi\Phat^k f)^2]} \mbox{ by Theorem \ref{th:app_gap}.}
\end{align*}
Because $\sup_x|f(x)|\leq C$ for some $C\in(0,\infty)$, $\sup_x|\Phat^k f(x)|\leq C$ and $\sup_x|(\Phat^k-\pi \Phat^k)f(x)|\leq 2C$. Then,
\[
\pi(\Phat^k f-\pi \Phat^k f)^4\leq 4C^2\pi(\Phat^k f-\pi \Phat^k f)^2\leq 4C^2 (1-\kappahat)^k \var_\pi f \mbox{ by Theorem \ref{th:app_gap}.}
\]
Thus,
\[
\var_{\pihat}(\Phat^k f) \leq  (1-\kappahat)^k \var_\pi (f)+\frac{\epsilon}{1-(1-\kappahat)^{1/2}} 2C (1-\kappahat)^{\frac{k}{2}}\sqrt{\var_\pi f}
\]
and we can further find a constant $C'$ such that
\[
\E_{\pihat}[(\fhat_M-\pihat f)^2] \leq \frac{C'}{M(1-(1-\kappahat)^{1/4})}\sqrt{\var_{\pihat}f~\var_\pi f}.
\]
\end{proof}

\begin{proof}[Proof of Theorem \ref{th:gap_ratio}]
To simplify the notation, let $\kappa$ denote the spectral gap of $P$ and $\kappahat$ denote the spectral gap of $\Phat$.
By the definition of spectral gap, i.e., \eqref{eq:gap}, we have
\begin{align*}
\kappahat&=\min_f \frac{ \langle f, (I-\Phat^2) f\rangle_{\pihat}}{\var_{\pihat} f}.
\end{align*}
First, note that 
\begin{equation}\label{eq:var}
\begin{split}
\var_{\pihat} f&=\E_{\pihat} [(f-\pihat f)^2]\\
&\leq \E_{\pihat} [(f-\pi f)^2]\\
&\leq  (1+\epsilon)\E_{\pi}[(f-\pi f)^2] \\ 
&=(1+\epsilon)\var_\pi f. 
\end{split}\end{equation}
We next establish two useful bounds:
\begin{equation}\label{eq:b1}
\begin{split}
|\langle f, (I-P^2)f\rangle_{\pihat}-\langle f, (I-P^2)f\rangle_{\pi}|&\leq \int |f(x)(I-P^2)f(x) |\epsilon \pi(dx) \\
&\leq \epsilon\|f\|_\pi \|(I-P^2)f\|_\pi
\leq \epsilon\|f\|_\pi^2,
\end{split}\end{equation}
and
\begin{equation}\label{eq:b2}
\begin{split}
|\langle f, (\Phat^2-P^2)f\rangle_{\pihat}|&\leq \|f\|_{\pihat} \|(\Phat^2-P^2)f\|_{\pihat}\\
&\leq (1+\epsilon)^2\|f\|_\pi \|(\Phat^2-P^2)f\|_{\pi} \\ 
&\leq (1+\epsilon)^2\|f\|_\pi (2\|(\Phat-P)Pf\|_{\pi}+\|(\Phat-P)^2f\|_{\pi}) \\ 
&\leq (1+\epsilon)^2\|f\|_\pi (2\epsilon\|Pf\|_{\pi}+\epsilon\|(\Phat-P)f\|_{\pi}) \\
&\leq (1+\epsilon)^2\|f\|_\pi (2\epsilon\|f\|_{\pi}+\epsilon^2\|f\|_{\pi}) \\ 
&\leq 3(1+\epsilon)^2\epsilon\|f\|_\pi^2\\
&\leq C\epsilon\|f\|_\pi^2.
\end{split}\end{equation}
Then,
\begin{equation}\label{eq:Di}
\begin{split}
|\langle f, (I-\Phat^2) f\rangle_{\pihat}|&\geq|\langle f, (I-P^2) f\rangle_{\pihat}| - |\langle f, (\Phat^2-P^2)f\rangle_{\pihat}| \mbox{ by triangular inequality}\\
&\geq |\langle f, (I-P^2) f\rangle_{\pihat}|-C\epsilon\|f\|_\pi^2 ~\mbox{ by the bound in \eqref{eq:b2}}\\
&\geq |\langle f, (I-P^2) f\rangle_{\pi}| - |\langle f, (I-P^2) f\rangle_{\pi}-\langle f, (I-P^2) f\rangle_{\pihat}|-C\epsilon\|f\|_\pi^2\\
&\geq  |\langle f, (I-P^2) f\rangle_{\pi}|-C\epsilon\|f\|_\pi^2 -C\epsilon\|f\|_\pi^2 ~\mbox{ by the bound in \eqref{eq:b1}}\\
&\geq \langle f, (I-P^2) f\rangle_{\pi} - C\epsilon \var_\pi f.
\end{split}\end{equation}
Combining \eqref{eq:var} and \eqref{eq:Di}, we have $\kappahat \geq \kappa -C\epsilon. $

\end{proof}

\begin{proof}[Proof of Proposition \ref{pro:gapproof}]
For the first claim, 
\begin{align*}
|\nu\Phat^n f-\pihat f|&=|\nu\Phat^n f-\pihat \Phat^n f|\\
&\leq \int \frac{\nu(x)}{\pihat(x)}\pihat(x)|\Phat^n f(x)-\pihat \Phat^n f|dx\\
&\leq \sqrt{\int \Big(\frac{\nu(x)}{\pihat(x)}\Big)^2 \pihat(x)dx} \sqrt{\int |\Phat^n f(x)-\pihat \Phat^nf |^2\pihat(x)dx}\\
&\leq \sqrt{D_{\chi^2}(\nu\|\pihat)+1}\times (1- \hat{\kappa})^{n/2}\sqrt{\var_{\pihat} f}. 
\end{align*}

For the second part, we first note that for any $f$ with $\pihat f=0$, we have
\begin{equation*}\begin{split}
\E_{\pihat}[(\fhat_M-\pihat f)^2]&=\frac{1}{M^2}\E_{\pihat}\left[\sum_{j,k=1}^M f(\Xhat_j) f(\Xhat_{k})\right]\\
&\leq \frac{2}{M^2}\E_{\pihat}\left[\sum_{j=1}^M |f(\Xhat_j)| \sum_{k=0}^\infty |f(\Xhat_{j+k})|\right]\\
&= \frac{2}{M}\sum_{k=0}^\infty \E_{\pihat}\left[|f(\Xhat_0)| |f(\Xhat_{k})|\right]\\
&\leq \frac{2}{M}\sqrt{\E_{\pihat}[f(\Xhat_0)^2]} \sum_{k=0}^\infty \sqrt{\E_{\pihat}[(\Phat^k f(\Xhat_0))^2]}\\
&\leq  \frac{2}{M}\sqrt{\E_{\pihat}[f(\Xhat_0)^2]} \sum_{k=0}^\infty (1-\kappahat)^{k/2}\sqrt{\var_{\pihat} f}\\
&=\frac{2}{M(1-(1-\kappahat)^{1/2})}\var_{\pihat} f
\end{split}\end{equation*}
\end{proof}

\begin{proof}[Proof of Proposition \ref{prop:pert}]
Let $Q_x$ be the optimal coupled measure between $\delta_x P$ and $\delta_x\Phat$. Then
\begin{align*}
|(\delta_xP-\delta_x\Phat) f|&\leq \int Q_x(dx',dy') |f(x')-f(y')|\\
&= \int Q_x(dx',dy') (|f(x')-f(y')|)1_{x'\neq y'}\leq \epsilon  V(x). 
\end{align*}
Next, 
\begin{align*}
\|(P-\hat P)f\|_\pi^2=&\int \pi(dx)|(\delta_xP-\delta_x\Phat) f|^2\\
\leq& \int \pi (dx) \left(\int  Q_x(dx',dy') (|f(x')-f(y')|)1_{x'\neq y'}\right)^2\\
\leq& \left(\int \pi (dx)Q_x(dx',dy') \left(2f(x')^2+2f(y')^2\right)\right)\left(\int \pi (dx) Q_x(dx',dy') 1_{x'\neq y'}\right)\\
\leq& 2 \left(\langle \pi P, f^2\rangle+\langle \pi \Phat, f^2\rangle\right) \left(\epsilon \int \pi(dx)V(x) \right)\\
\leq& 2\epsilon\left(\langle \pi, f^2\rangle+a\langle \pihat \Phat, f^2\rangle\right)(\pi V)\\
\leq& 2\epsilon\left(1+a^2\right)\|f\|_{\pi}^2\|V\|_\pi.
\end{align*}
\end{proof}

\section{Verification for Metropolis Hasting MCMC }

We first present an auxiliary lemma that will be used in our subsequent development.

\begin{lem}
\label{lem:l2pi2hat}
Suppose $P$ is irreducible and reversible with invariant measure $\pi$. Then, 
$\|P\|_\pi\leq 1$.
\end{lem}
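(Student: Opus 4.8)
The plan is to prove the stronger pointwise statement that $P$ is a contraction on $L^2(\pi)$, i.e. that $\|Pf\|_\pi \le \|f\|_\pi$ for every $f$ with $0 < \|f\|_\pi < \infty$; the bound on the operator norm then follows immediately from the definition $\|P\|_\pi = \max_{f} \|Pf\|_\pi/\|f\|_\pi$. The central tool is a pointwise application of Jensen's inequality combined with the invariance of $\pi$. It is worth noting in advance that the argument will not actually use reversibility: all that is needed is that $P$ is a Markov kernel (so $P(x,\cdot)$ is a probability measure) and that $\pi P = \pi$, the latter being guaranteed by the reversibility hypothesis.

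First I would fix $f \in L^2(\pi)$. Since for each $x$ the measure $P(x,\cdot)$ is a probability measure and $t \mapsto t^2$ is convex, Jensen's inequality gives the pointwise estimate
\[
(Pf(x))^2 = \Big(\int f(y)\,P(x,dy)\Big)^2 \le \int f(y)^2\,P(x,dy) = P(f^2)(x).
\]
Integrating this against $\pi$ and invoking invariance (so that $\int P(f^2)(x)\,\pi(dx) = \int f^2(x)\,\pi(dx)$, which is legitimate because $f^2 \in L^1(\pi)$) yields
\[
\|Pf\|_\pi^2 = \int (Pf(x))^2\,\pi(dx) \le \int P(f^2)(x)\,\pi(dx) = \int f^2(x)\,\pi(dx) = \|f\|_\pi^2.
\]
Taking square roots and then the supremum over admissible $f$ in the definition of $\|P\|_\pi$ delivers $\|P\|_\pi \le 1$.

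There is essentially no hard step here; the only points requiring a word of care are measurability and integrability bookkeeping, namely that $Pf$ is well defined and lies in $L^2(\pi)$ so that $\|Pf\|_\pi$ is meaningful, and that $f^2 \in L^1(\pi)$ (immediate from $f \in L^2(\pi)$) so that the exchange $\int P(f^2)\,d\pi = \int f^2\,d\pi$ is justified by invariance. Reversibility, although assumed in the statement, enters only through its consequence that $\pi$ is invariant; the contraction estimate itself is a general feature of any Markov operator acting on the $L^2$ space of one of its invariant measures.
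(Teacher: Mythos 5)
Your proof is correct, but it is not the route the paper takes. The paper exploits reversibility in an essential way: it first writes $\|Pf\|_\pi^2 = \langle Pf, Pf\rangle_\pi = \langle f, P^2 f\rangle_\pi = \int \pi(dx)\, f(x) f(y)\, P^2(x,dy)$ (this identification is exactly where self-adjointness of $P$ on $L^2(\pi)$ is used), and then shows this quantity is at most $\int f^2\, d\pi$ by a symmetrization of the quadratic form: the deficit equals $\tfrac12 \int \pi(dx)\,(f(x)-f(y))^2\, P^2(x,dy) \geq 0$, where each of the two diagonal terms reduces to $\int f^2\, d\pi$ using, respectively, that $P^2(x,\cdot)$ is a probability measure and that $\pi P^2 = \pi$. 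Your argument instead applies Jensen's inequality (equivalently Cauchy--Schwarz) pointwise, $(Pf(x))^2 \leq P(f^2)(x)$, and then integrates against $\pi$ using invariance alone. Your observation that reversibility is never really needed is accurate — any Markov operator is an $L^2$-contraction with respect to any of its invariant measures — so your proof is both more elementary and strictly more general. What the paper's argument buys in exchange is that it stays within the Dirichlet-form viewpoint $\langle f, (I-P^2)f\rangle_\pi \geq 0$ that underlies the spectral gap machinery used throughout Section 3, so it reads as a natural companion to the definition of $\kappa(P)$ rather than as an isolated contraction estimate.
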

\begin{proof}
We first note that
\begin{equation*}\begin{split}
&\int f(x)^2\pi(dx) - \int \pi(dx)f(x)f(y)P^2(x,dy)\\
=&\frac{1}{2}\int \pi(dx)f(x)^2P^2(x,dy)+\frac{1}{2}\int\pi(dx)f(y)^2P^2(x,dy) -\int \pi(dx)f(x)f(y)P^2(x,dy)\\
=&\frac{1}{2}\int \pi(dx)(f(x)-f(y))^2P^2(x,dy)\geq 0.
\end{split}\end{equation*}
Thus, 
\[
\int f(x)^2\pi(dx)\geq \int \pi(dx) f(x)f(y)P^2(x,dy)= \|Pf\|_\pi^2. 
\]
\end{proof}

\begin{proof}[Proof of Lemma \ref{lem:MHtransition}]
For any density of form $\mu(x)=\nu(x)s(x)$, we have
\begin{align*}
| \mu (P-\Phat) f|&\leq \int \mu(x) |\alpha(x)-\hat{\alpha}(x)| |f(x)|dx+\int \mu(x) |\beta(x,x')-\hat{\beta}(x,x')| |f(x')|dx'dx\\
&\leq C\epsilon \int \mu(x)  |f(x)|dx+C\epsilon \int \mu(x) \beta(x,x') |f(x')|dx'dx\\
&\leq C\epsilon \int \mu(x)  |f(x)|dx+C\epsilon \int \mu(x) P(x,x') |f(x')|dx'dx\\
&=C\epsilon \int \nu(x)s(x)  |f(x)|dx+C\epsilon \int \nu(x)s(x) P(x,x') |f(x')|dx'dx\\
&\leq C\epsilon \| s \|_\nu \|f\|_\nu+C\epsilon \| s \|_\nu \|P f\|_\nu\\
&\leq 2C\epsilon \| s \|_\nu \|f\|_\nu \mbox{ by Lemma \ref{lem:l2pi2hat}.}
\end{align*}
Next, take $\mu\propto |(P-\Phat)f|\nu$, we have
\[\|(P-\Phat)f\|_{\nu}^2 \leq 2C\epsilon\|(P- \Phat)f\|_{\nu}\|f\|_{\nu},\]
which further implies that there is a $C_1$, so that $\|P-\Phat\|_\nu\leq C_1\epsilon.$
\end{proof}

\begin{proof}[Proof of Proposition \ref{prop:l2MWN}]

%
%
We denote the acceptance probabilities for the original process and perturbed process as 
\[b(x,x')=\frac{\pi(x')}{\pi(x) }\wedge 1 \mbox{ and } \hat b(x,x')=\frac{\hat \pi(x')}{\hat \pi(x) }\wedge 1\]
respectively.
Since for any positive numbers $a,b,c,d$, 
\[
\min \{a/b, c/d\}\leq \frac{a\wedge c}{b\wedge d}\leq \max \{a/b, c/d\},
\]
and since $\exp(-C\epsilon)\leq \pi(x)/\hat\pi(x)\leq \exp(C\epsilon)$, we have
\[\exp(-2C\epsilon)b(x,x')<\hat b(x,x')<\exp(2C\epsilon)b(x,x').\]
Using the fact that $\beta(x,x')=R(x,x')b(x,x')$ and $\hat\beta(x,x')=R(x,x')\hat b(x,x')$, we have
\[\exp(-2C\epsilon)\beta(x,x')<\hat \beta(x,x')<\exp(2C\epsilon)\beta(x,x').\]
In addition, for $\alpha(x)=\int R(x,x')(1-b(x,x'))dx'$ and $\hat \alpha(x)=\int R(x,x')(1-\hat b(x,x'))dx'$, 
\[|\alpha(x)-\hat\alpha(x)| \leq \int R(x,x')|b(x,x')-\hat b(x,x')|dx' \leq C\epsilon.\]
By Lemma \ref{lem:MHtransition}, we can find a $C_1$ so that
\[
\|P_{RWM} - \Phat_{RWM}\|_{\pi}\leq C_1\epsilon.
\]
\end{proof}

\begin{proof}[Proof of Proposition \ref{prop:l2MALA}]
Note that
\[
R(x,x')=\frac{1}{(4\pi h)^{d/2}}\exp\left(-\frac1{4h}\|x'-x-\nabla \log \pi(x)h \|^2\right),
\]
and
\[
\Rhat(x,x')=\frac{1}{(4\pi h)^{d/2}}\exp\left(-\frac1{4h}\|x'-x-\nabla \log \hat{\pi}(x)h \|^2\right).
\]
As $|\nabla\log  \pihat(x)- \nabla\log \pi(x)|\leq C\epsilon$
and the support is bounded, we can enlarge the value of $C$ so that
\[
(1-C\epsilon)R(x,x')\leq \Rhat(x,x')\leq (1+C\epsilon) R(x,x'),
\]
Let the acceptance probability be 
\begin{equation*}\begin{split}
b(x,x')=&\frac{\pi(x')\exp\left(-\frac1{4h}\|x-x'+\nabla \log \pi(x')h \|^2\right)}{\pi(x) \exp\left(-\frac1{4h}\|x'-x+\nabla \log \pi(x)h\|^2\right)}\wedge 1\\
=&\left\{\exp\left(\log \pi(x')-\log \pi(x)-\frac1{2}\langle x-x',\nabla \log \pi(x')-\nabla \log\pi(x)\rangle\right)\right.\\
&\left.\times \exp\left(\frac{h}{4} \left[\|\nabla \log \pi(x')\|^2-\|\nabla \log \pi(x)\|^2\right]\right)\right\}\wedge 1
\end{split}\end{equation*}
Similarly, we define
\begin{equation*}\begin{split}
\hat b(x,x') =& \frac{\hat\pi(x')\exp\left(-\frac1{4h}\|x-x'+\nabla \log \hat \pi(x')h \|^2\right)}{\hat\pi(x) \exp\left(-\frac1{4h}\| x'-x+\nabla \log \hat\pi(x)h\|^2\right)}\wedge 1 \\
=&\left\{\exp\left(\log \hat\pi(x')-\log \hat\pi(x)-\frac1{2}\langle x-x',\nabla \log \hat\pi(x')-\nabla \log\hat\pi(x)\rangle\right)\right.\\
&\left.\times \exp\left(\frac{h}{4} \left[\|\nabla \log \hat\pi(x')\|^2-\|\nabla \log \hat\pi(x)\|^2\right]\right)\right\}\wedge 1
\end{split}\end{equation*}
Since $|\log \pi(x)-\log \pihat_k(x)|\leq C\epsilon, \|\nabla\log \pi(x)-\nabla\log \pihat_k(x)\|\leq C\epsilon$, and the support is bounded,
we can further enlarge $C$, such that 
\[
(1-C\epsilon)b(x,x')\leq \hat b(x,x')\leq (1+C\epsilon) b(x,x').
\]

Lastly, for $\beta(x,x')=R(x,x')b(x,x')$ and $\hat\beta(x,x')=\hat R(x,x')\hat b(x,x')$,
\[(1-C\epsilon)\beta(x,x')\leq \hat \beta(x,x') \leq (1+C\epsilon)\beta(x,x').\]
In addition, for $\alpha(x)=\int R(x,x')(1-b(x,x'))dx'$ and $\hat \alpha(x)=\int \hat R(x,x')(1-\hat b(x,x'))dx'$,
\begin{equation*}\begin{split}
|\alpha(x) - \hat \alpha(x)| \leq & \int |R(x,x')-\hat R(x,x')|(1-b(x,x'))dx'\\
& + \int \hat R(x, x')|b(x,x')-\hat b(x,x')|dx'\\
\leq& C\epsilon \int R(x,x')dx' + C\epsilon \int \hat R(x,x')dx'
=2C\epsilon
\end{split}\end{equation*}
By Lemma \ref{lem:MHtransition}, we have a constant $C_1$ so that
\[
\|P_{MALA}-\Phat_{MALA}\|_{\pi}\leq C_1\epsilon.
\]

\end{proof}

\begin{proof}[Proof of Proposition \ref{prop:MALA_TV}]
The transition kernel of MALA takes the form
\[P(x,y)=\alpha(x)\delta_x(y)+\beta(x,y)\]
where 
\[
\beta(x,y)=\frac{\pi(y)}{\pi(x)}q(y,x)\wedge q(x,y)=a(x,y)\wedge q(x,y),
\]
with
\[
q(x,y)=\frac{1}{(2\pi h)^{d/2}}\exp\left(-\frac1{4h}\|y-x-\nabla \log \pi(x)h \|^2\right), a(x,y)=\frac{\pi(y)}{\pi(x)}q(y,x),
\]
and $\alpha(x)=1-\int \beta(x,dy)$.
Similarly, we can write $\Phat(x,y)=\alphahat(x)\delta_x(y) + \betahat(x,y)$ when using the perturbed target density $\pihat$.

We prove the proposition by showing that 
\begin{equation} \label{eq:bd_mala}
\int |q(x,y)-\qhat(x,y)|dy\leq C\epsilon \mbox{ and }
\int |a(x,y)-\widehat{a}(x,y)|dy\leq C\epsilon\exp(\delta x^2).
\end{equation}
In particular, note that $a\wedge q-\widehat{a}\wedge \widehat{q}\in \{a-\hat{a}, q-\hat{q}, a-\hat{q},\hat{a}-q\}$,
which further implies that
\[
|a\wedge q-\widehat{a}\wedge \widehat{q}|\leq |a-\widehat{a}|+|q-\widehat{q}|. 
\]
Thus, if the bounds in \eqref{eq:bd_mala} hold, then
\begin{align*}
\|\delta_xP-\delta_x\Phat\|_{TV}&=\int |\beta(x,y)-\widehat{\beta}(x,y)|dy+|\alpha(x)-\widehat{\alpha}(x)|\\
&\leq 2\int |\beta(x,y)-\widehat{\beta}(x,y)|dy\\
&\leq 2\int |a(x,y)-\widehat{a}(x,y)|dy+2\int |q(x,y)-\widehat{q}(x,y)|dy\\
&\leq 2C\epsilon (1+\exp(\delta x^2)). 
\end{align*}
In order to obtain the first part of \eqref{eq:bd_mala}, note that by intermediate value theorem, $|\exp(a)-\exp(b)|\leq |\exp(a)+\exp(b)||a-b|$ holds for any $a,b$, so we can bound
\begin{align}
\notag
|q(x,y)-\qhat(x,y)|\leq& \frac1{4}|q(x,y)+\qhat(x,y)|
\|\nabla\log  \pi(x)-\nabla\log \pihat(x)\|\\
\notag
&\left(\|y-x-h\nabla\log  \pi(x)\|+\|y-x-h\nabla\log  \pihat(x)\|\right)\\
\label{tmp:4}
\leq& \frac{C\epsilon}4|q(x,y)+\qhat(x,y)|\left(\|y-x-h\nabla\log  \pi(x)\|+\|y-x-h\nabla\log  \pihat(x)\|\right).
\end{align}
Note that $q(x,y)$ is the proposal density of $y$. Thus,
\begin{align*}
&\int q(x,y)\left(\|y-x-h\nabla\log  \pi(x)\|+\|y-x-h\nabla\log  \pihat(x)\|\right) dy\\
&\leq \int q(x,y)\left(2\|y-x-h\nabla\log  \pi(x)\|+Ch\epsilon\right)dy\\\
&\leq Ch\epsilon +2\sqrt{\int q(x,y)\|y-x-h\nabla\log  \pi(x)\|^2dy}=Ch\epsilon+2\sqrt{2h d}. 
\end{align*}
Similarly,
\[
\int q(x,y)\left(\|y-x-h\nabla\log  \pi(x)\|+\|y-x-h\nabla\log  \pihat(x)\|\right)dy\leq Ch\epsilon+2\sqrt{2h d}.
\]
Therefore, we find use \eqref{tmp:4} and find a larger $C$ so that 
\begin{align*}
\int |q(x,y)-\qhat(x,y)|dy&\leq  C\epsilon. 
\end{align*}
To handle the second part of \eqref{eq:bd_mala}, we use
$|\exp(a)-\exp(b)|\leq |\exp(a)+\exp(b)||a-b|$ again and find
\begin{align*}
&|a(x,y)-\widehat{a}(x,y)|\\
=&\left|\frac{\pi(y)}{\pi(x)}q(y,x)-\frac{\pihat(y)}{\pihat(x)}\qhat(y,x)\right|\\
\leq& \frac14\left|\frac{\pi(y)}{\pi(x)}q(y,x)+\frac{\pihat(y)}{\pihat(x)}\qhat(y,x)\right|\bigg(|\log  \pi(x)-\log \pihat(x)|+|\log  \pi(y)-\log \pihat(y)|\\
&+\|\nabla\log  \pi(y)-\nabla\log \pihat(y)\|\left(\|y+h\nabla\log  \pi(y)-x\|+\|y+h\nabla\log  \pihat(y)-x\|\right)\bigg)\\
\leq& \frac{C\epsilon}{4}\left|\frac{\pi(y)}{\pi(x)}q(y,x)+\frac{\pihat(y)}{\pihat(x)}\qhat(y,x)\right|\left(2+\left(\|y+h\nabla\log  \pi(y)-x\|+\|y+h\nabla\log  \pihat(y)-x\|\right)\right).
\end{align*}
Note that the first part can be bounded by 
\begin{align*}
&\int \frac{\pi(y)}{\pi(x)}q(y,x)\left(C+(\|y+h\nabla\log  \pi(y)-x\|+\|y+h\nabla\log  \pihat(y)-x\|\right)dx\\
\leq& \int \frac{\pi(y)}{\pi(x)}q(y,x)\left(C+h\epsilon+2\|y+h\nabla\log  \pi(y)-x\|\right)dx\\
=& \int \frac{\pi(y)}{(2\pi h)^{d/4}\pi(x)}\sqrt{q(y,x)}\,\cdot\, (2\pi h)^{d/4}\sqrt{q(y,x)}\left(C+h\epsilon+2\|y+h\nabla\log  \pi(y)-x\|\right)dx\\
\end{align*}
For $\tfrac{\pi(y)}{\pi(x)}(2\pi h)^{-d/4}\sqrt{q(y,x)}$, we can bound it by 
\begin{equation*}\begin{split}
&\frac{1}{(2\pi h)^{d/4}}\frac{\pi(y)}{\pi(x)}\sqrt{q(y,x)}\\
=&\frac{1}{(2\pi h)^{d/2}}\exp\left(\log\pi(y)-\log\pi(x)-\frac1{8h}\|y+h\nabla\log  \pi(y)-x\|^2\right)\\
\leq& \frac{1}{(2\pi h)^{d/2}}\exp\left(\langle\nabla\log\pi(w), y-x\rangle-\frac1{8h}\|y-x\|^2-\frac{1}{4}\langle\nabla\log  \pi(y), y-x\rangle\right)
\mbox{ for some $w$}\\
\leq&\frac{1}{(2\pi h)^{d/2}} \exp\left(\frac{5L_\pi}{4}\|y-x\|(\|x\|+\|y-x\|+C)-\frac1{8h}\|y-x\|^2\right)\mbox{ by Lipschitzness of  $\nabla \log \pi$}\\
\leq&\frac{1}{(2\pi h)^{d/2}} \exp\left(\left(\frac{5L_\pi}{16\delta} + \frac{5L_{\pi}}{4} - \frac{1}{8h}\right)\|y-x\|^2 + \delta \|x\|^2+10L_\pi^2 C^2\right)\\
\leq& \frac{1}{(2\pi h)^{d/2}} \exp\left( - \frac{1}{16h}\|y-x\|^2 + \delta \|x\|^2+10L_\pi^2 C^2\right) \mbox{ as $h<(\frac{5L_{\pi}}{\delta} + 20L_{\pi})^{-1}$.}
\end{split}\end{equation*}
For $(2\pi h)^{d/4}\sqrt{q(y,x)}\left(C+h\epsilon+2\|y+h\nabla\log  \pi(y)-x\|\right)$, first note that we can find a larger $C$ so that
\begin{align*}
&(2\pi h)^{d/4}\sqrt{q(y,x)}\|y+h\nabla\log  \pi(y)-x\|\\
=&\exp\left(-\frac1{8h}\|y+h\nabla\log  \pi(y)-x\|^2\right)\|y+h\nabla\log  \pi(y)-x\|\leq  C. 
\end{align*}
Combining these two upper bound, we can find a $C_1$ so that 
\[
\int \frac{\pi(y)}{\pi(x)}q(y,x)\left(C+h\epsilon+2\|y+h\nabla\log  \pi(y)-x\|\right)dx\leq C_1\exp(\delta \|x\|^2). 
\]
Similarly, we can show that
\begin{align*}
&\int \frac{\pihat(y)}{\pihat(x)}\qhat(y,dx)\left(C+(\|y+h\nabla\log  \pi(y)-x\|+\|y+h\nabla\log  \pihat(y)-x\|\right)\\
\leq& \int \frac{\pihat(y)}{\pihat(x)}\qhat(y,dx)\left(C+h\epsilon+2\|y+h\nabla\log  \pihat(y)-x\|\right)
\leq C\exp(\delta \|x\|^2).
\end{align*}
Thus, $\int |a(x,y)-\widehat{a}(x,y)|dy\leq C\epsilon\exp(\delta x^2)$ for some $C$. This concludes the proof of \eqref{eq:bd_mala} and our claim. 
\end{proof}

\section{Verification for the parallel tempering algorithm}
\begin{proof}[Proof of Lemma \ref{lem:l2comp}]
For claim 1), note that for any $\|f\|_\nu\leq 1$,
\begin{equation*}\begin{split}
\|RSf-\Rhat\Shat f\|_\nu&\leq \|R(S-\Shat)f\|_\nu+\|(R-\Rhat)\Shat f\|_\nu\\
&\leq \|(S-\Shat)f\|_\nu+C\epsilon\|\Shat  f\|_\nu \mbox{ by Lemma \ref{lem:l2pi2hat}}\\
&\leq C\epsilon \|f\|_\nu+C\epsilon\| (\Shat - S) f\|_\nu + C\epsilon\|S f\|_{\nu} \\
&\leq  (2C+C^2\epsilon)\epsilon \|f\|_\nu \mbox{ by Lemma \ref{lem:l2pi2hat}.}
\end{split}\end{equation*}
For  claim 2), we first note that 
\[
R_1\otimes R_2=(R_1\otimes I)(I\otimes R_2)
\]
We will show that
\[
\|(R_1\otimes I)-(\Rhat_1\otimes I)\|_{\nu}=\|((R_1-\Rhat_1)\otimes I)\|_{\nu}\leq C\epsilon.
\]
For any $f(x,y)$, define
\[
g(x,y):=((R_1-\Rhat_1)\otimes I)f(x,y)
\]
Then for each fixed $y$, since $\|R_1-\Rhat_1\|_{\nu_1}\leq C \epsilon$, 
\[
\int g(x,y)^2 \nu_1(x)dx\leq  C^2\epsilon^2 \int f(x,y)^2 \nu_1(x)dx 
\]
Thus,
\[
\|g\|^2_{\nu}=\int g(x,y)^2 \nu_1(x)\nu_2(y)dxdy\leq  C^2\epsilon^2 \int \int f(x,y)^2 \nu_1(x)\nu_2(y)dxdy=C^2\epsilon^2 \|f\|^2_\nu. 
\]
Similarly, we can show that
\[
\|(I\otimes R_2)-(I\otimes \Rhat_2)\|_\nu=\|I\otimes (R_2-\Rhat_2)\|_\nu\leq C\epsilon.
\]
From claim 1), we can find a $C'$ so that
\[
\|R_1\otimes R_2\|_{\nu}=\|(R_1\otimes I)(I\otimes R_2)\|_{\nu} \leq C'\epsilon.
\]
For claim 3), by triangular inequality, we have
\[\|U - \Uhat\|_\nu \leq \frac{1}{n}\sum_{i=1}^{n}\|S_i-\Shat_i\|_{\nu} \leq C \epsilon.\]
\end{proof}

\begin{proof}[Proof of Lemma \ref{lem:REl2}]
Denote $x'=S(x)$. For any density of form $\mu(x)=s(x)\nu(x)$, we have
\begin{align*}
|\mu (Q-\Qhat) f|&\leq \int \mu(x) |a(x,x')-\widehat{a}(x,x')| |f(x)|dx+\int \mu(x) |a(x,x')-\widehat{a}(x,x')| |f(x')|dx\\
&\leq C\epsilon \int \mu(x) |f(x)|dx+C\epsilon \int \mu(x) a(x,x') |f(x')|dx\\
&\leq C\epsilon \int s(x)\nu(x)  |f(x)|dx+C\epsilon \int s(x)\nu(x) (Q(x,x)|f(x)|+Q(x,x') |f(x')|)dx\\
&\leq C\epsilon \| s \|_\nu \|f\|_\nu+C\epsilon \| s \|_\nu \|Q |f|\|_\nu\\
&\leq 2C\epsilon \| s \|_\nu \|f\|_\nu. 
\end{align*}
Taking $\mu(x)\propto |(Q-\Qhat)f(x)|\nu(x)$, we have the result.
\end{proof}

\begin{proof}[Proof of Proposition \ref{prop:re}] 
Recall that
\[
P=MQ,\quad M=\left(M_{0}\otimes \cdots  \otimes M_K\right),\quad Q= \left(\frac{1}{K}\sum_{k\in \{0,\ldots,K-1\}} Q_{k,k+1}\right).
\]
and
\[
\Phat=\Mhat\Qhat,\quad \Mhat=\left(\Mhat_{0}\otimes \cdots  \otimes \Mhat_K\right),\quad \Qhat=\left(\frac{1}{K}\sum_{k\in \{0,\ldots,K-1\}} \Qhat_{k,k+1}\right).
\]
Since $M$ is a product of $M_k$,  Lemma \ref{lem:l2comp} claim 2) indicates that $\|M-\Mhat\|_\Pi\leq C_1\epsilon$ for some $C_1$.  
Then note that if $a\leq C A, b\leq CB$ then $\min \{a,b\}\leq C\min\{A,B\}$ so 
the acceptance probability of $Q_{k,k+1}$ and $\Qhat_{k,k+1}$ satisfies 
\[
\frac{\alpha_k(x,x')}{\alphahat_k(x,x')}\leq \sup_{x,x'}\left\{\frac{\pi_k(x')\pi_{k+1}(x)\pihat_{k}(x)\pihat_{k+1}(x')}{\pihat_k(x')\pihat_{k+1}(x)\pi_{k}(x)\pi_{k+1}(x')}\right\}
\leq (1+C_1\epsilon)^4\leq 1+D\epsilon
\]
for some constant $D$. Then Lemma \ref{lem:REl2} indicates that $\|Q_{k,k+1}-\Qhat_{k,k+1}\|_\Pi\leq C_2\epsilon$ for some $C_2$. Then Lemma \ref{lem:l2comp} claim 3) indicates that for some $C_3$
\[
\|Q-\Qhat\|_\Pi\leq C_3\epsilon. 
\] 
Finally, we use claim 1) from Lemma \ref{lem:l2comp} and find that $\|P-\Phat\|_\Pi\leq C'\epsilon$ for some $C'$. 
\end{proof}

\bibliographystyle{plain}
\bibliography{Ref}
\end{document}